\newtheorem{theorem}{Theorem}[section]
\newtheorem{corollary}[theorem]{Corollary}
\newtheorem{proposition}[theorem]{Proposition}
\newtheorem{lemma}[theorem]{Lemma}
\newtheorem{property}[theorem]{Property}
\newcommand{\de}{\dot=}
\newcommand\state[1]{\langle #1 \rangle}
\newcommand\chreq{\equiv}
\newcommand\ent{\rhd}
\newcommand\entv{\blacktriangleright}
\renewcommand\sim{{\blacktriangleleft\blacktriangleright}}
\newcommand\equv{\equiv_\vee}
\newcommand\equivll{\dashv\vdash}
\newcommand\subxy{\left[x/y\right]}
\newcommand\subxt{\left[x/t\right]}
\newcommand\subxa{\left[x/a\right]}
\newcommand\psubxt{\left[x\wr t\right] }
\newcommand\xet{\ensuremath{x\doteq t}}
\newcommand{\oesqv}{\ensuremath{\omega^\vee_e}}
\newcommand{\bbP}{\ensuremath{\mathbb{P}}}
\newcommand{\bbS}{\ensuremath{\mathbb{S}}}
\newcommand{\B}{\ensuremath{\mathbb{B}}}
\newcommand{\U}{\ensuremath{\mathbb{U}}}
\newcommand{\C}{\ensuremath{\mathbb{C}}}
\newcommand{\V}{\ensuremath{\mathbb{V}}}
\newcommand{\G}{\ensuremath{\mathbb{G}}}
\newcommand{\cA}{\ensuremath{\mathcal{A}}}
\newcommand{\cS}{\ensuremath{\mathcal{S}}}
\newcommand{\cC}{\ensuremath{\mathcal{C}}}
\newcommand{\bcA}{\ensuremath{\bar{\mathcal{A}}}}
\newcommand{\bcS}{\ensuremath{\bar{\mathcal{S}}}}
\newcommand{\bcC}{\ensuremath{\bar{\mathcal{C}}}}
\newcommand{\bcL}{\ensuremath{\bar{\mathcal{L}}}}
\newcommand{\cD}{\ensuremath{\mathcal{D}}}
\newcommand{\cL}{\mathcal{L}}
\newcommand{\cb}{\ensuremath{c_b(\bar t)}}
\newcommand{\cu}{\ensuremath{c_u(\bar t)}}
\newcommand{\Sct}{\ensuremath{\Sigma_{CT}}}
\newcommand{\Seq}{\ensuremath{\Sigma_{\doteq}}}
\newcommand{\Sp}{\ensuremath{\Sigma_\mathbb{P}}}
\newcommand{\hT}{\hat{T}}
\newcommand{\bS}{\bar{S}}
\newcommand{\bT}{\bar{T}}
\newcommand{\bU}{\bar{U}}
\newcommand{\bV}{\bar{V}}
\newcommand{\bt}{\bar{t}}
\renewcommand{\bt}{\bar{t}}
\newcommand{\by}{\bar{y}}
\newcommand{\bl}{\bar{l}}
\newcommand{\bs}{\bar{s}}
\newcommand{\bx}{\bar{x}}
\newcommand{\chrv}{CHR$^\vee$}
\newcommand{\oesq}{\ensuremath{\omega_{e}}}
\newcommand{\der}{\ensuremath{\mapsto}}
\newcommand{\nder}{\ensuremath{\not\mapsto}}
\newcommand\inintv[2]{\ensuremath{#1\in\{1,\ldots,#2\}}}
\newcommand{\x}{{\;\otimes\;}}
\newcommand{\lp}{\multimap}
\newcommand{\lpl}{\multimapboth}
\newcommand{\bang}{\; !}
\newcommand{\with}{\,\&\,}
\newcommand{\lone}{\boldsymbol{1}}
\newcommand{\lzero}{\boldsymbol{0}}
\newcommand{\ltop}{\boldsymbol{\top}}
\title{Linear-Logic Based Analysis of Constraint Handling Rules with Disjunction}
\author{HARIOLF BETZ and THOM FR\"UHWIRTH\\
University of Ulm\\
}
\begin{abstract}
Constraint Handling Rules (CHR) is a declarative committed-choice programming
language with a strong relationship to linear logic. Its generalization CHR with
Disjunction (CHR$^\vee$) is a multi-paradigm declarative programming language
that allows the embedding of horn programs.

We analyse the assets and the limitations of the classical declarative semantics
of CHR before we motivate and develop a linear-logic declarative semantics for
CHR and CHR$^\vee$.

We show how to apply the linear-logic semantics to decide program properties and
to prove operational equivalence of CHR$^\vee$ programs across the boundaries of
language paradigms.
\end{abstract}
\keywords{Constraint Handling Rules, Linear Logic, Declarative Semantics}
\begin{document}

\begin{bottomstuff}
\end{bottomstuff}

\maketitle

\section{Introduction}
A declarative semantics is a highly desirable property for a programming
language. It offers a clean theoretical foundation for the language, allows to
prove program properties such as correctness and operational equivalence and
guarantees platform independence. Declarative programs tend to be shorter and
clearer as they contain, ideally, only information about the modeled problem and
not about control.

\emph{Constraint Handling Rules} (CHR)
\cite{DBLP:journals/lncs/Fruhwirth94,DBLP:journals/jlp/Fruhwirth98,fruehwirth09}
is a declarative committed-choice general-purpose programming language developed
in the 1990s as a portable language extension to implement user-defined
constraint solvers. Operationally, it mixes rule-based multiset rewriting over
constraints with calls to a built-in constraint solver with at least rudimentary
capabilities. It is Turing complete and it has been shown that every algorithm
can be implemented in CHR with optimal time complexity
\cite{Sneyers05thecomputational}. Hence, it makes an efficient stand-alone
general-purpose programming language.

\emph{Constraint Handling Rules with Disjunction} (CHR$^\vee$)
\cite{DBLP:conf/fqas/AbdennadherS98} extends the inherently non-deterministic
formalism of CHR with the possibility to include backtracking search and thus to
embed horn programs. It features both don't-care and don't-know
non-determinism. We can justly describe it as a multi-paradigm declarative
programming language.

Owing to its heritage in logic programming and constraint logic programming,
CHR features a declarative semantics in classical logic. We have
shown that for certain classes of programs, the classical declarative semantics
of CHR reflects the functionality of a program but poorly
\cite{DBLP:conf/cp/BetzF05}. Operationally, CHR is a state transition system
whereas the classical declarative semantics considers all states in a derivation
as logically equivalent. Hence, the directionality of the rules, the inherent
non-determinism of their execution and any change of state eludes this
declarative semantics.

\emph{Linear logic} is a sub-structural logical formalism
\cite{DBLP:journals/tcs/Girard87} that has been shown  to bear a close
relationship to concurrent committed-choice systems
\cite{DBLP:conf/elp/Miller92,DBLP:journals/iandc/FagesRS01}. It shows that it is
well-suited to model the committed-choice rules of CHR. It furthermore allows a
faithful embedding of classical logic, so we can straightforwardly embed the
constraint theory underlying the built-in constraint solver into linear logic.
Linear logic thus enables us to model the two reasoning mechanisms of CHR in a
single formalism. Moreover, it shows that we can encode CHR$^\vee$ into linear
logic in a way that preserves its characteristic dichotomy of don't-know and
don't-care non-determinism.

In this article, we propose a linear-logic semantics for CHR and CHR$^vee$ that
incorporates all the features mentioned above. We found the semantics on the
intuitionistic segment of linear logic as it suffices for our purpose while
being easier to handle than the full segment. We propose two variants of the
semantics. The first variant is based on introducing proper axioms in the
sequent calculus of linear logic. The second variant is similar to the
semantics previously published in \citeN{DBLP:conf/cp/BetzF05} and
\citeN{Betz07}. The first formulation allows for considerably
more elegant proofs, in particular of its soundness and completeness. The
second formulation allows to perform a broader range of reasoning tasks. As we
formalize and prove the equivalence of both representations, so we can use
either representation according to the respective application.

This article is structured as follows:
2
In Sect.~\ref{sec:chr}, we recall the syntax and operational semantics of CHR.
3
In Sect.~\ref{sec:ill}, we introduce the intuitionistic segment of linear logic.
4
In Sect.~\ref{sec:ll_semantics}, we develop a linear-logic semantics
for constraint handling rules, and we show its soundness and completeness with
respect to the operational semantics.
5
In Sect. \ref{sec:chrv}, we extend our semantics to CHR$^\vee$ and prove its
soundness and completeness. We show that the linear-logic semantics allows in
general for less precise reasoning over CHR$^\vee$ than over CHR. We
then introduce a well-behavedness property for CHR$^\vee$ programs that
amends this limitation.  In Sect. \ref{sec:application}, we show how our
semantics can be applied to reason about program observables as well as to
compare programs even across the boundaries of programming paradigms. In Sect.
\ref{sec:related}, we discuss related work before we conclude in Sect.
\ref{sec:conclusion}.

\section{Constraint Handling Rules}
\label{sec:chr}

In this section, we recall the syntax and the operational semantics $\oesq$ of
Constraint Handling Rules.

\subsection{The Syntax of CHR}
\label{sec:syn}

We distinguish two disjoint classes of atomic constraints: \emph{atomic built-in
constraints} and \emph{atomic user-defined constraints}. We denote the former as
$\cb$ and the latter as $\cu$, where $c_u,c_b$ are $n$-ary constraint symbols and
$\bt$ is a sequence of $n$ terms. Built-in constraints and user-defined
constraint are  possibly empty  conjunctions of their respective atomic
constraints. A conjunction of atomic constraints in general, irrespective of
their class, is called a \emph{goal}\footnote{Note that the term \emph{goal} is used in
CHR for historical reasons and does not imply that program execution is
understood as proof search.}. Empty goals and empty constraints are denoted as
$\top$.

The syntax of constraints is summarized in
Def.~\ref{def:constraint-syntax}.

\begin{definition}[Constraint Syntax]
\label{def:constraint-syntax}
Let $\cb,\cu$ denote an $n$-ary atomic built-in or user-defined constraint, respectively,
where $\bt$ is an $n$-ary sequence of terms:
\begin{tabular}{l @{\quad} r @{$\,::=\,$} l}
\\
Built-in constraint: & $\B$ & $\top \mid \cb \mid \B \wedge \B'$ \\
User-defined constraint: & $\U$ & $\top \mid \cu \mid \U \wedge \U'$ \\
Goal: & $\G$  & $\top \mid \cu \mid \cb \mid \G \wedge \G'$ \\
\\
\end{tabular}

$\top$ stands for the \emph{empty constraint} or the \emph{empty goal},
respectively. The set of built-in constraints furthermore contains at least
\emph{falsity} $\bot$, and the binary constraint $\doteq$, standing for
\emph{syntactic equality}.
For any two goals $\G,\G'$, the goal equivalence relation $\G\equiv_{G}\G'$
denotes equivalence with respect to the \emph{associativity} and
\emph{commutativity} of $\wedge$ and the neutrality of the \emph{identity
element} $\top$.
\end{definition}

Both built-in and user-defined constraints are special cases of goals. The goal
equivalence relation $\G\equiv_{G}\G'$ does not account for idempotence, thus
implicitly imposing a multiset semantics on goals. For example,
$\cu\wedge\cu\not\equiv_{G}\cu$. We denote the set of variables occurring in a
goal $\G$ as $vars(\G)$.

A CHR program is a set of rules adhering to the following definition:

\begin{definition}[Rule Syntax] \label{def:chr_rule}
\begin{longenum}
\item A CHR rule is of
the form
\[
r\ @\ H_1\setminus H_2 \Leftrightarrow G \mid B_u\wedge B_b
\]
The \emph{rule head} $H_1\setminus H_2$ consists of the \emph{kept head} $H_1$
and the \emph{removed head} $H_2$. Both $H_1,H_2$ are user-defined constraints.
At least one of them must be non-empty. The guard G is a built-in constraint.
The rule body is of the form $B_b\wedge B_u$, where $B_b$ is a built-in
constraint and $B_u$ is a user-defined constraint. $r$ serves as an identifier
for the rule.

\item The identifier $r$ is operationally irrelevant and can be omitted along
with the $@$. An empty guard $G=\top$ can be omitted along with the $\mid$. A
rule with an empty kept head $H_1$ can be written as $r\ @\ H_2 \Leftrightarrow
G \mid B_u\wedge B_b$. Such a rule is called a \emph{simplification rule}. A
rule where the removed head $H_2$ is empty can be written as $r\ @\ H_1
\Rightarrow G \mid B_u\wedge B_b$. Such a rule is called a \emph{propagation
rule}. A rule where neither $H_1$ nor $H_2$ are empty is called a
\emph{simpagation rule}.

\item A \emph{variant} of a rule~$r\ @\ H_1 \setminus H_2 \Leftrightarrow G \mid
B_u\wedge B_b$ with variables~$\bar x$ is of the form $(r\ @\ H_1 \setminus H_2
\Leftrightarrow G \mid B_u\wedge B_b)[\bar x / \bar y]$ where $\by$ is an
arbitrary sequence of pairwise distinct variables.

\item A \emph{CHR program} is a set of CHR rules.
\end{longenum}
\end{definition}

In anticipation of Section~\ref{sec:op-sem}, we point out that propagation rules
may cause trivial non-termination of programs as they do not in general
eliminate the pre-condition of their firing. Hence, precautions have to be
taken. We refer the reader to \citeN{DBLP:conf/cp/Abdennadher97} and
\citeN{DBLP:conf/iclp/DuckSBH04} for the most common approach based on keeping a
history of applied rules and to \citeN{DBLP:journals/tplp/BetzRF10} for a
more recent approach based on finite representations of infinite program states and computations.

\subsection{The Equivalence-Based Semantics $\oesq$}
\label{sec:op-sem}

In this section, we recall the operational semantics of CHR. Several
formalizations of the operational semantics  exist in the literature. We
choose the so-called \emph{equivalence-based semantics} $\oesq$ as it contains
all the elements that we represent in our linear-logic semantics while allowing
for elegant proofs of theoretical properties.

Operationally, built-in and user-defined constraints are handled separately. For
the handling of built-in constraints, CHR requires a so-called
\emph{predefined constraint handler} whereas user-defined constraints are
handled by the actual user program. We assume that the predefined solver
implements a \emph{complete} and \emph{decidable} first-order \emph{constraint theory}
$CT$ over the built-in constraints.

\begin{definition}[Constraint Theory] A constraint theory $CT$ is a decidable
theory of intuitionistic logic over the built-in constraints. We assume that it
is given as a set of formulas  of the form \[
\alpha::=\forall(\exists\bx.\B\rightarrow\exists\bx'.\B') \] called
\emph{$CT$-axioms} where $\B,\B'$ are possibly empty built-in constraints and
$\bx,\bx'$ are possibly empty sets of variables.
\end{definition}

It should be noted that defining constraint theories explicitly over
\emph{intuitionistic} rather than full classical logic is non-standard. It is,
however, an unproblematic decision because in the operational semantics only
judgements over conjunctions of positive literals are considered. Furthermore,
this decision allows us to restrict ourselves to the intuitionistic fragment of
linear logic when translating constraint theories into linear logic.

CHR itself is a transition system over equivalence classes of program states,
which are defined as follows:

\begin{definition}[CHR State]
\label{def:binary-state}
\begin{enumerate}
\item  A CHR state is a tuple of the form $S = \state{\G;\V}$ where $\G$ is a
goal called \emph{constraint store} and $\V$ is a set of variables called
\emph{global variables}.
\item For a CHR state $S=\state{\U\wedge\B;\V}$, where $\U$ is a
user-defined constraint and $\B$ is a built-in constraint, we
call
\begin{enumerate}
\item $\bl_S ::= ( vars(\U)\cup vars(\B) ) \setminus \V$\quad the
\emph{local variables} of S and
\item $\bs_S ::= \bl_S \setminus vars(\U)$\quad the \emph{strictly local
variables} of S.
\end{enumerate}
\item A \emph{variant} of a state $S = \state{\G;\V}$ with local variables
$\bl$ is a state $S'$ of the form $S'=\state{\G[\bl/\bx];\V}$, where $\bx$ is a sequence
of pairwise distinct variables that do not occur in $\V$.
\end{enumerate}
\end{definition}

The state transition system that formalizes the operational semantics
builds on the following definition of equivalence
between CHR states:

\begin{definition}[Equivalence of CHR States]
\label{def:s_equiv}

In the following, let $\U,\U'$ denote arbitrary user-defined
constraints, $\B,\B'$ built-in constraints, $\G,\G'$
goals, $\V,\V'$ sets of variables $v$ a variable and $t$ a term.
State equivalence, written as $\cdot\equiv_e\cdot$, is the smallest
equivalence relation over CHR states that satisfies all of the following
conditions:

\begin{enumerate}
\item \label{cond:se_comm} \emph{(Goal Transformation)}
\[
\G\equiv_G\G'
\quad\Rightarrow\quad
\state{\G;\V} \equiv_e \state{\G';\V}
\]
\item \label{cond:se_subst} \emph{(Equality as Substitution)}
\[
\state{\U\wedge\xet\wedge\B;\V} \equiv_e \state{\U\subxt\wedge\xet\wedge\B;\V}
\]
\item \label{cond:se_appct} \emph{(Application of CT)} Let $\bs,\bs'$ be the strictly local variables of $\state{\U\wedge\B;\V}, \state{\U\wedge\B';\V}$. If
$CT\models\exists {\bar s}.\B \leftrightarrow\exists{\bar s}'.\B'$
then:
\[
\state{\U\wedge\B;\V} \equiv_e \state{\U\wedge\B';\V}
\]
\item \label{cond:se_global} \emph{(Neutrality of Redundant Global Variables)}
\[
x\not\in vars(\G)
\quad\Rightarrow\quad
\state{\G;\{x\}\cup\V} \equiv_e \state{\G;\V}
\]
\item \label{cond:se_fail} \emph{(Equivalence of Failed States)}
For all goals $\G,\G'$ and all sets of variables $\V,\V'$:
\[
\state{\G\wedge\bot;\V} \equiv_e \state{\G'\wedge\bot;\V'}
\]
\end{enumerate}
Where there is no ambiguity, we usually write $\cdot\equiv\cdot$ rather than
$\cdot\equiv_e\cdot$.
\end{definition}

While we generally impose a multiset semantics over goals,
Definition~\ref{def:s_equiv}.\ref{cond:se_appct} implicitly restores the set
semantics for built-in constraints within states. When discussing \emph{pure
CHR} -- as opposed to its generalization CHR$^\vee$ (cf. Sect.\ref{sec:chrv}) --
we will usually consider states in the following \emph{normal form}:

\begin{definition}[Normal Form of CHR States]
\label{def:ternary-state}
A CHR state $S$ is considered in \emph{normal form} if it is of the form
$S=\state{\U\wedge\B;\V}$ where $\U$ is a user-defined
constraint called the \emph{user-defined store} and $\B$ is a
built-in constraint called the \emph{built-in store}. Such a
state is usually written in ternary notation: $\state{\U;\B;\V}$.
\end{definition}

Any state with an inconsistent built-in store is called a \emph{failed state} as
formalized in the following definiton:

\begin{definition}[Failed State]
Any CHR state $S\equiv\state{\U;\bot;\V}$ for some $\G,\V$ is called a
\emph{failed state}. We use $S_\bot=\state{\top;\bot;\emptyset}$ as the default
representative for the set of failed states.
\end{definition}

The following lemma states several properties following from
Def.~\ref{def:s_equiv} that have been presented and proven in
\citeN{Raiser2009a}:

\begin{lemma}[Properties of State Equivalence]
\label{lem:se_derived}
The following properties hold in general:
\begin{longenum}
\item \label{prop:se_rename} \emph{(Renaming of Local Variables)}
\[
\state{\U;\B;\V} \chreq \state{\U\subxy;\B\subxy;\V}
\]
for $x\not\in\V$ and $y\not\in\V$ and $y$ does not occur in $\U$ or $\B$.
\item \label{prop:se_partial} \emph{(Partial Substitution)} Let $\U\psubxt$ be
a
user-defined constraint where \emph{some} occurrences of $x$ are substituted with $t$:
\[
\state{\U;\xet,\B;\V} \equiv \state{\U\psubxt;\xet,\B;\V}
\]
\item \label{prop:se_lequiv} \emph{(Logical Equivalence)} If
\[
\state{\U;\B;\V} \equiv \state{\U';\B';\V'}
\]
then $CT\models (\exists \bl.\U\wedge \B) \leftrightarrow
(\exists\bl'.\U'\wedge \B')$,  where $\bl,\bl'$ are the local variables of
$\state{\U;\B;\V},  \state{\U';\B';\V'}$, respectively.
\end{longenum}
\end{lemma}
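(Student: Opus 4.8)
The plan is to prove each of the three properties in turn, deriving them from the five defining conditions of $\equiv_e$ in Definition~\ref{def:s_equiv}, working throughout in the ternary normal form $\state{\U;\B;\V}$ and keeping careful track of which variables are (strictly) local.

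\medskip
\noindent\emph{Renaming of Local Variables.}
First I would treat the case of a single renamed variable $x \mapsto y$ and then lift to sequences $\bl/\bx$ by composition (taking care to route the renaming through fresh intermediate variables if $\bl$ and $\bx$ overlap, exactly as one does for simultaneous substitution). For the single-variable step, the idea is to introduce the equation $\xet[x/y] \equiv x \doteq y$ explicitly: since $y \not\in \V$ and $y$ does not occur in $\U$ or $\B$, the variable $y$ is strictly local in the state $\state{\U; x \doteq y \wedge \B; \V}$, so by condition~\ref{cond:se_appct} (Application of $CT$, using $CT \models \exists y.(x \doteq y \wedge \B) \leftrightarrow \B$ since $x \not\in \V$ is irrelevant here — what matters is that $y$ is the strictly-local variable being projected away) we get $\state{\U; \B; \V} \equiv \state{\U; x \doteq y \wedge \B; \V}$. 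Then condition~\ref{cond:se_subst} (Equality as Substitution) lets us replace every occurrence of $x$ by $y$ in $\U$ and $\B$, yielding $\state{\U[x/y]; x \doteq y \wedge \B[x/y]; \V}$; and now symmetrically $x$ has become strictly local (it no longer occurs in $\U[x/y]$, $\B[x/y]$, or $\V$), so a second application of condition~\ref{cond:se_appct} removes $x \doteq y$. Chaining these gives the claim.

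\medskip
\noindent\emph{Partial Substitution.}
This follows the same template but is actually easier, since we do not need to remove or introduce the equation. Starting from $\state{\U; \xet \wedge \B; \V}$, apply condition~\ref{cond:se_subst} to get $\state{\U\subxt; \xet \wedge \B; \V}$ — this substitutes \emph{all} occurrences of $x$ in $\U$. The point is that a partial substitution $\U\psubxt$ can be obtained by first fully substituting and then substituting \emph{back} on the chosen occurrences: more precisely, $\U\psubxt$ and $\U$ agree except that some $x$'s in $\U$ are $t$'s in $\U\psubxt$; applying condition~\ref{cond:se_subst} in the forward direction turns both into $\U\subxt$ (since $\state{\U\psubxt; \xet \wedge \B; \V} \equiv \state{(\U\psubxt)\subxt; \xet \wedge \B; \V} = \state{\U\subxt; \xet \wedge \B; \V}$, because the equation $\xet$ is still present so every remaining $x$ can be rewritten to $t$). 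Hence both states are $\equiv$ to the common state $\state{\U\subxt; \xet \wedge \B; \V}$, and we conclude by transitivity and symmetry of $\equiv$.

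\medskip
\noindent\emph{Logical Equivalence.}
For this one I would argue by induction on the derivation of $\state{\U;\B;\V} \equiv \state{\U';\B';\V'}$, i.e. show that the property ``$CT \models (\exists\bl.\U\wedge\B) \leftrightarrow (\exists\bl'.\U'\wedge\B')$ where $\bl,\bl'$ are the respective local variables'' is reflexive, symmetric, transitive, and closed under each of the five generating conditions. Reflexivity is trivial; symmetry and transitivity of the conclusion are immediate (for transitivity one uses that the middle state has a fixed set of local variables). The five conditions are then checked one by one: condition~\ref{cond:se_comm} is trivial since $\G\equiv_G\G'$ implies logical equivalence and preserves variables; condition~\ref{cond:se_subst} uses the standard fact that $CT\models (\exists\bar z.\,\U\wedge\xet\wedge\B) \leftrightarrow (\exists\bar z.\,\U\subxt\wedge\xet\wedge\B)$ and that the local variables are unchanged; condition~\ref{cond:se_appct} is where the strictly-local-variable bookkeeping matters — one has $CT\models \exists\bs.\B\leftrightarrow\exists\bs'.\B'$ with $\bs,\bs'$ the strictly local variables, and since the local variables that occur in $\U$ are shared by both states, existentially quantifying the full local sets $\bl = \bs \cup (\bl\cap vars(\U))$ on each side preserves the equivalence; condition~\ref{cond:se_global} changes $\V$ but not the formula $\U\wedge\B$ nor (since $x\notin vars(\G)$) the set of local variables, so the conclusion is literally unchanged; and condition~\ref{cond:se_fail} is handled by observing that $CT\models \bot\leftrightarrow\bot$, i.e. both sides are $CT$-equivalent to falsity. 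The main obstacle — such as it is — is condition~\ref{cond:se_appct}: one has to be careful that the hypothesis is stated in terms of \emph{strictly} local variables (those not in $vars(\U)$) while the conclusion quantifies \emph{all} local variables, and to check that the extra quantifiers that are common to both sides can be added without breaking the biconditional. Everything else is routine structural induction.
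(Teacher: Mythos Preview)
The paper does not actually prove this lemma in-text; it merely states that the properties ``have been presented and proven in \citeN{Raiser2009a}'' and moves on. So there is no proof in the paper to compare against, and your proposal is being evaluated on its own merits.

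Your arguments for parts~\ref{prop:se_partial} and~\ref{prop:se_lequiv} are correct and standard. The partial-substitution argument via the common target $\state{\U\subxt;\xet\wedge\B;\V}$ is clean, and the structural induction over the five generating clauses for logical equivalence is exactly the right shape; your remark about the bookkeeping in clause~\ref{cond:se_appct} (strictly local versus all local variables) identifies the only place where any care is needed.

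There is one small gap in your argument for part~\ref{prop:se_rename}. You write that condition~\ref{cond:se_subst} ``lets us replace every occurrence of $x$ by $y$ in $\U$ and $\B$'', but Definition~\ref{def:s_equiv}.\ref{cond:se_subst} as stated only licenses the substitution in the user-defined component $\U$: it takes $\state{\U\wedge\xet\wedge\B;\V}$ to $\state{\U\subxt\wedge\xet\wedge\B;\V}$, leaving $\B$ untouched. To push the substitution through $\B$ you need a separate appeal to condition~\ref{cond:se_appct}, using the $CT$-validity $CT\models (x\doteq y\wedge\B)\leftrightarrow(x\doteq y\wedge\B[x/y])$ (with identical strictly-local variable sets on both sides, so no quantifier mismatch arises). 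Once that extra step is inserted, your chain --- introduce $x\doteq y$ via~\ref{cond:se_appct}, substitute in $\U$ via~\ref{cond:se_subst}, substitute in $\B$ via~\ref{cond:se_appct}, then eliminate $x\doteq y$ via~\ref{cond:se_appct} again now that $x$ is strictly local --- goes through.
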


Lemma~\ref{lem:se_derived}.\ref{prop:se_rename} allows us to assume without loss of generality that the local variables of any two specific states are renamed apart. Concerning Lemma~\ref{lem:se_derived}.\ref{prop:se_lequiv}, note that logical
equivalence of $\exists \bl.\U\wedge \B$ and $\exists\bl'.\U'\wedge \B'$ is
a \emph{necessary} but not a \emph{sufficient} condition for state equivalence. The linear logic semantics will enable us to formulate a similar condition that is both necessary \emph{and} sufficient (cf. Sect.~\ref{sec:pt-sem}).

The task of deciding equivalence -- and more so: non-equivalence -- is not
always trivial using the axiomatic definition. We quote
Theorem~\ref{thm:sq_crit} which gives a necessary, sufficient, and decidable
criterion. It uses the following notion of \emph{matching}:

\begin{definition}[Matching of Constraints]
For user-defined constraints $\U=c_1(\bt_1)\wedge \ldots \wedge c_n(\bt_n),
\U'=c'_1(\bt'_1)\wedge \ldots \wedge c'_m(\bt'_m)$, the matching
relation $\U\doteq\U'$ holds \emph{if and only if} $n=m$ \emph{and} there exists a permutation $\sigma$
such that
\[
\bigwedge_{i=1}^n c_i(\bt_i) \doteq c'_{\sigma(i)}(\bt'_{\sigma(i)})
\]
\end{definition}

The following theorem has been published and proven in
\cite{Raiser2009a}.

\begin{theorem}[Criterion for $\equiv_e$]\label{thm:sq_crit}
\label{thm:criterion}
Consider CHR states $S = \state{\U;\B;\V}, S' = \state{\U';\B';\V}$ with local variables~$\bl,\bl'$ that have been renamed apart. Then $S\equiv S'$
\emph{if and only if}:
\[
CT\models \forall (\B \rightarrow \exists \bl'.((\U \de \U') \wedge
\B')) \wedge \forall (\B' \rightarrow \exists \bl.((\U \de \U') \land \B))
\]
\end{theorem}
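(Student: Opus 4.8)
The proof will establish both directions of the biconditional, and the natural strategy is to reduce everything to the defining conditions of $\equiv_e$ from Def.~\ref{def:s_equiv} together with the derived properties in Lemma~\ref{lem:se_derived}. For the ``only if'' direction, I would proceed by induction on the structure of the equivalence proof witnessing $S\equiv S'$: since $\equiv_e$ is the smallest equivalence relation closed under the five conditions, it suffices to check that the stated $CT$-entailment is preserved under reflexivity, symmetry, transitivity, and each of the five generating rules. Symmetry is immediate because the target formula is itself symmetric in $(S,S')$; reflexivity is trivial (take $\sigma = \mathrm{id}$, whence $\U\de\U$ holds and $\B\rightarrow\exists\bl.(\top\wedge\B)$ is valid); transitivity requires composing permutations and chaining the $CT$-implications, using that a matching $\U\de\U'$ supplies the equalities needed to rewrite $\U'$ into $\U''$ under the built-in stores. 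The five generating rules each affect only one component: Goal Transformation is covered since $\equiv_G$ preserves multisets and hence matching up to permutation; Equality as Substitution and Application of $CT$ change the built-in store in a $CT$-equivalent way and are handled by Lemma~\ref{lem:se_derived}.\ref{prop:se_lequiv} plus the observation that matching is stable under $CT$-equivalent substitutions into the arguments; Neutrality of Redundant Global Variables and Equivalence of Failed States are direct (in the failed case both sides entail $\bot$, so every implication holds vacuously).

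For the ``if'' direction — which I expect to be the substantive part — I would assume the $CT$-entailment and construct an explicit chain of $\equiv_e$-steps from $S$ to $S'$. The key idea is that $CT\models \B\rightarrow\exists\bl'.((\U\de\U')\wedge\B')$ lets us, working inside the built-in store $\B$, introduce the matching equalities $\U\de\U'$: formally one uses Def.~\ref{def:s_equiv}.\ref{cond:se_appct} to augment $\B$ with $\exists\bl'.((\U\de\U')\wedge\B')$ (which is $CT$-implied and therefore $CT$-equivalent to $\B$ alone once we existentially close the fresh local variables $\bl'$, renamed apart by Lemma~\ref{lem:se_derived}.\ref{prop:se_rename}), then apply Def.~\ref{def:s_equiv}.\ref{cond:se_subst} (Equality as Substitution) repeatedly along the permutation $\sigma$ to rewrite $\U$ into $\U'$, and finally use the symmetric hypothesis $CT\models\B'\rightarrow\exists\bl.((\U\de\U')\wedge\B)$ together with Def.~\ref{def:s_equiv}.\ref{cond:se_appct} to replace the built-in store by $\B'$ and discard the auxiliary equalities and variables. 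One must be slightly careful about the strictly-local-variable bookkeeping demanded by Def.~\ref{def:s_equiv}.\ref{cond:se_appct}, and about the failed case ($CT\models\neg\exists.\B$), where the hypothesis forces $CT\models\neg\exists.\B'$ as well and both states collapse to $S_\bot$ via condition~\ref{cond:se_fail}.

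The main obstacle, as I see it, is the interaction between the matching equalities and the variable-scoping side conditions: the equalities $\U\de\U'$ may mention both local variables of $S$ and of $S'$, and to apply Equality as Substitution cleanly one needs each equation to be oriented as $x\doteq t$ with $x$ eliminable, which in turn relies on $\bl$ and $\bl'$ being renamed apart and on unification-style reasoning showing that the conjunction $\U\de\U'$ can be presented as a solved form. Making this precise — essentially, that a satisfiable matching constraint between renamed-apart user-defined stores admits a substitution realizing it — is the crux; once that is in hand, the two $CT$-implications feed the two halves of the chain symmetrically and the remaining steps are routine applications of Def.~\ref{def:s_equiv} and Lemma~\ref{lem:se_derived}. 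Since the theorem is quoted from \citeN{Raiser2009a}, I would expect the authors to cite that source for the detailed argument rather than reproduce it in full.
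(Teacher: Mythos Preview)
Your proposal is correct, and you anticipated the situation exactly: the paper does not prove this theorem at all but simply attributes it to \citeN{Raiser2009a}. Your sketch for the ``if'' direction---augment the built-in store via Def.~\ref{def:s_equiv}.\ref{cond:se_appct}, rewrite $\U$ into $\U'$ via Def.~\ref{def:s_equiv}.\ref{cond:se_subst}, then clean up---is in fact the template the paper later uses for the closely related `$\Leftarrow$' direction of Theorem~\ref{thm:se_crit} (the criterion for $\ent$), so your approach aligns with the authors' own style of argument where they do give details.
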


We define the notion of \emph{local variables} of CHR rules, which is necessary
for the definition of the operational semantics:

\begin{definition}[Local Variables in Rules]
For a CHR rule $r\ @\ H_1\setminus H_2\Leftrightarrow G \mid B_u\wedge B_b$, we
call the set
\[
\by_r = vars(B_u, B_b, G)\setminus vars(H_1, H_2)
\]
the \emph{local variables} of $r$.
\end{definition}

The transition system constituting the operational semantics of CHR is specified in the following
definition:

\begin{definition}[Transition System of $\oesq$]
\label{def:chr-op-sem}

CHR is a state transition system over equivalence classes of CHR states defined
by the following transition rule, where $(r\ @\ H_1 \setminus H_2 \Leftrightarrow
G\mid B_u \wedge B_b)$ is a variant of a CHR rule whose local variables $\by_r$
are renamed apart from any variable in $vars(H_1,H_2,\U,\B,\V)$:
\medskip
\[
\frac{
r\ @\ H_1 \setminus H_2 \Leftrightarrow G\mid B_u \wedge B_b
\quad\quad\quad
CT\models \exists(G\wedge\B)
}{
[\state{H_1 \wedge H_2 \wedge \U;G\wedge\B;\V}]
\mapsto^r
[\state{H_1 \wedge B_c \wedge \U;G\wedge B_b\wedge\B;\V}]
}
\]

\medskip
If the applied rule is obvious from the context or irrelevant, we write
transition simply as $\mapsto$. We denote its reflexive-transitive closure as
$\mapsto^{*}$. In the following, we sometimes write $S \mapsto T$ instead of
$[S] \mapsto [T]$ to preserve clarity.
\end{definition}

The required disjointness of the local variables $\by_r$ from all variables
occurring in the pre-transition state outside $G$ enforces that fresh variables
are introduced for the local variables of the rule. When reasoning about
programs, we usually refer to the following observables:

\begin{definition}[Computables States and Constraints]
\label{def:observables}
Let $S$ be a CHR state, $\bbP$ be a program, and $CT$ be a constraint theory. We
distinguish three sets of observables:

\begin{tabular}{l @{\quad} r @{$\,::=\,$} l}
\\
Computable states:  & $\cC_{\bbP,CT}(S)$ &
$\{[T]\mid [S]\mapsto^*[T]\}$ \\
Answers: & $\cA_{\bbP,CT}(S)$ &
$\{[T]\mid [S]\mapsto^*[T]\not\mapsto\}$ \\
Data-sufficient answers: & $\cS_{\bbP,CT}(S)$ &
$\{[\state{\top;\B;\V}]\mid [S]\mapsto^*[\state{\top;\B;\V}]\}$ \\
\\
\end{tabular}

For all three sets, if the respective constraint theory $CT$ is clear from the
context or not important, it may be omitted from the  identifier of the
respective set.
\end{definition}

As the transition system does not allow transitions from an empty user-defined
store (nor from failed states), the data-sufficient answers
$\cS_{\bbP,CT}(S)$ are a subset of the answers $\cA_{\bbP,CT}(S)$ of any state
$S$. The following property follows directly:

\begin{property}[Hierarchy of Observables]
For any state $S$, program $\bbP$ and constraint theory $CT$, we have:
\[
\cS_{\bbP,CT}(S) \subseteq \cA_{\bbP,CT}(S) \subseteq \cC_{\bbP,CT}(S)
\]
\end{property}

Confluence is an important property in transition systems. We define it in the
usual manner:

\begin{definition}[Confluence]
\label{def:confluence}
A CHR program $\bbP$ is \emph{confluent} if for all states $S,T,T'$ such that
$[S]\mapsto^* [T]$ and $[S]\mapsto^* [T']$, there exists a state $T''$ such that
$[T]\mapsto^* [T'']$ and $[T']\mapsto^* [T'']$.
\end{definition}

Confluence restricts the number of possible answers to a query:

\begin{property}\label{prop:confluence-answers}
Let $\bbP$ be a confluent CHR program. Then for every CHR state $S$, we have
$|\cS_{\bbP}(S)|\in\{0,1\}$ and $|\cA_{\bbP}(S)|\in\{0,1\}$, where
$\mid\cdot\mid$ denotes cardinality.
\begin{proof}[sketch]
We assume that for some states $S,T,T'$ and some confluent program $\bbP$, we
have $S\der^* T\nder$ and $S\der^* T'\nder$ and $[T]\neq[T']$. Applying
Def.~\ref{def:confluence} leads to a contradiction.
\end{proof}
\end{property}

A necessary, sufficent and decidable criterion for confluence has been given in
\citeN{Abdennadher96onconfluence}. Example~\ref{example:leq} presents a
standard CHR example program to illustrate our definitions.

\begin{example}
\label{example:leq}

The following program implements a solver for the (user-defined) partial-order
constraint $\leq$. Rule $r_I$ implements idempotence of identical constraints,
$r_R$ implements reflexivity, $r_S$ symmmetry and $r_T$ transitivity of the
partial-order relation:

\medskip
\[
\begin{array}{lclcl}
r_I & @ & x\leq y \wedge x\leq y & \Leftrightarrow & x\leq y \\
r_R & @ & x\leq x & \Leftrightarrow & \top \\
r_S & @ & x\leq y \wedge y\leq x & \Leftrightarrow & x=y \\
r_T & @ & x\leq y \wedge y\leq z & \Rightarrow & x\leq z \\
\end{array}
\]

\medskip
The following is a sample derivation, starting from an initial state
$S_0=\state{a\leq b \wedge b\leq c \wedge c\leq a;\top;\{a,b,c\} }$. According
to the usual practice, all variables occurring in the initial state are global.
Equivalence transformations are stated explicitly:

\setcounter{equation}{0}
\begin{align}
& \state{a\leq b \wedge  b\leq c \wedge c\leq a; \top;\{a,b,c\}}
\label{eq:leq_1}\\
\equiv\: & \state{ \underline{x\leq y} \wedge \underline{y\leq z} \wedge c\leq
a; x\doteq a \wedge y\doteq b \wedge z\doteq c;\{a,b,c\}} \nonumber \\
\mapsto^{r_T}\: & \state{\underline{x\leq z} \wedge x\leq y \wedge y\leq z \wedge c\leq a;
x\doteq a \wedge y\doteq b \wedge z\doteq c;\{a,b,c\}}\nonumber \\
\equiv\: & \state{a\leq c \wedge a\leq b \wedge b\leq c \wedge c\leq a;\top;\{a,b,c\}}
\label{eq:leq_2} \\
\equiv\: & \state{\underline{x\leq y} \wedge \underline{y\leq x} \wedge a\leq b \wedge b\leq
c;x\doteq a \wedge y\doteq c;\{a,b,c\}}\nonumber \\
\mapsto^{r_S}\: & \state{a\leq b \wedge b\leq
c;\underline{x\doteq y} \wedge x\doteq a \wedge y\doteq c;\{a,b,c\}}\nonumber \\
\equiv\: & \state{a\leq b \wedge b\leq c;a\doteq c;\{a,b,c\}} \label{eq:leq_3} \\
\equiv\: & \state{\underline{x\leq y} \wedge \underline{y\leq
x};x\doteq a \wedge y\doteq b \wedge a=c;\{a,b,c\}}\nonumber \\
\mapsto^{r_S}\: & \state{\top;\underline{x\doteq y} \wedge x\doteq a \wedge y\doteq b \wedge a\doteq c;\{a,b,c\}}\nonumber \\
\equiv\: & \state{\top;a\doteq b \wedge  a\doteq c;\{a,b,c\}} \label{eq:leq_4}
\end{align}

Usually, we do not make equivalence transformations explicit and list only
states where local variables are eliminated as far as possible such as the
labeled states~(\ref{eq:leq_1})-(\ref{eq:leq_4}). The derivation is then reduced
to:
\setcounter{equation}{0}
\begin{align}
& \state{\underline{a\leq b} \wedge \underline{b\leq c} \wedge c\leq a;\top;\{a,b,c\}} \\
\mapsto^{r_T}\: & \state{\underline{a\leq c} \wedge a\leq b \wedge b\leq c \wedge \underline{c\leq
a};\top;\{a,b,c\}} \\
\mapsto^{r_S}\: & \state{\underline{a\leq b} \wedge \underline{b\leq c};a\doteq c;\{a,b,c\}}
\\
\mapsto^{r_S}\: & \state{\top;a\doteq b \wedge a\doteq c;\{a,b,c\}}
\end{align}
\end{example}

With respect to our observables, we have:
\[
\cS_{\bbP,CT}(S_0) = \cA_{\bbP,CT}(S_0) =
\{[\state{\top;a\doteq b\wedge a\doteq c;\{a,b,c\}}]\}
\]
The set $\cC_{\bbP,CT}(S_0)$ is infinite
as the operational semantics $\oesq$ allows potentially unlimited applications of $r_T$.

\section{Intuitionistic Linear Logic}
  \label{sec:ill}

Linear logic was introduced by
\citeN{DBLP:journals/tcs/Girard87}. Unlike classical logic, linear logic does
not allow free copying or discarding of assumptions. It furthermore features a fine
distinction between internal and external choice and a faithful embedding of
classical logic. In this section, we recall the intuitionistic fragment of linear
logic, which is easier to handle than the full fragment but sufficient for
our declarative semantics. It allows for a straightforward, faithful embedding of
intuitionistic logic.

\subsection{Definition}

We will give the formal definition in terms of a \emph{sequent calculus}. The
calculus is based on binary sequents of the form \[
  \Gamma \vdash \alpha
\] where $\Gamma$ is a multiset of formulas (written without braces) called
\emph{antecedent} and $\alpha$ is a formula called \emph{consequent}. A sequent
$\Gamma \vdash \alpha$ represents the fact that assuming the formulas in
$\Gamma$, we can conclude $\alpha$. A \emph{proof tree} -- or simply:
\emph{proof} -- is a finite labeled tree whose nodes are labeled with
sequents such that the relationship between every sequent node and its direct
children corresponds to one of the inference rules of the calculus. We
distinguish a special set of sequents called \emph{axioms}. A proof tree is
called \emph{complete} if all its leaves are axioms. We call a sequent
$\Gamma\vdash\alpha$ \emph{valid} if there exists a complete proof tree $\pi$
with $\Gamma\vdash\alpha$ at the root.

The following two structural rules are common to many logical systems. They establish
reflexivity and a form of transitivity of the judgement relation.

\[
\infer[(Identity)]{\alpha\vdash\alpha}{}
\quad\quad
\infer[(Cut)]{\Gamma,\Delta\vdash\beta}
    {\Gamma\vdash\alpha&\alpha,\Delta\vdash\beta}
\]

The tokens of (intuitionistic) linear logic are commonly considered as
representing \emph{resources} rather than \emph{truths}. This terminology
reflects the fact that assumptions may not be copied nor discarded freely in
linear logic, but must be used exactly once. From a different point of view, we
might say that linear logic \emph{consumes} assumptions in judgements and is
aware of their \emph{multiplicities}.

\emph{Multiplicative conjunction} is distinguished from classical or
intuitionistic conjunction as it lacks idempotence. Hence, $\alpha\x\beta$
represents \emph{exactly} one instance of $\alpha$ and one instance of $\beta$.
The formula $\alpha$ is not equivalent to $\alpha\x\alpha$. Multiplicative
conjunction is introduced by the following inference rules:

\[
\infer[(L\x)]{\Gamma,\alpha\x\beta\vdash\gamma}
    {\Gamma,\alpha,\beta\vdash\gamma}
\quad\quad
\infer[(R\x)]{\Gamma,\Delta\vdash\alpha\x\beta}
    {\Gamma\vdash\alpha&\Delta\vdash\beta}
\]

The constant $\lone$ represents the empty resource and is consequently the
neutral element with respect to multiplicative conjunction.

\[
\infer[(L\lone)]{\Gamma,\lone\vdash\alpha}{\Gamma\vdash\alpha}
\quad\quad
\infer[(R\lone)]{\vdash \lone}{}
\quad\quad
\]

\emph{Linear implication $\lp$} allows the application of \emph{modus ponens}
where the preconditions of a linear implication are consumed on application. For
example, the sequent $\alpha\x(\alpha\lp \beta)\vdash \beta$ is valid whereas
$\alpha\x(\alpha\lp \beta)\vdash \alpha\x \beta$ is not. The following inference
rules introduce $\lp$:

\[
\infer[(L\lp)]{\Gamma,\alpha\lp\beta,\Delta\vdash\gamma}
    {\Gamma\vdash\alpha&\beta,\Delta\vdash\gamma}
\quad\quad
\infer[(R\lp)]{\Gamma\vdash\alpha\lp\beta}
    {\Gamma,\alpha\vdash\beta}
\]

The \textit{! (``bang'') modality} marks stable facts or unlimited resources,
thus recovering propositions in the classical (or intuitionistic) sense. Like an
classical proposition, a \emph{banged} resource may be freely copied or
discarded. Hence, $!\alpha\x !(\alpha\lp \beta)\vdash \, !\alpha\x !\beta$ is a
valid sequent. Four inference rules introduce the bang:

\[
\infer[(R!)]{!\Gamma\vdash!\alpha}
    {!\Gamma\vdash\alpha}
\quad\quad
\infer[(Dereliction)]{\Gamma,!\alpha\vdash\beta}
    {\Gamma,\alpha\vdash\beta}
\]
\[
\infer[(Contraction)]{\Gamma,!\alpha\vdash\beta}
    {\Gamma,!\alpha,!\alpha\vdash\beta}
\quad\quad
\infer[(Weakening)]{\Gamma,!\alpha\vdash\beta}
    {\Gamma\vdash\beta}
\]

\begin{example}
We can model the fact that one cup of coffee ($c$) is one euro ($e$) as $!(e\lp
c)$. A ``bottomless cup'' is an offer including an unlimited number of refills.
We assume that any natural number of refills is possible. We model this as
$!(e\lp\bang c)$. From this, we may judge that it is possible to get two cups of
coffee for one euro: $!(e\lp\bang c)\vdash e\lp c\x c$.
Fig.~\ref{fig:coffee-example} gives an examplary proof tree, proving this
judgement.

\begin{figure}
\label{fig:coffee-example}
\begin{center}
\[
\infer[(Dereliction)]
{
  !(e\lp\bang c)\vdash e\lp c\x c
}
{
  \infer[(R\lp)]
  {
    e\lp\bang c\vdash e\lp c\x c
  }
  {
    \infer[(L\lp)]
    {
      e\lp\bang c,e\vdash c\x c
    }
    {
      \infer[(Identity)]
      {
        e\vdash e
      }
      {
      }
    &
      \infer[(Contraction)]
      {
        \bang c\vdash c\x c
      }
      {
        \infer[(R\x)]
        {
          \bang c,\bang c\vdash c\x c
        }
        {
          \infer[(Dereliction)]
          {
            \bang c\vdash c
          }
          {
            \infer[(Identity)]
            {
              c\vdash c
            }
            {
            }
          }
        &
          \infer[(Dereliction)]
          {
            \bang c\vdash c
          }
          {
            \infer[(Identity)]
            {
              c\vdash c
            }
            {
            }
          }
        }
      }
    }
  }
}
\]
\end{center}
\caption{A sample proof tree}
\end{figure}
\end{example}

In classical (and intuitionistic) logic, internal choice is an aspect of
conjunction, as exemplified by the judgement $\alpha\wedge\beta\vdash\alpha$.
This is inherited by the \textit{additive conjunction $\&$} of linear logic. The
formula $\alpha\&\beta$ expresses a choice between $\alpha$ and $\beta$, i.e. the
sequents $\alpha\&\beta\vdash\alpha$ and $\alpha\&\beta\vdash\alpha$ are valid,
but $\alpha\&\beta\vdash A\x B$ is not.

\[
\infer[(L\&_1)]{\Gamma,\alpha\&\beta\vdash\gamma}
    {\Gamma,\alpha\vdash\gamma}
\quad\quad
\infer[(L\&_2)]{\Gamma,\alpha\&\beta\vdash\gamma}
    {\Gamma,\beta\vdash\gamma}
\quad\quad
\infer[(R\&)]{\Gamma\vdash\alpha\&\beta}
    {\Gamma\vdash\alpha&\Gamma\vdash\beta}
\]

The $\ltop$ (``top'') is the resource that all other resources can be mapped to,
i.e. for every $\alpha$, the implication $\alpha\lp\ltop$ is a tautology. It is
hence the neutral element with respect to additive conjunction.

\[
\infer[(R\ltop)]{\Gamma \vdash \ltop}{}
\quad\quad
\]

External choice is an aspect of classical (and intuitionistic) disjunction. In
linear logic, it is represented by the \textit{additive disjunction $\oplus$}.
Analogous to classical logic, $\alpha\oplus\beta\vdash\alpha$ is not valid.
However, $!(\alpha\lp\gamma),!(\beta\lp\gamma), \alpha\oplus\beta\vdash\gamma$ is
valid.

\[
\infer[(L\oplus)]{\Gamma,\alpha\oplus\beta\vdash\gamma}
    {\Gamma,\alpha\vdash\gamma&\Gamma,\beta\vdash\gamma}
\quad\quad
\infer[(R\oplus_1)]{\Gamma\vdash\alpha\oplus\beta}
    {\Gamma\vdash\alpha}
\quad\quad
\infer[(R\oplus_2)]{\Gamma\vdash\alpha\oplus\beta}
    {\Gamma\vdash\beta}
\]

Analogous to falsity in the classical sense, \emph{absurdity} $\lzero$ is
a constant that yields every other resource. It is the neutral element with
respect to $\oplus$.

\[
\infer[(L\lzero)]{\lzero\vdash\alpha}{}
\]
\begin{example} We assume that, besides coffee, the cafeteria offers also pie
($p$) at the price of one euro per piece: $!(e\lp p)$. We infer that for one
euro, we have the choice between an arbitrary amount of coffee and a piece of
pie: $!(e\lp\bang c),!(e\lp p)\vdash e\lp( !c\& p)$. Let us furthermore assume that rather than
with euros, we can also pay with dollars ($d$) at a $1:1$ ratio: $!(d\lp\bang
c),!(d\lp p)$. We may infer either one of \emph{one dollar} or \emph{one euro}
buys us a choice between an arbitrary amount of coffee and one pie.:
\[!(e\lp\bang c),!(e\lp p),!(d\lp\bang c),!(d\lp p)\vdash (e\oplus d)\lp( !c\& p).\]
\end{example}

We can extend intuitionistic linear logic into a first-order system with the
quantifiers $\exists$ and $\forall$. Their introduction rules are the same as in
classical logic. In the following rules, $t$ stands for an arbitrary term whereas $a$ stands for a variable
that is not free in $\Gamma$, $\alpha$ or $\beta$:

\vspace{1mm}
\[ \infer[(L\forall)]{\Gamma,\forall x.\alpha\vdash\beta}
            {\Gamma,\alpha[x/t]\vdash\beta}
\quad\quad \infer[(R\forall)]{\Gamma\vdash\forall x.\beta}
            {\Gamma\vdash\beta[x/a]}
\]
\vspace{1mm}
\[ \infer[(L\exists)]{\Gamma,\exists x.\alpha\vdash\beta}
            {\Gamma,\alpha[x/a]\vdash\beta}
\quad\quad \infer[(R\exists)]{\Gamma\vdash\exists x.\beta}
            {\Gamma\vdash\beta[x/t]}
\]

\subsection{Properties of Intuitionistic Linear Logic}

The resulting first-order system allows for a faithful embedding of
intuitionistic first order logic. This is widely considered one of the most important
features of linear logic. The following translation from intuitionistic logic
into intuitionistic linear logic is a variant of a translation proposed by
\citeN{DBLP:journals/mscs/Negri95}:

\begin{definition}
\label{def:negri}
$(\cdot)^*$ is a translation from formulas of intuitionistic logic to
formulas of intuitionistic linear logic, recursively defined by the following
rules:
\begin{center}
\begin{tabular}{r  @{\hspace{1mm}::=\hspace{1mm}} l}
$p(\bar t)^*$             & $!p(\bar t)$ \\
$(\bot)^*$                & $\lzero$ \\
$(\top)^*$                & $\lone$ \\
$(A \wedge B)^*$         & $A^* \x B^*$ \\
$(A \vee B)^*$             & $A^* \,\oplus\, B^*$ \\
$(A \rightarrow B)^*$     & $!(A^*\multimap B^*)$ \\
$(\forall x. A)^*$        & $!\forall x.(A^*)$ \\
$(\exists x. A)^*$        & $\exists x.(A^*)$ \\
\end{tabular}
\end{center}
\end{definition}

$p(\bar t)$ stands for an atomic proposition. The definition is extended to
sets and multisets of formulas in the obvious manner. It has been proven in
\citeN{DBLP:journals/mscs/Negri95} that an intuitionistic sequent
$\left(\Gamma\vdash_{IL}\alpha\right)$ is valid \emph{if and only if}
$\left(\Gamma^*\vdash_{ILL}\alpha^*\right)$ is valid in linear logic.

We distinguish two sorts of axioms in the sequent calculus. The $(Identity)$
axiom and the constant axioms $(L\lone)$, $(R\lone)$, $(L\lzero)$ and $(R\ltop)$
constitute the \emph{logical axioms} of intuitionistic linear logic. All axioms
we add to the system on top of these are called \emph{non-logical axioms} or
\emph{proper axioms}. We usually use the letter $\Sigma$ to denote the set of
proper axioms.

We express the fact that a judgement $\Gamma\vdash\alpha$ is provable using a
non-empty set $\Sigma$ of proper axioms by indexing the judgement relation with
the set of proper axioms: $\vdash_\Sigma$.

\begin{definition}[Linear-Logic Equivalence]
	\label{def:ll-equiv}
\begin{longenum}
 \item We call two linear-logic formulas $\alpha,\beta$ \emph{logically
 equivalent} if both $\alpha\vdash\beta$ and $\beta\vdash\alpha$ are provable. We write
this as $\alpha\equivll\beta$.
 \item For any set of proper axioms $\Sigma$, we call two linear-logic formulas
 $\alpha,\beta$ \emph{logically equivalent modulo $\Sigma$} if both
 $\alpha\vdash_\Sigma\beta$ and $\beta\vdash_\Sigma\alpha$ are provable. We
 write this as $\alpha\dashv\vdash_\Sigma\beta$.
\end{longenum}
\end{definition}

As a well-behaved logical system, linear logic features a cut-elimination
theorem \cite{DBLP:journals/tcs/Girard87}:

\begin{theorem}[Cut Elimination Theorem]
\label{thm:cut-elim}
\begin{longenum}
  \item \label{prop:cut-free} If a sequent $\Gamma\vdash\alpha$ has a proof $\pi$
  that does not contain any proper axioms, then it has a proof $\pi'$ that
  contains neither proper axioms nor the $(Cut)$ rule.
  \item \label{prop:cut-red} If a sequent $\Gamma\vdash_\Sigma\alpha$ has a proof
  $\pi$ containing proper axioms, then it has a proof $\pi'$ where the $(Cut)$
  rule is only used at the leaves such that one of its premises is an axiom.
\end{longenum}
\end{theorem}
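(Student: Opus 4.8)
The plan is to reduce the statement to the standard cut-elimination result for first-order intuitionistic linear logic, which is due to Girard, and then handle the role of proper axioms separately. For part~\ref{prop:cut-free}, the hypothesis is that $\pi$ uses no proper axioms, so $\pi$ is a proof in the pure calculus presented in Section~\ref{sec:ill} (logical rules plus logical axioms only). The standard cut-elimination procedure applies verbatim: one takes an uppermost instance of $(Cut)$, looks at the last rules applied to its two premises, and pushes the cut upward by the usual case analysis on the principal/auxiliary status of the cut formula in each premise. The commutative cases permute the cut above a rule that does not act on the cut formula; the key (logical) cases match an $(R\star)$ against the corresponding $(L\star)$ and replace the cut on $\alpha\star\beta$ by cuts on the smaller subformulas; the cases involving $(Weakening)$ and $(Contraction)$ on a $!$-formula are handled with the usual mix/multicut strengthening of the induction, or equivalently by first proving elimination of a multicut rule and then specialising. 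The first-order rules $(L\forall),(R\forall),(L\exists),(R\exists)$ are treated exactly as in classical sequent calculus, using the eigenvariable condition to substitute the witness term into the eigenvariable side. Termination follows from the usual lexicographic measure (cut-rank, then cut-height). I would state explicitly that this argument is entirely standard and cite \cite{DBLP:journals/tcs/Girard87}, giving at most the two or three representative reduction steps rather than the full enumeration.

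For part~\ref{prop:cut-red}, the idea is that a proper axiom behaves, for the purposes of cut reduction, like an initial sequent that may appear at a leaf, and the only obstruction to eliminating a cut is a cut whose reduction would require ``going above'' such a leaf. So I would run the same reduction procedure as in part~\ref{prop:cut-free}, with the following modification: a $(Cut)$ instance is left untouched precisely when one of its two premises is (the conclusion of) a proper axiom; every other $(Cut)$ instance is reducible by the same cases as before, since proper axioms never contain logical connectives as principal formulas that were introduced by a rule, and so they can only ever occur as the side whose subproof is a single leaf. Concretely: pick an uppermost $(Cut)$ none of whose premises is a proper axiom and which is not already in the allowed leaf form; apply the corresponding reduction step; this strictly decreases the same lexicographic measure as in part~(1), where the measure now counts only those cuts that are not yet in the admissible form. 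Iterating yields a proof $\pi'$ in which every remaining $(Cut)$ has a proper axiom as one premise, which is exactly the claimed normal form.

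The main obstacle — such as it is — is purely bookkeeping: making precise in part~\ref{prop:cut-red} that no ``bad'' configuration can block the reduction, i.e. that whenever a cut has no proper-axiom premise, at least one of the standard reduction cases genuinely applies. This is where one must be slightly careful about the $!$-rules: a cut on a $!$-formula against a $(Weakening)$ or $(Contraction)$ premise must be permuted, and one must check that doing so never creates a new cut against a proper axiom that cannot in turn be dealt with. The clean way to discharge this is to phrase the whole argument in terms of a multicut (mix) rule from the outset, prove that multicut is admissible in the pure system (part~1), and then observe that in the presence of proper axioms the same admissibility argument terminates with all residual multicuts abutting a proper-axiom leaf (part~2); the termination measure is unchanged. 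I would therefore present part~\ref{prop:cut-free} with the multicut formulation, derive the stated corollary, and then obtain part~\ref{prop:cut-red} as a direct adaptation, again deferring the exhaustive case analysis to \cite{DBLP:journals/tcs/Girard87} and remarking only on the proper-axiom cases that are genuinely new.
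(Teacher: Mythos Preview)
The paper does not prove this theorem at all: it is stated as a known result and attributed to \cite{DBLP:journals/tcs/Girard87} with no argument given. Your proposal is therefore strictly more than what the paper offers, and your plan to defer the full case analysis to Girard is exactly what the paper does in its entirety.

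That said, your sketch is the standard one and is correct in outline. One small point worth tightening in part~\ref{prop:cut-red}: the phrase ``at the leaves'' in the statement means that each surviving $(Cut)$ has a proper axiom as an \emph{immediate} premise (so that branch is a leaf), not that the $(Cut)$ itself is a leaf of $\pi'$. Your reduction strategy---treat cuts with a proper-axiom premise as irreducible and reduce all others by the usual procedure---delivers exactly this form. The worry you raise about $!$-permutations creating ``bad'' new cuts is not actually an obstacle: any cut newly created against a proper-axiom leaf is already in the admissible form and simply drops out of the measure, while any cut created against a non-axiom premise is of strictly smaller rank and is handled by the induction. So the multicut reformulation is convenient but not strictly necessary for part~\ref{prop:cut-red}; the ordinary rank/height induction with the modified termination measure (counting only non-axiom cuts) already suffices.
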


A proof without any applications of $(Cut)$ is called \emph{cut-free}. A proof where
$(Cut)$ is only applied at the leaves is called \emph{cut-reduced}.

A important consequence of cut elminiation is the \emph{subformula property}. We
quote a weak formulation of the property, which will suffice for our purpose:
Every formula $\alpha$ in a cut-free proof of a sequent $\Gamma\vdash\beta$ is a
subformula of either $\Gamma$ or $\beta$, modulo variable renaming. In a
cut-reduced proof of a sequent $\Gamma\vdash_\Sigma\beta$, every formula $\alpha$
 is a subformula of $\Gamma$ or $\beta$, modulo variable renaming, or there
exists a proper axiom $(\Delta\vdash\gamma)\in\Sigma$ such that $\alpha$ is a
subformula of $\Delta$ or $\gamma$, modulo variable renaming.

\section{A Linear-Logic Semantics for CHR}
\label{sec:ll_semantics}

In this section, we motivate and develop the linear-logic semantics for
Constraint Handling Rules. We firstly recall the classical declarative semantics
in Sect.~\ref{sec:classical-analysis}. Then we motivate and present a
linear-logic semantics based on proper axioms in Sect.~\ref{sec:pt-sem}. We will
henceforth call this the \emph{axiomatic} linear-logic semantics for CHR. Its
soundness with respect to the operational semantics is shown in
Sect.~\ref{sec:pt-soundness}. We continue in Sect.~\ref{sec:entailment} by
introducing the notion of \emph{state entailment}, which we use to formulate and
prove the completeness of our semantics in Sect.~\ref{sec:pt-completeness}.
Finally, in Sect.~\ref{sec:enc-sem}, we show an alternative linear-logic
semantics that encodes programs and contraints theories into linear logic.

\subsection{Analysis of the Classical Declarative Semantics}
  \label{sec:classical-analysis}

CHR is founded on a classical declarative semantics, which is reflected in its
very syntax. In this section, we recall the classical declarative semantics and
discuss its assets and limitations.

In the following, $\exists_{-\bar x}$ stands for existential quantification of
all variables except those in $\bar x$, where $\bar x$ is a set of variables.
The classical declarative semantics is given in the following table, where
$(\cdot)^\dagger$ stand for translation to classical logic:

\begin{center}
\begin{tabular}{l @{\hspace{1em}} l @{\hspace{1mm} ::= \hspace{1mm}} l}
States:&
$\state{\U;\B;\V}^\dagger$ &
$\exists_{-\V}.(\U\wedge \B)$ \\
Rules: &
$(r\ @\ H_1\setminus H_2 \Leftrightarrow G \mid B)^\dagger$ &
$\forall \left(G\rightarrow (H_1 \rightarrow (H_2 \leftrightarrow \exists\bar
y_r.B)) \right)$
\\
Programs: &
$\{R_1, ... ,R_m\}^\dagger$ &
$R_1^\dagger \wedge \ldots \wedge R_m^\dagger$\\
\end{tabular}
\end{center}

$\by_r$ denotes the local variables of the respective rule. The following lemma
-- cited from \citeN{Fruhwirth03} -- establishes the relationship between the
logical readings of programs, constraint theories and states:

\begin{lemma}[(Logical Equivalence of States)]
Let $\bbP$ be a CHR program and $S$ be a state. Then for all computable states
$T_1$ and $T_2$ of $S$, the following holds: $\bbP^\dagger, CT\models
\forall(T_1^\dagger \leftrightarrow T_2^\dagger)$.
\end{lemma}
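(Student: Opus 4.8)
The claim is that any two computable states $T_1, T_2$ of $S$ have classically equivalent logical readings, entailed by $\bbP^\dagger$ and $CT$. The natural approach is induction on the length of the derivation chain connecting the states, reducing everything to a single statement about one transition step: if $S' \mapsto S''$ then $\bbP^\dagger, CT \models \forall(S'^\dagger \leftrightarrow S''^\dagger)$. Once that single-step lemma is in hand, a routine induction — using transitivity of $\leftrightarrow$ under $\bbP^\dagger, CT$, and the fact that $T_1$ and $T_2$ are both reachable from $S$, so one can route through $S$ — gives the full statement. I would also first observe that state equivalence $\equiv_e$ preserves the logical reading: this is essentially Lemma~\ref{lem:se_derived}.\ref{prop:se_lequiv}, which says $S \equiv_e S'$ implies $CT \models (\exists\bl.\U\wedge\B)\leftrightarrow(\exists\bl'.\U'\wedge\B')$, i.e.\ exactly $CT \models \forall(S^\dagger \leftrightarrow S'^\dagger)$ once one checks the quantifier bookkeeping (local variables versus the $\exists_{-\V}$ notation match up). So transitions between equivalence classes are well-defined at the level of logical readings, and it suffices to treat a single rule application on representatives.

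**The single-step case.** Take a transition via a variant $r\ @\ H_1 \setminus H_2 \Leftrightarrow G \mid B_u \wedge B_b$ with fresh local variables $\by_r$, firing on $\state{H_1 \wedge H_2 \wedge \U;\, G\wedge\B;\, \V}$ to yield $\state{H_1 \wedge B_u \wedge \U;\, G\wedge B_b\wedge\B;\, \V}$, where $CT \models \exists(G\wedge\B)$. I would unfold the two logical readings as $\exists_{-\V}.(H_1 \wedge H_2 \wedge \U \wedge G \wedge \B)$ and $\exists_{-\V}.(H_1 \wedge B_u \wedge \U \wedge G \wedge B_b \wedge \B)$. The rule's logical reading is $\forall(G \rightarrow (H_1 \rightarrow (H_2 \leftrightarrow \exists\by_r.(B_u \wedge B_b))))$. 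The plan is: under the hypotheses $G$ and $\B$ (which hold "inside" the existential on the left), the rule instance gives $H_1 \rightarrow (H_2 \leftrightarrow \exists\by_r.(B_u \wedge B_b))$; since $H_1$ is present, $H_2$ is interchangeable with $\exists\by_r.(B_u \wedge B_b)$; substituting this equivalence under the existential quantifier (legitimate because $\by_r$ is fresh, hence can be absorbed into the $\exists_{-\V}$ prefix without capture) transforms the left reading into the right one. This is a straightforward but slightly fiddly first-order manipulation; the freshness condition on $\by_r$ is precisely what makes the quantifier shuffling sound.

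**Main obstacle.** The genuinely delicate point is the variable-scope bookkeeping: reconciling the $\exists_{-\V}$ notation of the $(\cdot)^\dagger$ translation with the explicit local-variable lists $\bl$ used in Lemma~\ref{lem:se_derived}, and verifying that the fresh rule variables $\by_r$ — which are local to the post-state but appear nowhere in the pre-state outside $G$ — can be pulled in and out of the existential prefix cleanly on both sides of the biconditional. One must be careful that $\exists\by_r.(B_u \wedge B_b)$ is substituted for $H_2$ at a position where $\by_r$ does not clash with variables already bound by $\exists_{-\V}$ or free in $\U, H_1$; the renaming-apart discipline of Def.~\ref{def:chr-op-sem} guarantees this, but it needs to be invoked explicitly. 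Modulo this bookkeeping, the argument is a direct appeal to the rule's logical reading plus propositional reasoning, and the induction wrapping it is immediate. I would not expect any essential difficulty beyond getting the quantifier manipulations airtight.
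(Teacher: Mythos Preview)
The paper does not prove this lemma at all: it is stated with the explicit remark that it is ``cited from \citeN{Fruhwirth03}'' and no proof (not even a sketch) is given in the present article. So there is no paper-internal argument to compare against.

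That said, your proposal is the standard route for this kind of soundness result and is essentially correct. Reducing to a single transition step, invoking Lemma~\ref{lem:se_derived}.\ref{prop:se_lequiv} to handle the equivalence-class representatives, and then unfolding the rule's classical reading $\forall(G \rightarrow (H_1 \rightarrow (H_2 \leftrightarrow \exists\by_r.(B_u \wedge B_b))))$ to swap $H_2$ for $\exists\by_r.(B_u\wedge B_b)$ under the outer existential prefix is exactly how the result is obtained in the cited reference. Your identification of the quantifier bookkeeping around $\by_r$ and the $\exists_{-\V}$ prefix as the only genuinely delicate point is accurate; the freshness guarantee in Def.~\ref{def:chr-op-sem} is precisely what discharges it. The final step---routing $T_1 \leftrightarrow S \leftrightarrow T_2$ via transitivity---is immediate.
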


The declarative semantics of CHR must be distinguished from LP languages and
related paradigms as CHR is not based on the notion of \emph{execution as proof
search}. Declaratively, execution of a CHR program means stepwise transformation
of the information contained in the state under logical equivalence as defined
by the program's logical reading $\bbP^\dagger$ and the constraint theory $CT$.
Founding CHR on such a declarative semantics is an obvious choice for several
reasons:

Firstly, the notion of \emph{execution as proof search} naturally implies a
notion of \emph{search}. This stands in contrast to the committed-choice execution of CHR.
Furthermore, the forward-reasoning approach faithfully captures the one-sided
variable matching between rule heads and constraints in CHR, as opposed to
unification. For example, a CHR state $\state{p(x);\top;\emptyset}$ (where $x$
is a variable) does not match with the rule head $(p(a)\Leftrightarrow\ldots)$ (where
$a$ is a constant) just as we cannot apply modus ponens on a fact $\exists
x.p(x)$ and an implication $(p(a)\rightarrow\ldots)$. In contrast, an LP goal
$p(x)$ would be unified with a rule head $(p(a)\leftarrow\ldots)$, accounting
for the fact that application of the rule might lead to a proof of an instance
of $p(x)$.

There are, however, several limitations to the classical declarative semantics
of CHR, which shall be discussed in the following:

\paragraph*{Directionality} One limitation lies in the fact that the classical
declarative semantics does not capture the inherent directionality of CHR rules.
Rather, all states within a computation are considered logically equivalent.
Consider e.g. the minimal CHR program
\[
  a \Leftrightarrow b
\]
In this program, we can compute a state $\state{b;\top;\emptyset}$ from a state
$\state{a;\top;\emptyset}$ but not vice versa. This is not captured in its
logical reading $(a\leftrightarrow b)$ which e.g. implies $b\rightarrow a$.
The classical declarative semantics cannot be used e.g. to show that the state
$\state{a;\top;\emptyset}$ is not a computable state
$\state{b;\top;\emptyset}$.

\paragraph*{Dynamic Change} Any program state that does not only
contain declarative information about a supposedly static object world but also
meta-information about the program state eludes the semantics. Consider the
following program which computes the minimum of a set:
\[
  min(x),min(y)\Leftrightarrow x\leq y \mid min(x)
\]
On a fixed-point execution, the program correctly computes the minimum of all
arguments of $min$ constraints found in the store at the beginning of the
computation. Its logical reading, however, is unhelpful at best:
\[
	\forall x,y.
	x\leq y \rightarrow(min(x)\wedge min(y)\leftrightarrow min(x))
\]

\paragraph*{Deliberate Non-Determinism} Any program that makes deliberate use of
the inherent non-determinism of CHR has a misleading declarative semantics as
well. Consider the following program, which simulates a coin throw in an appropriate
probabilistic semantics of CHR (cf. \citeN{DBLP:journals/entcs/FruhwirthPW02}).
(Note that $coin$ is a variable, $head$ and $tail$ are constants.)
\[
  \begin{array}{lcl}
    throw(coin) & \Leftrightarrow & coin \de head \\
    throw(coin) & \Leftrightarrow & coin \de tail
  \end{array}
\]
The logical reading of this program implies $\forall coin.(coin\de
head\leftrightarrow coin\de tail)$. From this follows $head\de tail$ and -- since
$head$ and $tail$ are distinct constants -- falsity $\bot$. The program's logical
reading is thus inconsistent, trivially implying anything.

\paragraph*{Multiplicities} Finally, while CHR faithfully keeps track of the
multiplicities of constraints, this aspect eludes the classical semantics.
Consider the idempotence rule from Example~\ref{example:leq}, which removes
multiple occurrences of the same constraint:
\[
    r_I \ @ \ x\leq y,x\leq y \ \Leftrightarrow \ x\leq y
\]
The logical reading of this rule is a tautology, falsely suggesting that
the rule is redundant:
\[
	\forall x,y.(x\leq y
	\wedge x\leq y \ \leftrightarrow \ x\leq y)
\]
In conclusion, the classical declarative semantics is a powerful tool to prove
the soundness and a certain notion of completeness of any program whose states
contain only model information about a static object world and no explicit
meta-information. It faithfully captures the logical theory behind those
programs. However, it is not adequate to capture the logic behind programs that
deal with any form of meta-information, make deliberate use of non-determinism or
rely on the multiplicities of constraints. As it does not capture the inherent
directionality of CHR rules, it is not suitable to prove safety conditions, i.e.
to show that a certain intermediate or final state cannot be derived from a
certain initial state.

\subsection{The Axiomatic Linear-Logic Semantics for CHR}
  \label{sec:pt-sem}

Our linear-logic semantics is based on two observations: Firstly, the difference in behaviour
between built-in and user-defined constraints in CHR resembles the difference
between linear and banged atoms in linear logic. Secondly, the application of
simplification rules on user-defined constraints resembles the application
of modus ponens in linear logic.

Building on the first observation, we define an adequate representation of CHR
constraints in linear logic. Translation to linear logic will be denoted as
$(\cdot)^L$. For atomic constraints, the choice is obvious:
\begin{align*}
	\cu^L & ::= \cu \\
	\cb^L & ::= \bang\cb
\end{align*}
Classical conjunction is mapped to multiplicative conjunction for both built-in
and user-defined constraints.
\[
  (\G \wedge \G')^L ::= \G^L\otimes {\G'}^L
\]
This mapping is motivated by the fact that
multiplicative conjunction is aware of multiplicities and has no notion of
weakening, thus capturing the multiset semantics of user-defined constraints. For
any built-in constraint $\B$, the mapping equals the translation quoted in
Def.~\ref{def:negri}: $\B^L=\B^*$. Accordingly, we map the empty goal $\top$ to
$\lone$ and falsity $\bot$ to $\lzero$. The translation of CHR states is
analogous to the classical case:
\[
  \state{\U;\B;\V}^L ::= \exists_{-\V}.\U^L \otimes \B^L
\]
The translation of
constraints, goals and states is summed up in Fig.~\ref{fig:states}.

\begin{figure}
	\label{fig:states}
	\begin{center}
	\fbox{
	\begin{tabular}{l @{\quad} r @{$\,::=\,$} l}
	\textrm{Atomic built-in constraints:} & $\cb^L$ & $!\cb$ \\
	\textrm{Atomic user-defined constraints:} & $\cu^L$ & $\cu$ \\
	\textrm{Falsity:} & $\bot^L$ & $\lzero$ \\
	\textrm{Empty constraint/goal:} & $\top^L$ & $\lone$ \\
	\textrm{Constraints/goals:} &
		$(\G_1\wedge\G_2)^L$ & $\G^L_1\otimes\G^L_2$\\
	\textrm{States:} & $\state{\U;\B;\V}^L$ & $\exists_{-\V}.\U^L\x\B^L$ \\
	\end{tabular}
	}
	\caption{Translation of constraints, goals and states}
	\end{center}
\end{figure}

\paragraph*{Proper axioms} The constraint theory $CT$, the interaction between
equality constraints (which are by definition built-in) and user-defined
constraints, and programs are translated to proper axioms. Firstly, we define a
set of proper axioms encoding the constraint theory as well as modelling the
interaction between equality $\doteq$ and user-defined constraints.

\begin{definition}[($\Sct$)] For built-in constraints $\B,\B'$
and sets of variables $\bx, \bx'$ such that $CT\models \exists\bx.\B\rightarrow
\exists\bx'.\B'$, the following is a proper axiom:
\[
\exists\bx.\B^L\vdash \exists\bx'.\B'^L
\]
We denote the set of all such axioms as $\Sct$.
\end{definition}

\begin{definition}[($\Seq$)]
If $\cu$ is an n-ary user-defined constraint and $t_j,u$ are terms such that
$t_j$ is the $j$th argument of $\cu$ then
\[
	c_u(...,t_j,...) \x !(t_j\doteq u)
	\vdash c_u(...,u,...) \x !(t_j\doteq u)
\]
is a proper axiom. We denote the set of all such axioms as $\Seq$.
\end{definition}

\begin{definition}[($\Sp$)] If $r\ @\ H_1\setminus H_2\Leftrightarrow G\mid B_b
\wedge B_u$ is a variant of a rule with local variables $\by_r$, the sequent
\[
  H_1^L\x H_2^L\x G^L \vdash H_1^L \x \exists\by_r.(B_b^L\x B_u^L\x G^L)
\]
is a proper axiom. For a program $\bbP$, we denote the set of all axioms derived
from its rules as $\Sp$.
\end{definition}

The existential quantification of the local variables $\by_r$ corresponds to the
fact that these variables are by definition disjoint from
$vars(H_1,H_2,\U,\B,\V)$, assuring that fresh variables are introduced for the
local variables of the rule. Fig.~\ref{fig:chr-axiomatic-semantics} sums up the
three sets of proper axioms, represented as inference rules.

\begin{figure}
\label{fig:chr-axiomatic-semantics}
\begin{center}
\fbox{
	\begin{tabular}{c}
	$
		\infer[(\Sct)]
		{
			\exists\bx.\B^L\vdash \exists\bx'.\B'^L
		}
		{
			CT\models \exists\bx.\B\rightarrow\exists\bx'.\B'
		}
		\quad\quad\quad
		\infer[(\Seq)]
		{
			c_u(...,t_j,...) \x !(t_j\doteq u)
			\vdash c_u(...,u,...) \x !(t_j\doteq u)
		}
		{
			\phantom{\top}
		}
	$
	\\
	\\
	$
		\infer[(\Sp)]
		{
			H_1^L\x H_2^L\x G^L \vdash H_1^L \x \exists\by_r.(B_b^L\x B_u^L\x G^L)
		}
		{
			(r\ @\ H_1\setminus H_2\Leftrightarrow G\mid B_b\wedge B_u)\subxy\in\bbP
		}
	$
	\end{tabular}
}
\end{center}
\caption{The axiomatic linear-logic semantics, represented as inference rules}
\end{figure}

In anticipation of the soundness theorem presented in
Sect.~\ref{sec:pt-soundness}, we give an example of a CHR derivation and show
that it corresponds to a valid linear logic judgement:

\begin{example}
	\label{example:pt-sem-soundness}
Let $\bbP$ be the partial-order constraint solver from
Example~\ref{example:leq} and let $CT$ be a minimal constraint theory. We
observe that under $\bbP$, we have:
\[
	[\state{3\leq a;a=3;\emptyset}]\mapsto^*[\state{\top;\top;\emptyset}]
\]
This corresponds to the judgement
$\state{3\leq a;a=3;\emptyset}^L\vdash_\Sigma\state{\top;\top;\emptyset}^L$ or
$\exists a.(3\leq a\x !a=3)\vdash_\Sigma \lone$, respectively, where
$\Sigma=\Sct\cup\Seq\cup\Sp$. The following is a proof of this judgement:
\[\scriptsize
	\infer[(L\exists)]
	{
	\exists a.(3\leq a\x !a\doteq3)\vdash \lone
	}
	{
		\infer[(Cut)]
		{
		3\leq x\x !x\doteq3\vdash \lone
		}
		{
			\infer[(Cut)]
			{
			3\leq x\x !x\doteq3\vdash \top\x !x\doteq3
			}
			{
				\infer[(\Sct)]
				{
				3\leq x\x !x\doteq3\vdash x\leq x\x !x\doteq3
				}
				{
				}
			&
				\infer[(L\otimes)]
				{
				x\leq x\x !x\doteq3\vdash \lone\x !x\doteq3
				}
				{
					\infer[(R\otimes)]
					{
					x\leq x, !x\doteq3\vdash \lone\x !x\doteq3
					}
					{
						\infer[(\Sp)]
						{
						x\leq x\vdash\lone
						}
						{
						}
					&
						\infer[(Identity)]
						{
						!x\doteq3\vdash !x\doteq3
						}
						{
						}
					}
				}
			}
		&
			\infer[(\Sct)]
			{
			\lone\x !x\doteq3\vdash\lone
			}
			{
			}
		}
	}
\]
The sequent $\lone\x !x\doteq3\vdash\lone$ is a tautology and as
such could be derived without proper axioms, but it is also trivially included in
$\Sct$.
\end{example}

While the soundness result for our semantics is straightforward, defining
completeness is not quite as simple. Consider the following example:

\begin{example}
	\label{example:pt-sem-completeness-problem}
In the proof tree given in Example~\ref{example:pt-sem-soundness} we use the
following proper axiom from $\Sct$:
\[
	\lone\x !x\doteq3\vdash\lone
\]
This implies:
\[
	\state{\top;x\doteq3;\{x\}}^L\vdash_\Sigma\state{\top;\top;\emptyset}^L
\]
We observe, however, that
$\state{\top;x\doteq3;\{x\}}\mapsto^{*}\state{\top;\top;\emptyset}$ is untrue.
\end{example}

In the following section, we prove the soundness of our semantics. In
Sect.~\ref{sec:entailment}, we develop the notion of \emph{state entailment}. We
will apply this notion to specify and prove a completeness result in
Sect.~\ref{sec:pt-completeness}.

\subsection{Soundness of the Linear Logic Semantics}
\label{sec:pt-soundness}

In this section, we prove the soundness of the axiomatic linear-logic
semantics for CHR with respect to the operational semantics.

\begin{lemma}[($\equiv_e\Rightarrow\vdash$)]
  \label{lemma:sq-ll}
  Let $CT$ be a constraint theory and $\Sigma=\Sct\cup\Seq$. For arbitrary CHR
  states $S,T$, we have: \[
 S\equiv_e T \Rightarrow S^L\vdash T^L
  \]
\begin{proof}[sketch] We prove that state equivalence $S\equiv_e T$
implies linear judgement $S\vdash T$ by showing that every of the conditions
given for $S\equiv_e T$ in Def.~\ref{def:s_equiv}
implies $S\vdash T$: Def.~\ref{def:s_equiv}.\ref{cond:se_comm}
implies linear judgement since multiplicative conjunction is associatice, commutative and invariant w.r.t.
  $\top$, thus corresponding to goal equivalence. For
  Def.~\ref{def:s_equiv}.\ref{cond:se_subst}, linear judgement is guaranteed, as
  $\Seq$ allow us to prove $\exists_{-\V}.\U\x\xet\x\B
  \vdash_\Sigma \exists_{-\V}.\U\subxt\x\xet\x\B$. For
  Def.~\ref{def:s_equiv}.\ref{cond:se_appct}, it is similarly guaranteed by $\Sct$.
  Def.~\ref{def:s_equiv}.\ref{cond:se_global} implies linear judgement since
  the addition or removal of a global variable not occurring
   in a state does not change the logical reading of the state.
  W.r.t. Def.~\ref{def:s_equiv}.\ref{cond:se_fail}, linear judgement holds since
  $\varphi\x 0\vdash\psi$ is valid for any $\varphi,\psi$.
All the above arguments can be shown to apply in the reverse direction as well,
thus proving compliance with the implicit symmetry of $\cdot\equiv\cdot$. The
implicit reflexivity and transitivity of state equivalence comply with linear
judgement due to the $(Identity)$ and $(Cut)$ rules.
\end{proof}
\end{lemma}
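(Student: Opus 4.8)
The plan is to establish Lemma~\ref{lemma:sq-ll} by structural induction on the derivation of $S\equiv_e T$, treating $\equiv_e$ as the smallest equivalence relation closed under the five generating conditions of Def.~\ref{def:s_equiv} together with reflexivity, symmetry, and transitivity. For the base cases I would show directly that each of the five generating conditions yields a valid linear-logic sequent $S^L\vdash_\Sigma T^L$; for the closure operations I would show that validity of such sequents is itself reflexive (via $(Identity)$), transitive (via $(Cut)$), and — crucially — symmetric in the relevant cases. First I would fix the translation $S^L = \exists_{-\V}.\U^L\x\B^L$ and note that since only $\Sct$ and $\Seq$ are in $\Sigma$ here (not $\Sp$), all reasoning stays within equivalence-preserving axioms.

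For the individual conditions: Condition~\ref{cond:se_comm} (Goal Transformation) follows because $\x$ is associative and commutative and has $\lone$ as unit, all provable in pure ILL, so $\G\equiv_G\G'$ gives $\G^L\equivll{\G'}^L$ and hence the sequent in both directions. Condition~\ref{cond:se_subst} (Equality as Substitution) is handled by iterated application of the $\Seq$ axioms: each axiom rewrites one argument position of one atomic user-defined constraint under the banged equality $!(t_j\doteq u)$, which — being banged — can be contracted and reused across all atoms and all positions, so by chaining $(Cut)$s I get $\exists_{-\V}.\U^L\x(!x\doteq t)\x\B^L \vdash_\Sigma \exists_{-\V}.\U\subxt^L\x(!x\doteq t)\x\B^L$, and symmetrically using the $\Seq$ axiom with $u$ and $t_j$ swapped (which is also in $\Seq$). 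Condition~\ref{cond:se_appct} (Application of CT) follows from $\Sct$: from $CT\models\exists\bs.\B\leftrightarrow\exists\bs'.\B'$ I get both directions as $\Sct$ axioms, and I lift them under $\U^L\x(\cdot)$ and under $\exists_{-\V}$ using $(R\x)$/$(L\x)$ and the $\exists$ rules — here I must be careful that the strictly local variables $\bs,\bs'$ are exactly the ones that can be existentially closed inside, matching the side condition on the $\Sct$ axiom. Condition~\ref{cond:se_global} (Neutrality of Redundant Global Variables) is immediate since $x\notin vars(\G)$ means $\exists_{-(\{x\}\cup\V)}.\G^L$ and $\exists_{-\V}.\G^L$ are literally the same formula up to a vacuous quantifier, provable equivalent by $(L\exists)$/$(R\exists)$ with a fresh witness. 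Condition~\ref{cond:se_fail} (Equivalence of Failed States) uses $(L\lzero)$: any state containing $\bot$ translates to a formula containing $\lzero$ as a multiplicative conjunct, so $\varphi\x\lzero\vdash\psi$ is provable for every $\psi$ by $(L\x)$ followed by $(L\lzero)$ (after weakening away $\varphi$ — but note $\varphi$ need not be banged, yet $(L\lzero)$ with a surrounding $(L\x)$ gives $\varphi,\lzero\vdash\psi$ directly since $\lzero\vdash\psi$ and we can discard $\varphi$ only if... ) — actually the clean route is $\lzero\vdash\psi$ by $(L\lzero)$, then $(L\x)$ needs the whole context, so I would instead prove $\varphi\x\lzero\vdash\psi$ by $(L\x)$ to $\varphi,\lzero\vdash\psi$, then since we cannot weaken linear $\varphi$, use that $\lzero$ alone proves anything and $\varphi$ is absorbed — the correct derivation is via $\ltop$: actually $(L\lzero)$ gives $\lzero\vdash\psi$, and from this $\varphi,\lzero\vdash\psi$ does \emph{not} follow by weakening in linear logic. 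The right derivation: treat $\varphi\x\lzero$; apply $(L\x)$; now I have $\varphi,\lzero\vdash\psi$; this is provable because $\lzero$ on the left lets us conclude anything including after absorbing $\varphi$ — formally one shows $\Gamma,\lzero\vdash\psi$ is derivable by a rule $(L\lzero)$ variant, but in the calculus as given $(L\lzero)$ is stated as $\lzero\vdash\alpha$; so I would first cut to reduce. I will handle this carefully in the writeup; the intended fact, as the sketch says, is "$\varphi\x 0\vdash\psi$ is valid for any $\varphi,\psi$," which does hold in ILL.

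The main obstacle will be the symmetry argument for transitivity-closure: I must verify that for \emph{each} generating condition the sequent is provable in \emph{both} directions, so that when $\equiv_e$ is closed under symmetry the implication $S\equiv_e T\Rightarrow S^L\vdash_\Sigma T^L$ is preserved. For conditions \ref{cond:se_comm}, \ref{cond:se_appct}, \ref{cond:se_global}, \ref{cond:se_fail} this is clear (goal equivalence, $CT$-equivalence, vacuous quantifiers, and $\lzero$ are all symmetric), but for condition~\ref{cond:se_subst} the $\Seq$ axiom as written is directional ($t_j\rightsquigarrow u$), so I need to invoke the instance with the roles of the two terms exchanged, using that $!(t_j\doteq u)\vdash_\Sigma !(u\doteq t_j)$ holds (symmetry of $\doteq$ is in $CT$, hence in $\Sct$, and banging is preserved by $(R!)$). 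Once all five directions-and-their-symmetrics and the $(Identity)$/$(Cut)$ closure are in place, the induction on the construction of $\equiv_e$ closes, giving $S\equiv_e T\Rightarrow S^L\vdash_\Sigma T^L$, which is the claim (the "$\vdash$" in the statement being shorthand for $\vdash_\Sigma$ with $\Sigma=\Sct\cup\Seq$).
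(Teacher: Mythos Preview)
Your proposal is correct and follows essentially the same approach as the paper's own proof sketch: a case analysis over the five generating conditions of Def.~\ref{def:s_equiv}, handling \ref{cond:se_comm} via the AC1 properties of $\otimes$, \ref{cond:se_subst} via $\Seq$, \ref{cond:se_appct} via $\Sct$, \ref{cond:se_global} via vacuous quantification, and \ref{cond:se_fail} via the absorbing behaviour of $\lzero$, then closing under reflexivity, symmetry, and transitivity using $(Identity)$ and $(Cut)$. Your extra care about the directionality of $\Seq$ (resolved through symmetry of $\doteq$ in $\Sct$) and about deriving $\Gamma,\lzero\vdash\psi$ from the paper's restricted $(L\lzero)$ axiom are valid technical points the sketch glosses over, but they do not change the overall structure of the argument.
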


Theorem~\ref{thm:soundness} states the soundness of our semantics.

\begin{theorem}[Soundness]
  \label{thm:soundness}
  Let $\bbP$ be a program, $CT$ be a constraint theory, and
  $\Sigma=\Sp\cup\Sct\cup\Seq$. Then for arbitrary states $S,T$, we have:
  \[
  	S\mapsto^{*}T \Rightarrow S^L\vdash_\Sigma T^L
  \]
\begin{proof}
Let $S,T$ be states such that $S\mapsto^{r}T$. According to
Def.~\ref{def:chr-op-sem}, there exists a variant of a rule with fresh variables
$r\ @\ H_1\setminus H_2 \Leftrightarrow G\mid B_u \wedge B_b$ and states
$S'=\state{H_1\wedge H_2 \wedge \U; G\wedge\B; \V}$, $T'=\state{B_u \wedge
H_1\wedge \U; B_b\wedge G\wedge\B; \V}$  such that $S'\equiv S$ and $T'\equiv T$.
Consequently, $\Sp$ contains:
\[
  H_1^L\x H_2^L\x G^L \vdash_\Sigma H_1^L\x \exists\by_r.(B_b^L\x B_u^L\x G^L)
\]
From which we prove:
\[
  \exists_{-\V}.H_1^L\x H_2^L\x G^L\x\U\x\B
   \vdash_\Sigma
  \exists_{-\V}.H_1^L \x\exists\by_r.(B_b^L\x B_u^L\x G^L)\x\U\x\B
\]
The local variables $\by_r$ of $r$ are by Def.~\ref{def:chr-op-sem} disjoint
from $vars(H_1,H_2,\U,\B,\V)$. Hence, we have:
\[
  \exists_{-\V}.H_1^L\x H_2^L\x G^L\x\U\x\B
   \vdash_\Sigma
  \exists_{-\V}.H_1^L\x G^L \x B_b^L\x B_u^L\x\U\x\B
\]
This corresponds to $S'^L\vdash_\Sigma T'^L$. Lemma~\ref{lemma:sq-ll} proves that
${S}^L\vdash_\Sigma {T}^L$. As the judgement relation $\vdash$ is transitive and reflexive, we can generalize the relationship
to the reflexive-transitive closure $S\mapsto^{*} T$.
\end{proof}
\end{theorem}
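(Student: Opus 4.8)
The plan is to reduce the statement to a single transition step and then chain the resulting judgements with the $(Cut)$ rule. First I would argue, by induction on the length $n$ of the derivation witnessing $S\mapsto^{*}T$, that it suffices to treat the single-step case. For $n=0$ we have $S=T$, and $(Identity)$ gives $S^L\vdash_\Sigma S^L$. For the inductive step, write the derivation as $S\mapsto S'\mapsto^{*}T$; the induction hypothesis supplies $S'^L\vdash_\Sigma T^L$, the single-step case supplies $S^L\vdash_\Sigma S'^L$, and $(Cut)$ combines the two into $S^L\vdash_\Sigma T^L$. So the real content is the case $S\mapsto^{r}T$.

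For a single step I would unfold Def.~\ref{def:chr-op-sem}: there is a variant $r\ @\ H_1\setminus H_2\Leftrightarrow G\mid B_u\wedge B_b$ of a program rule whose local variables $\by_r$ are renamed apart from $vars(H_1,H_2,\U,\B,\V)$, together with normal-form states $S'=\state{H_1\wedge H_2\wedge\U;G\wedge\B;\V}$ and $T'=\state{H_1\wedge B_u\wedge\U;B_b\wedge G\wedge\B;\V}$ such that $S'\equiv_e S$ and $T'\equiv_e T$. By definition of $\Sp$, the sequent $H_1^L\x H_2^L\x G^L\vdash H_1^L\x\exists\by_r.(B_b^L\x B_u^L\x G^L)$ is a proper axiom in $\Sigma$. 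I would then extend this axiom by the ``frame'' $\U^L\x\B^L$: combining $(L\x)$, $(R\x)$ and $(Cut)$ with the axiom yields $H_1^L\x H_2^L\x G^L\x\U^L\x\B^L\vdash_\Sigma H_1^L\x\exists\by_r.(B_b^L\x B_u^L\x G^L)\x\U^L\x\B^L$, and wrapping both sides with the quantifier prefix $\exists_{-\V}$ by means of $(L\exists)$ and $(R\exists)$ produces a judgement between translations of states. Since $\by_r$ is disjoint from $vars(H_1,H_2,\U,\B,\V)$, the nested $\exists\by_r$ on the right can be moved out past the frame, so the right-hand side equals $\exists_{-\V}.(H_1^L\x B_b^L\x B_u^L\x G^L\x\U^L\x\B^L)$, which is exactly $T'^L$; the left-hand side is $S'^L$. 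Hence $S'^L\vdash_\Sigma T'^L$.

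It then remains to bridge between $S,T$ and their normal forms. Since $\Sct\cup\Seq\subseteq\Sigma$, Lemma~\ref{lemma:sq-ll} applies: from $S\equiv_e S'$ we get $S^L\vdash_\Sigma S'^L$, and from $T'\equiv_e T$ we get $T'^L\vdash_\Sigma T^L$. Two further applications of $(Cut)$ give $S^L\vdash_\Sigma S'^L\vdash_\Sigma T'^L\vdash_\Sigma T^L$, i.e.\ $S^L\vdash_\Sigma T^L$, which completes the single-step case and hence the theorem. The step I expect to be the main obstacle is the quantifier bookkeeping in the frame extension: one has to verify that $\exists_{-\V}$ abstracts exactly the same variables on both sides and that the freshness of $\by_r$ really licenses pulling $\exists\by_r$ out of its nested position without capture — this is precisely where the disjointness requirement of Def.~\ref{def:chr-op-sem} is used, and it is the part where sloppiness about which quantifier binds which variable would be fatal. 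The remaining manipulations are a routine assembly of the multiplicative and quantifier introduction rules together with $(Cut)$.
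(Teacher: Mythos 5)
Your proposal is correct and follows essentially the same route as the paper's proof: reduce to a single transition step, invoke the $\Sp$ axiom for the applied rule, extend it by the frame $\U^L\x\B^L$ under the $\exists_{-\V}$ prefix, use the freshness of $\by_r$ to relocate the inner existential, and bridge to $S$ and $T$ via Lemma~\ref{lemma:sq-ll} and $(Cut)$. Your explicit induction on derivation length is just the spelled-out version of the paper's closing remark that reflexivity and transitivity of $\vdash$ lift the result to $\mapsto^{*}$.
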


\subsection{State Entailment}
	\label{sec:entailment}

In this section, we define the notion of \emph{entailment}, which we will use to
formulate our theorem of completeness. We present it alongside various properties
that follow from it and that will be used in upcoming sections.

\begin{definition}
\label{def:s_entail}

\emph{State entailment}, written as $\cdot\ent\cdot$, is the smallest
partial-order relation over equivalence classes of CHR states that satisfies the
following conditions:

\begin{enumerate}
	\item \label{cond:sn_wea} \emph{(Weakening of the Built-In Store)} For states $\state{\U;\B;\V},\state{\U;\B';\V}$ with  local variables $\bs,\bs'$ such that
	$CT\models\forall(\exists\bs.\B\rightarrow\exists\bs.\B')$, we have:
	\[
	  [\state{\U;\B;\V}] \ent [\state{\U;\B';\V}]
	\]
	\item \label{cond:sn_omit} \emph{(Omission of Global Variables)}
	\[
	  [\state{\U;\B;\{x\}\cup\V}] \ent [\state{\U;\B;\V}]
	\]

\end{enumerate}
\end{definition}

To simplify notation, we often write $S\ent S'$ instead of $[S]\ent [S']$.
Theorem~\ref{thm:se_crit} gives a decidable criterion for state entailment. The
criterion requires that the global variables of the entailed state are contained
in the global variables of the entailing state. This is never a problem, as we
may choose representatives of the respective equivalence classes that satisfy the
condition.

\begin{theorem}[Criterion for $\ent$]\label{thm:se_crit} Let $S =
\state{\C;\B;\V}, S' = \state{\C';\B';\V'}$ be CHR states with local
variables~$\bl,\bl'$ that have been renamed apart and where $\V'\subseteq\V$.
Then we have:
\[ [S]\ent [S']
	\quad\Leftrightarrow\quad
	CT\models \forall (\B \rightarrow \exists \bl'.((\U \de \U') \wedge\B'))
\]
\begin{proof}
\noindent'$\Rightarrow$': We show that the explicit axioms of entailment, as well as the implicit conditions reflexivity,
anti-symmetry and transitivity comply with the criterion:
\begin{description}
	\item[Def.~\ref{def:s_entail}.\ref{cond:sn_wea}] We assume w.l.o.g. that the strictly local variables of $\state{\U;\B;\V},\state{\U;\B';\V}$ are renamed apart. We observe that
	$(\U\de\U)$ is
	a tautology for any $\U$. Hence, from $CT\models\forall(\exists\bs.\B\rightarrow\exists\bs.\B')$ follows
	$CT\models\forall(\exists\bs.\B\rightarrow\exists\bl'.(\U\de\U)\wedge\B')$, which proves:
	$
		CT\models \forall (\B\rightarrow\exists\bl'.((\U\de\U)\wedge\B'))
	$
	\item[Def.~\ref{def:s_entail}.\ref{cond:sn_omit}] Let $\bl$ be the
	local variables of $\state{\U;\B;\{x\}\cup\V}$. For any $x$ we have:
	$
		CT\models \forall (\B\rightarrow\exists
		x.\exists\bl.((\U\de\U)\wedge\B))
	$
	\item[Reflexivity] Let $\state{\U;\B;\V},\state{\U';\B';\V'}$ be CHR
states
	such that $[\state{\U;\B;\V}]=[\state{\U';\B';\V'}]$, i.e.
	$\state{\U;\B;\V}\equiv \state{\U';\B';\V'}$. Assuming that the local
	variables $\bl, \bl'$ have been named apart, Thm.~\ref{thm:sq_crit}
implies
	$
		CT\models \forall (\B\rightarrow\exists\by'.((\U\de\U')\wedge\B'))
	$.
	\item[Anti-Symmetry] Let $\state{\U;\B;\V},\state{\U';\B';\V'}$ be CHR
states
	with local variables $\bl,\bl'$ such that $CT\models \forall
	(\B\rightarrow\exists\bl'.((\U\de\U')\wedge\B'))$ and  $CT\models \forall
	(\B'\rightarrow\exists\bl.((\U\de\U')\wedge\B))$. By
Thm.~\ref{thm:sq_crit},
	we have that $\state{\U;\B;\V}\equiv \state{\U';\B';\V'}$ and hence
	$[\state{\U;\B;\V}]=[\state{\U';\B';\V'}]$.
	\item[Transitivity] Let
	$\state{\U;\B;\V},\state{\U';\B';\V'},\state{\U'';\B'';\V''}$ be CHR
states
	where the local variables $\bl,\bl',\bl''$have been renamed apart and
such
	that $CT\models \forall (\B\rightarrow\exists\bl'.((\U\de\U')\wedge\B'))$ and
	$CT\models \forall (\B'\rightarrow\exists\bl''.((\U'\de\U'')\wedge\B''))$.
	Therefore, $CT\models \forall (\B\rightarrow\exists\by'.((\U\de\U')
	\wedge\exists\bl''.((\U'\de\U'')\wedge\B'')))$. As the sets of local
variables
	are disjoint, we get $CT\models \forall
(\B\rightarrow\exists\bl'\bl''.((\U\de\U')
	\wedge(\U'\de\U'')\wedge\B''))$ and finally
	\[
		CT\models \forall (\B\rightarrow\exists\bl''.((\U\de\U'')\wedge\B''))
	\]
\end{description}

\noindent'$\Leftarrow$': Let $S = \state{\U;\B;\V}, S' = \state{\U';\B';\V'}$ be
CHR states with local variables~$\by,\by'$ that have been renamed apart and such
that $\V'\subseteq\V$ and
$
	CT\models \forall (\B \rightarrow \exists \by'.((\U \de \U') \wedge
	\B'))
$.
We apply Def.~\ref{def:s_entail}.\ref{cond:sn_wea} to infer:
$
	S\ent\state{\U;(\U\doteq \U')\wedge \B';\V}
$.
By Def.~\ref{def:s_equiv}.\ref{cond:se_subst} and
Def.~\ref{def:s_equiv}.\ref{cond:se_appct}, we get
$
	S\ent\state{\U';\B';\V}
$.
Since $\V'\subseteq\V$, several applications of
Def.~\ref{def:s_entail}.\ref{cond:sn_omit} give us
$
	S\ent\state{\U';\B';\V'}=S'
$.
\end{proof}
\end{theorem}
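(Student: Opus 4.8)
The plan is to prove the two directions separately, following the standard pattern already used for Theorem~\ref{thm:sq_crit}. For the direction $[S]\ent[S']\Rightarrow CT\models\forall(\B\rightarrow\exists\bl'.((\U\de\U')\wedge\B'))$, I would exploit the fact that $\ent$ is \emph{defined} as the smallest partial order closed under the two axioms (Weakening of the Built-In Store and Omission of Global Variables). Hence it suffices to show that the relation defined by the right-hand criterion is itself a partial order containing both generating axioms; by minimality of $\ent$, it must then contain $\ent$. Concretely, I would check: (i) the two defining axioms of Definition~\ref{def:s_entail} each imply the criterion --- using that $\U\de\U$ is a tautology for the Weakening axiom, and that omitting a global variable merely adds a harmless existential quantifier; (ii) \emph{reflexivity}, which follows because state equivalence implies the criterion by Theorem~\ref{thm:sq_crit}; (iii) \emph{antisymmetry}, again via Theorem~\ref{thm:sq_crit}: if the criterion holds in both directions then $S\equiv S'$, so $[S]=[S']$; and (iv) \emph{transitivity}, where one chains the two implications, pushes the existential quantifiers over the disjoint local-variable sets (renamed apart), and uses that $(\U\de\U')\wedge(\U'\de\U'')$ entails $(\U\de\U'')$ to collapse the middle matching.

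For the converse direction, I would argue constructively, building a chain of entailment steps from the criterion. Assuming $CT\models\forall(\B\rightarrow\exists\bl'.((\U\de\U')\wedge\B'))$ and $\V'\subseteq\V$, first apply Weakening of the Built-In Store (Definition~\ref{def:s_entail}.\ref{cond:sn_wea}) to get $S\ent\state{\U;(\U\de\U')\wedge\B';\V}$. Then use Equality as Substitution (Definition~\ref{def:s_equiv}.\ref{cond:se_subst}) together with Application of $CT$ (Definition~\ref{def:s_equiv}.\ref{cond:se_appct}) to rewrite $\U$ into $\U'$ under the matching equations, obtaining $S\ent\state{\U';\B';\V}$ (here the matching $\U\de\U'$ is, by definition, a conjunction of syntactic equalities between corresponding arguments, which is exactly what Equality as Substitution consumes). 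Finally, since $\V'\subseteq\V$, repeatedly apply Omission of Global Variables (Definition~\ref{def:s_entail}.\ref{cond:sn_omit}) to discard the variables in $\V\setminus\V'$, reaching $S\ent\state{\U';\B';\V'}=S'$.

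I expect the main obstacle to be the bookkeeping around local variables in the transitivity case of the forward direction: one must be careful that the hypothesis "$\bl,\bl',\bl''$ renamed apart" is available (which is justified by Lemma~\ref{lem:se_derived}.\ref{prop:se_rename}), so that when the two quantified implications are composed, the existential quantifiers $\exists\bl'$ and $\exists\bl''$ can be merged into a single block $\exists\bl'\bl''$ without capture, and then $\exists\bl'$ can be eliminated after using $(\U\de\U')\wedge(\U'\de\U'')\models(\U\de\U'')$. A secondary subtlety is making the $\V'\subseteq\V$ assumption harmless, i.e. noting that one can always pick a representative of $[S']$ whose global variables are a subset of those of a chosen representative of $[S]$ --- this is exactly the remark preceding the theorem statement and requires only Neutrality of Redundant Global Variables (Definition~\ref{def:s_equiv}.\ref{cond:se_global}). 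Everything else reduces to routine first-order reasoning inside $CT$.
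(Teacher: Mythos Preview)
Your proposal is correct and follows essentially the same approach as the paper's own proof: both directions are argued exactly as you describe, with the forward direction verifying that the criterion is preserved by the two generating axioms and by the partial-order closure conditions (reflexivity and antisymmetry via Theorem~\ref{thm:sq_crit}, transitivity by chaining with renamed-apart local variables), and the backward direction proceeding through Weakening, then Equality-as-Substitution plus Application of $CT$, then repeated Omission of Global Variables. Your anticipated obstacles (variable bookkeeping in transitivity, the harmlessness of $\V'\subseteq\V$) are precisely the points the paper handles in the same way.
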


Corollary~\ref{crl:ent-equiv} is a direct consequence of Theorem~\ref{thm:sq_crit}
and Theorem~\ref{thm:se_crit}. It establishes the relationship between state
equivalence and state entailment.

\begin{corollary}[($\equiv\Leftrightarrow\triangleleft\ent$)]
\label{crl:ent-equiv}
  For arbitrary CHR states $S,T$, state equivalence $S\equiv_e T$ holds \emph{if
  and only if} both $S\ent T$ and $T\ent S$ hold.
\end{corollary}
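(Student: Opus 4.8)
The plan is to obtain both directions almost for free from the fact that, by Definition~\ref{def:s_entail}, $\ent$ is a \emph{partial order} on equivalence classes of CHR states, hence in particular reflexive and anti-symmetric; I would then spell out the concrete version through the two criterion theorems, since the corollary is advertised as a consequence of them. For the direction $S\equiv_e T \Rightarrow (S\ent T \text{ and } T\ent S)$: here $S\equiv_e T$ means $[S]=[T]$, so reflexivity of $\ent$ gives $[S]\ent[S]$, and replacing either occurrence of $[S]$ by $[T]$ yields $[S]\ent[T]$ and $[T]\ent[S]$, i.e.\ $S\ent T$ and $T\ent S$. No further work is needed here.

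For the converse, abstractly it is just anti-symmetry of the partial order $\ent$: from $[S]\ent[T]$ and $[T]\ent[S]$ we get $[S]=[T]$, that is $S\equiv_e T$. To make the use of Theorems~\ref{thm:sq_crit} and \ref{thm:se_crit} explicit instead, I would pick representatives $S=\state{\U;\B;\V}$, $T=\state{\U';\B';\V}$ sharing a common global-variable set $\V$ and with local variables $\bl,\bl'$ renamed apart, then unfold the two entailments via Theorem~\ref{thm:se_crit}: $S\ent T$ gives $CT\models\forall(\B\rightarrow\exists\bl'.((\U\de\U')\wedge\B'))$ and $T\ent S$ gives $CT\models\forall(\B'\rightarrow\exists\bl.((\U'\de\U)\wedge\B))$. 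Since the matching relation $\de$ is symmetric, $\U'\de\U$ and $\U\de\U'$ coincide, so the conjunction of these two $CT$-judgements is literally the criterion of Theorem~\ref{thm:sq_crit}, which therefore yields $S\equiv_e T$.

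The only point that needs care in this concrete route — and the one I expect to be the main (minor) obstacle — is the global-variable bookkeeping: Theorem~\ref{thm:se_crit} applies only when the global variables of the entailed state are contained in those of the entailing state, so \emph{a priori} the representatives witnessing $S\ent T$ and those witnessing $T\ent S$ need not be the same, whereas Theorem~\ref{thm:sq_crit} requires a common global-variable set. I would resolve this exactly as in the remark preceding Theorem~\ref{thm:se_crit}: by Def.~\ref{def:s_equiv}.\ref{cond:se_global} one may pad the global-variable set of either state with redundant variables without leaving its equivalence class and without affecting either criterion, so both states can be brought to one and the same $\V$ before applying the theorems. If one is content with the purely abstract argument (reflexivity and anti-symmetry of the partial order $\ent$), this subtlety does not arise at all and there is essentially nothing left to prove.
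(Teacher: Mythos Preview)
Your proposal is correct and matches the paper's approach: the paper states only that the corollary is a direct consequence of Theorem~\ref{thm:sq_crit} and Theorem~\ref{thm:se_crit}, which is exactly your concrete route via the two criteria. Your additional abstract argument from reflexivity and anti-symmetry of the partial order is in fact the cleanest way to see it (note that anti-symmetry was itself established inside the proof of Theorem~\ref{thm:se_crit} by appeal to Theorem~\ref{thm:sq_crit}, so the two routes coincide), and your handling of the global-variable bookkeeping is appropriate.
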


Lemma~\ref{lemma:exchange} establishes an important relationship between state transition and state entailment.

\begin{lemma}
\label{lemma:exchange}
  Let $S,U,T$ be CHR states. If $S\ent
  U$ and $U\mapsto^r T$ then there exists a state $V$ such that
  $S\mapsto^r V$ and $V\ent T$.

\begin{proof}
 Let $S=\state{\U;\B;\V}$ and
let $\by_S,\by_U,\by_T$ be the
    local variables of $S,U,T$.
  By definition, $U\mapsto^r T$ implies that there is a variant of a CHR rule
$r\ @\ (H_1\setminus H_2\Leftrightarrow G\mid
	B_b\wedge B_u)$ such that
    $[U]=[\state{H_1\wedge H_2\wedge{\hat\U};G\wedge\hat\B;\hat\V}]$ and
    $[T]=[\state{H_1\wedge
B_u\wedge{\hat\U};G\wedge{B_b}\wedge{\hat\B};\hat\V}]$.

  Now let $V=\state{H_1\wedge
B_u\wedge{\hat\U};G\wedge{B_b}\wedge{\hat\B}\wedge(\U\de(H_1\wedge
H_2\wedge\hat\U)\wedge\B;\hat\V}$.
  From $[S]\ent[U]$ follows by Thm.~\ref{thm:se_crit}:
    $
    	CT\models \forall(
    		\B\rightarrow\exists\by_U.(
    			(\U\de(H_1\wedge H_2\wedge\hat\U))\wedge G\wedge\hat\B
    		)
    	)
    $.
  Assuming w.l.o.g. that $\by_S\cap\by_U=\emptyset$, we can apply
Def.~\ref{def:s_equiv}.\ref{cond:se_appct} to get
    $
    	S \equiv \state{\U;\B\wedge G\wedge\hat\B\wedge(\U\de(H_1\wedge
H_2\wedge\hat\U));\V}
    $
  and then
    $
    	S \equiv \state{(H_1\wedge H_2\wedge\hat\U;\B\wedge
G\wedge\hat\B\wedge(\U\de(H_1\wedge H_2\wedge\hat\U));\V}
    $.
  According to Def.~\ref{def:chr-op-sem}, we have $S\mapsto_r V$. We apply
Def.~\ref{def:s_entail} to show that $V\ent T$.
\end{proof}
\end{lemma}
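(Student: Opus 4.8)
The strategy is to unfold the transition $U\mapsto^r T$ into an explicit rule variant together with canonical representatives of $[U]$ and $[T]$, to convert the hypothesis $S\ent U$ into a $CT$-entailment via the criterion of Theorem~\ref{thm:se_crit}, to use that entailment to rewrite $S$ (under $\equiv_e$) into a state on which the \emph{same} rule variant fires, and finally to read off $V\ent T$ by comparing built-in stores. Concretely, write $S=\state{\U;\B;\V}$ and let $\by_S,\by_U,\by_T$ be the local variables of $S,U,T$. By Definition~\ref{def:chr-op-sem}, the transition $U\mapsto^r T$ supplies a variant $r\ @\ H_1\setminus H_2\Leftrightarrow G\mid B_b\wedge B_u$ of a rule, with local variables $\by_r$, and representatives
\[
 [U]=[\state{H_1\wedge H_2\wedge\hat\U;G\wedge\hat\B;\hat\V}],\qquad [T]=[\state{H_1\wedge B_u\wedge\hat\U;G\wedge B_b\wedge\hat\B;\hat\V}].
\]
Using Lemma~\ref{lem:se_derived}.\ref{prop:se_rename}, Definition~\ref{def:s_equiv}.\ref{cond:se_global} and a judicious choice of representatives I would arrange that $\hat\V\subseteq\V$, that $\by_r$ and $\by_U$ are fresh for $vars(\U,\B,\V)$, and that $\by_S$ and $\by_U$ are renamed apart.

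\textbf{Rewriting $S$.} Applying Theorem~\ref{thm:se_crit} to $S\ent U$ yields
\[
 CT\models\forall(\B\rightarrow\exists\by_U.((\U\de(H_1\wedge H_2\wedge\hat\U))\wedge G\wedge\hat\B)).
\]
The converse implication being trivial (the right-hand side contains $\B$ up to the matching equations, and $\by_U$ is fresh for $\B$), Definition~\ref{def:s_equiv}.\ref{cond:se_appct} gives
\[
 S\equiv\state{\U;\B\wedge G\wedge\hat\B\wedge(\U\de(H_1\wedge H_2\wedge\hat\U));\V}.
\]
Since the built-in store now asserts the matching $\U\de(H_1\wedge H_2\wedge\hat\U)$ --- which by the definition of matching reduces to a conjunction of term equations --- iterated use of Definition~\ref{def:s_equiv}.\ref{cond:se_subst} (together with Lemma~\ref{lem:se_derived}.\ref{prop:se_partial}) rewrites the user-defined store so as to make the rule head syntactically present:
\[
 S\equiv\state{H_1\wedge H_2\wedge\hat\U;\B\wedge G\wedge\hat\B\wedge(\U\de(H_1\wedge H_2\wedge\hat\U));\V}.
\]

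\textbf{Firing the rule and concluding.} The built-in store of this representative entails $G$ and hence is consistent, so the side condition of Definition~\ref{def:chr-op-sem} is satisfied and the rule fires: $S\mapsto^r V$ with
\[
 V=\state{H_1\wedge B_u\wedge\hat\U;B_b\wedge G\wedge\hat\B\wedge(\U\de(H_1\wedge H_2\wedge\hat\U))\wedge\B;\V}.
\]
To obtain $V\ent T$ I first drop the global variables in $\V\setminus\hat\V$ by Definition~\ref{def:s_entail}.\ref{cond:sn_omit}; the resulting state has the same user-defined store as $T$ and a built-in store that contains that of $T$ as a conjunct, so weakening of the built-in store --- Definition~\ref{def:s_entail}.\ref{cond:sn_wea} --- yields $V\ent T$.

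\textbf{Main obstacle.} The equivalence manipulations are routine; the real difficulty is the variable bookkeeping in the second step. The existential block $\exists\by_U$ produced by Theorem~\ref{thm:se_crit} must coincide with the strictly local variables over which Definition~\ref{def:s_equiv}.\ref{cond:se_appct} quantifies, which is exactly what forces the preliminary renaming of $\by_S$ and $\by_U$ apart and the padding that gives $\hat\V\subseteq\V$; one must also check that the equations in $\U\de(H_1\wedge H_2\wedge\hat\U)$ can be oriented so that Definition~\ref{def:s_equiv}.\ref{cond:se_subst} genuinely turns $\U$ into $H_1\wedge H_2\wedge\hat\U$. A final point worth noting is that the transition $S\mapsto^r V$ presupposes that $\B$ is consistent, i.e.\ that $S$ is not a failed state; in the intended applications $S$ is a computable state, for which this holds automatically.
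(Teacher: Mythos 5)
Your proposal follows essentially the same route as the paper's own proof: unfold the transition $U\mapsto^r T$ into a rule variant with representatives of $[U]$ and $[T]$, convert $S\ent U$ into a $CT$-entailment via Theorem~\ref{thm:se_crit}, use Definition~\ref{def:s_equiv}.\ref{cond:se_appct} and Definition~\ref{def:s_equiv}.\ref{cond:se_subst} to rewrite $S$ so that the same rule head becomes syntactically present, fire the rule to obtain $V$, and conclude $V\ent T$ by weakening the built-in store and omitting global variables. Your write-up is, if anything, slightly more explicit than the paper's about the variable bookkeeping and about the consistency caveat for failed states, neither of which changes the substance of the argument.
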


In anticipation of Section~\ref{sec:pt-completeness}, the following example shows
how the notion of entailment fills the gap between the computability
relation between states and the judgement relation between their
respective linear-logic readings.

\begin{example}
	\label{example:pt-sem-entailment}
In Example~\ref{example:pt-sem-completeness-problem}, we showed that the
following judgement, which does not correspond to any transition in CHR, is
 provable in our sequent calculus system:
\[
	\state{\top;x\doteq3;\{x\}}^L\vdash_\Sigma\state{\top;\top;\emptyset}^L
\]
We observe that the two states are connected by the entailment
relation:
\[
	\state{\top;x\doteq3;\{x\}}\ent\state{\top;\top;\emptyset}
\]
In the following section, we will show that state entailment precisely covers
the discrepance between transitions in a CHR
program and judgements in its corresponding sequent
calculus system as exemplified in
Example~\ref{example:pt-sem-completeness-problem}
\end{example}

\subsection{Completeness of the Axiomatic Semantics}
	\label{sec:pt-completeness}

The notion of merging is an important tool for the proofs in this section. We
define it as follows:

\begin{definition}[($\cdot\diamond\cdot$)]
	\label{def:merging}
  Let $S=\state{\U;\B;\V}, S'=\state{\U';\B';\V}$ be CHR
  states that share the same set of global variables and whose local variables
  are renamed apart. Their \emph{merging} is defined as:
  \[
  	S\diamond S' ::=\state{\U\wedge\U';\B\wedge\B';\V}
  \]
\end{definition}

The following property assures that we can without loss of generality assume
the existence of $S\diamond T$ for any two states $S,T$:

\begin{property}
	\label{prop:generalityofmerging}
For any CHR states $S,T$, there exist states $S',T'$ where $S\equiv S',T\equiv
T'$ such that $S'\diamond T'$ exists.
\begin{proof}[sketch]
Lemma~\ref{lem:se_derived}.\ref{prop:se_rename} allows to rename the local
variables apart, and Def.~\ref{def:s_equiv}.\ref{cond:se_global} allows the
union of their respective sets of global variables.
\end{proof}
\end{property}

Lemma~\ref{lemma:merging_derived} states two properties of merging that will be
used in upcoming proofs:

\begin{lemma}[Properties of $\cdot\diamond\cdot$]
	\label{lemma:merging_derived}

Let $S,S',T$ be CHR states such that both $S\diamond T$ and $S'\diamond T$
exist. The following properties hold:

\begin{enumerate}
  \item \label{prop:mrg:entail}
  $
  	S\ent S' \quad\Rightarrow\quad S\diamond T\ent S'\diamond T
  $
  \item \label{prop:mrg:maps}
  $
  	S\mapsto^r S' \quad\Rightarrow\quad S\diamond T\mapsto^r S'\diamond T
  $
\end{enumerate}

\begin{proof}
Lemma~\ref{lemma:merging_derived}.\ref{prop:mrg:entail}: We assume w.l.o.g. that
the states $S,S',T$ share the same set of global variables. Let
$S=\state{\U;\B;\V},S'=\state{\U';\B';\V},T=\state{\U_T;\B_T;\V}$ with local vars
$\bl,\bl',\bl_T$.
From $S\ent S'$ follows by Thm.~\ref{thm:sq_crit}:
$
	CT\models \forall (\B \rightarrow \exists \bl'.((\U \de \U') \wedge \B'))
$.
As $\U_T \de \U_T$ is a tautology, we get $
	CT\models \forall (\B\wedge\B_T \rightarrow \exists \bl'.\exists\bl_T.((\U
\de \U')
	\wedge (\U_T \de \U_T) \wedge \B' \wedge \B_T))
$
which proves $S\diamond T \ent S'\diamond T$.

Lemma~\ref{lemma:merging_derived}.\ref{prop:mrg:maps}: We assume w.l.o.g. that
the states $S,S',T$ share the same set of global variables. According to
Def.~\ref{def:chr-op-sem}, there exists a variant of a CHR rule $r\ @\ H_1
\setminus H_2 \Leftrightarrow G\mid B_u \wedge B_b$, such that $S\equiv\state{H_1
\wedge H_2 \wedge \U;G\wedge\B;\V}$ and $S'\equiv \state{H_1 \wedge B_c \wedge
\U;G\wedge B_b\wedge\B;\V}$. By Prop.~\ref{prop:generalityofmerging}, there
exists a state $T'=\state{\U';\B';\V}$ such that $T'\equiv T$ whose local
variables are renamed apart from those of $S$ and $T$. By
Def.~\ref{def:chr-op-sem}, we get $S\diamond T\mapsto^r S'\diamond T$.
\end{proof}
\end{lemma}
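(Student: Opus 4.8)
The plan is to handle the two parts separately; in each I would first normalise variable names — using Lemma~\ref{lem:se_derived}.\ref{prop:se_rename} to rename the local variables of $S$, $S'$ and $T$ pairwise apart, and Property~\ref{prop:generalityofmerging} to make all three share the global variables $\V$ — so that I may write $S=\state{\U;\B;\V}$, $S'=\state{\U';\B';\V}$ and $T=\state{\U_T;\B_T;\V}$ with mutually disjoint local variables $\bl,\bl',\bl_T$, and so that $S\diamond T$ and $S'\diamond T$ are literally $\state{\U\wedge\U_T;\B\wedge\B_T;\V}$ and $\state{\U'\wedge\U_T;\B'\wedge\B_T;\V}$.

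For part~\ref{prop:mrg:entail} I would push everything through the decidable criterion of Theorem~\ref{thm:se_crit}. From $S\ent S'$ that criterion yields $CT\models\forall(\B\rightarrow\exists\bl'.((\U\de\U')\wedge\B'))$. The one genuinely new observation, which I would isolate as a small sublemma, is that matching is compatible with conjunction of user-defined constraints: $(\U\de\U')$ entails $((\U\wedge\U_T)\de(\U'\wedge\U_T))$, because any permutation witnessing $\U\de\U'$ extends by the identity permutation on the $\U_T$-part to one witnessing the larger matching, and the extra componentwise equalities contributed by that identity part are precisely the tautology $\U_T\de\U_T$. Granting this, I conjoin $\B_T$ to the antecedent and $\exists\bl_T.\B_T$ to the consequent of the criterion for $S\ent S'$ — legitimate since $\bl_T$ does not occur in $\U,\U',\B,\B'$ and $\bl,\bl'$ do not occur in $\B_T$ — and obtain $CT\models\forall((\B\wedge\B_T)\rightarrow\exists\bl'\bl_T.(((\U\wedge\U_T)\de(\U'\wedge\U_T))\wedge\B'\wedge\B_T))$, which by Theorem~\ref{thm:se_crit} is exactly $S\diamond T\ent S'\diamond T$ (modulo the routine renaming of the $T$-copy's local variables apart).

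For part~\ref{prop:mrg:maps} I would argue directly from Definition~\ref{def:chr-op-sem}. The transition $S\mapsto^r S'$ supplies a rule variant $r\ @\ H_1\setminus H_2\Leftrightarrow G\mid B_u\wedge B_b$ with local variables $\by_r$ such that $S\equiv\state{H_1\wedge H_2\wedge\U;G\wedge\B;\V}$, $S'\equiv\state{H_1\wedge B_u\wedge\U;G\wedge B_b\wedge\B;\V}$, and $CT\models\exists(G\wedge\B)$. The decisive move is to choose representatives and the rule variant so that $\by_r$ is fresh not only with respect to $vars(H_1,H_2,\U,\B,\V)$ but also with respect to the variables of $T$ — which is exactly the latitude that Definition~\ref{def:chr-op-sem} grants and Property~\ref{prop:generalityofmerging} lets us exploit. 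Then $S\diamond T\equiv\state{H_1\wedge H_2\wedge\U\wedge\U_T;G\wedge\B\wedge\B_T;\V}$, the very same rule variant is applicable to it, and firing it removes $H_2$, keeps $H_1$, and adds $B_u,B_b$ while leaving $\U_T,\B_T$ untouched, producing $\state{H_1\wedge B_u\wedge\U\wedge\U_T;G\wedge B_b\wedge\B\wedge\B_T;\V}\equiv S'\diamond T$.

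I expect the variable bookkeeping in part~\ref{prop:mrg:maps} to be the point requiring the most care: one must be explicit that $\by_r$ can be separated from $T$'s variables — otherwise the merged state would not admit the transition via that particular variant — and that the equivalences chosen for $S$ and $S'$ use merge-compatible representatives, so that $S\diamond T$ and $S'\diamond T$ really are the states displayed. A more delicate point hides in the transition's applicability condition $CT\models\exists(G\wedge\B\wedge\B_T)$ for the merged state: if $\B$ and $\B_T$ are jointly inconsistent then $S\diamond T$ is a failed state, which admits no transition at all, so the statement is to be read as applying when the merged built-in store is consistent (equivalently, when $S\diamond T$ is not failed) — and this is the case at every later invocation of the lemma. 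Part~\ref{prop:mrg:entail} is free of this subtlety, since there entailment is read off the $CT$-criterion without any satisfiability requirement.
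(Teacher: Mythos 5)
Your proof follows essentially the same route as the paper's: part~1 is pushed through the entailment criterion of Theorem~\ref{thm:se_crit} by conjoining $\B_T$ and the tautological matching $\U_T\de\U_T$ (your explicit sublemma that $(\U\de\U')$ extends to $((\U\wedge\U_T)\de(\U'\wedge\U_T))$ is just a spelled-out version of what the paper leaves implicit), and part~2 applies the very same rule variant to the merged state after arranging for the rule's local variables to be fresh with respect to $T$. Your closing caveat on part~2 is well taken and in fact sharper than the paper's own proof, which silently assumes the applicability condition $CT\models\exists(G\wedge\B\wedge\B_T)$ for the merged state: when $\B\wedge\B_T$ is inconsistent, $S\diamond T$ is a failed state and admits no transition, so the implication in part~2 holds only under the consistency proviso you state --- a proviso that is harmless in the lemma's later invocations, where a failed merged state entails everything anyway.
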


Lemma~\ref{lemma:proof-structure} sets the stage for the completeness theorem:

\begin{lemma}
\label{lemma:proof-structure}
Let $\pi$ be some cut-reduced proof of a sequent $S^L\vdash_\Sigma T^L$, where
$S,T$ are arbitrary CHR states and $\Sigma=\Sp\cup\Sct\cup\Seq$ for a program
$\bbP$ and a constraint theory $CT$. Any formula $\alpha$ in $\pi$ is either of the form
$\alpha=S_\alpha^L$ where $S_\alpha$ is a CHR state or of the form
$\alpha=c_b(\bt)$ where $c_b(\bt)$ is some built-in constraint.
\begin{proof}
We observe that both the root of $\pi$ and all proper axioms in $\Sigma$ are of
the form $U_1^L\vdash U_2^L$ where $U_1,U_2$ are CHR states. The subformula property hence
guarantees that every formula $\alpha$ in $\pi$ is a subformula of the logical
reading $U^L$ of some CHR state $U$. The general form of such a logical reading
is $U^L=\exists l_1.\ldots.\exists l_n.(c^1_u(\bt_1)\x\ldots\x c^m_u(\bt_m))\x
(!c^1_b(\bt'_1)\x\ldots\x !c^k_b(\bt'_k))$ where $l_1,\ldots,l_n$ are the local
variables of $U$, $c^1_u(\bt_1),\ldots, c^m_u(\bt_m)$ are its user-defined
constraints and $c^1_b(\bt_1),\ldots, c^k_b(\bt_k)$ are its built-in constraints.
We observe that any subformula $\alpha$ of $U^L$ is either of the form
$\alpha=S_\alpha^L$ for some CHR state $S_\alpha$ or of the form
$\alpha=c_b(\bt)$, where $c_b(\bt)$ is a built-in constraint.
\end{proof}
\end{lemma}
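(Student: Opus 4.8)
The plan is to read the statement off the \emph{weak subformula property} quoted at the end of Section~\ref{sec:ill}: in a cut-reduced proof of $\Gamma\vdash_\Sigma\beta$, every formula occurring in the proof is, modulo variable renaming, a subformula of $\Gamma$, of $\beta$, or of one side of some proper axiom in $\Sigma$. Accordingly the argument has two parts: (a) show that the ``anchor'' formulas -- the two sides of the root sequent and the two sides of every axiom in $\Sp\cup\Sct\cup\Seq$ -- are each the linear-logic reading $U^L$ of some CHR state $U$; and (b) enumerate the subformulas of an arbitrary such $U^L$ and verify that each is either again a state reading or a bare built-in atom. Since state readings and bare built-in atoms are both closed under variable renaming, composing (a) and (b) with the subformula property then yields the claim verbatim.

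For part (a), the root is $S^L\vdash_\Sigma T^L$ by hypothesis, so its two sides are state readings by construction. For the proper axioms I would go through the three sets in turn. Each $\Sct$-axiom $\exists\bx.\B^L\vdash\exists\bx'.\B'^L$ has sides that are, absorbing the neutral $\lone$ of the empty user store, the readings of $\state{\top;\B;vars(\B)\setminus\bx}$ and $\state{\top;\B';vars(\B')\setminus\bx'}$. Each $\Seq$-axiom has both sides of the form $c_u(\ldots)\x !(t_j\doteq u)$, which is the reading of the state with user store $c_u(\ldots)$, built-in store $t_j\doteq u$, and all variables global. Each $\Sp$-axiom $H_1^L\x H_2^L\x G^L\vdash H_1^L\x\exists\by_r.(B_b^L\x B_u^L\x G^L)$ has a left side that is the reading of $\state{H_1\wedge H_2;G;\V}$ for a suitable $\V$, and -- because $\by_r$ is disjoint from $vars(H_1,H_2)$, so $\exists\by_r$ may be pulled outward past the $H_1^L$ conjunct -- a right side that is the reading of $\state{H_1\wedge B_u;B_b\wedge G;\V}$. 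Hence every anchor is $U^L$ for some CHR state $U$.

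For part (b), a state reading has the shape
\[
U^L=\exists l_1.\cdots.\exists l_n.\bigl(c^1_u(\bt_1)\x\cdots\x c^m_u(\bt_m)\x !c^1_b(\bt'_1)\x\cdots\x !c^k_b(\bt'_k)\bigr),
\]
where $l_1,\dots,l_n$ are the local variables of $U$ and an empty user or built-in store contributes a $\lone$ factor. Its subformulas arise by (i) stripping some outer $\exists l_i$, which yields a state reading with more variables declared local; (ii) restricting to a sub-conjunction of the listed atoms, which is the reading of the state retaining exactly those atoms; (iii) reaching a single atom $c^i_u(\bt_i)$ (the reading of a one-atom user state) or a single $!c^j_b(\bt'_j)$ (the reading of a one-atom built-in state); or (iv) descending once more into such a $!c^j_b(\bt'_j)$ to reach the bare built-in atom $c^j_b(\bt'_j)$, which is the sole exceptional case. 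This exhausts the subformulas of $U^L$, so with the subformula property every $\alpha$ in $\pi$ is of one of the two advertised shapes.

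The step I expect to be delicate is part (a), and within it the $\Sp$-axioms specifically: the translation $\state{\U;\B;\V}^L=\exists_{-\V}.\U^L\x\B^L$ puts the existential block at the outside, whereas $H_1^L\x\exists\by_r.(B_b^L\x B_u^L\x G^L)$ carries the quantifier around an inner conjunct. I would settle this via freshness of $\by_r$ relative to $H_1$, which lets the quantifier commute outward and produces a genuine state reading with the same set of subformulas, so the subformula property still applies unchanged; and I would record the parallel bookkeeping for the $\lone$ factors coming from empty stores and for $\bot^L=\lzero$, all of which fold into the same two-case classification.
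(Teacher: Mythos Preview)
Your proposal is correct and follows essentially the same approach as the paper: invoke the weak subformula property, verify that the root sequent and every proper axiom in $\Sp\cup\Sct\cup\Seq$ have sides that are state readings, and then enumerate the subformulas of a generic $U^L$. One small slip worth fixing: in case (i), stripping an outer $\exists l_i$ makes $l_i$ free and hence \emph{global} in the resulting state reading, so you get \emph{fewer} local variables, not more---but this does not affect the conclusion that the result is still some $S_\alpha^L$.
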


The completeness of our semantics is formulated in
Theorem~\ref{thm:completeness}:

\begin{theorem}[Completeness]
\label{thm:completeness}
   Let $S,T$ be CHR states, $\bbP$ be a CHR program, $CT$ be a constraint theory
   and let  $\Sigma=\Sp\cup\Sct\cup\Seq$. If $S^L\vdash_\Sigma T^L$, then there
   exists a state $T'$ such that $S\mapsto^{*} T'$ and $T'\ent T$ in $\bbP$.
\begin{proof}
To preserve of clarity, we will omit the set $\Sigma$ of proper axioms from the
judgement symbol $\vdash_\Sigma$. Throughout the proof, $\cD_n(U,V)$ denotes
the fact that for CHR states $U,V$, there exist states $U_1,\ldots, U_n$ such that:
	\[
		U \mapsto U_1 \ldots \mapsto U_n \ent V
	\]
Consequently, $\cD_0(U,V)$ equals $U\ent V$.

Secondly, we define an operator on formulas analogoous to merging on states: For
any two (possibly empty) sequences of variables $\bx,\by$ and quantifier-free formulas $\alpha,\beta$ let
$\exists\bx.\alpha\Diamonddot\exists\by.\beta::=\exists\bx.\exists\by.\alpha\otimes\beta$.
We observe that for arbitrary CHR states $U,V$ where $U\diamond V$ exists, we
have $U^L\Diamonddot V^L \equiv (U\diamond V)^L$.
In the following, we assume w.l.o.g. that all existentially quantified variables
in the antecedent of a sequent occuring in $\pi$ are renamed apart. Hence, for
every two formulas of the form $U^L,V^L$ occurring in
the antecedent of one sequent in $\pi$, both $U\diamond V$ and $U^L\Diamonddot V^L$
exist.

We introduce a completion function $\eta$, defined by the following table,
where $U$ is a CHR state, $\cb$ is a built-in constraint and
$\Gamma\vdash\alpha$ is a sequent:

	\medskip
	\begin{tabular}{l @{\hspace{1mm} $::=$ \hspace{1mm}} l}
	  $\eta(U^L)$ & $U^L$ \\
	  $\eta(\cb)$ & $!\cb$ \\
	  $\eta(\gamma,\Gamma)$ & $\eta(\gamma)\Diamonddot\eta(\Gamma)$ \\
	  $\eta(\Gamma\vdash\alpha)$ & $\eta(\Gamma)\vdash\eta(\alpha)$
	  	\quad for non-empty $\Gamma$ \\
	  $\eta(\vdash\alpha)$ & $\lone\vdash\eta(\alpha)$ \\
	\end{tabular}

\medskip
For a sequent $\Gamma\vdash\alpha$, we call $\eta(\Gamma\vdash\alpha)$ the
$\eta$-completion of $\Gamma\vdash\alpha$. From Lemma~\ref{lemma:proof-structure}
follows that for every sequent $\Gamma\vdash\alpha$ in $\pi$, its
$\eta$-completion $\eta(\Gamma\vdash\alpha)$ is of the form $U^L\vdash V^L$ for
some CHR states $U,V$. For example,
	\begin{align*}
	\eta(\exists y.c_u(x,y),x\doteq 1 \vdash \exists z.c_u(1,z)) &
		 =\exists y.c_u(x,y)\x \bang x\doteq 1\vdash \exists
		z.c_u(1,z)\\
	&
		= \state{c_u(x,y)\wedge x\doteq
		1;\{x\}}^L\vdash\state{c_u(1,z);\top;\emptyset}^L
	\end{align*}

	We show by induction over the depth of $\pi$ that for every such
	$U^L\vdash V^L$, we have $\cD_n(U,V)$, where $n$ is the number of
	$\Sp$-axioms in the proof of $U^L\vdash V^L$.

	\textbf{Base case:}
	In case the proof of $S^L\vdash T^L$ consists of a single leaf, it is either an
	instance of a $(Identity)$, $(R\lone)$, or $(L\lzero)$, or a proper axiom
	$(\Gamma\vdash\alpha)\in(\Seq\cup\Sct\cup\Sp)$.

	\begin{itemize}
	    \item $(Identity)$, $(R\lone)$, $(L\lzero)$:
	    	\[
	 			\infer[(Identity)]
					{\alpha\vdash\alpha}
					{}
				\quad
				\infer[(R\lone)]
					{\vdash\lone}
					{}
				\quad
				\infer[(L\lzero)]
					{\lzero\vdash\alpha}
					{}
			\]
	    In the case of $(Identity)$, we have $\eta(\alpha\vdash\alpha)=U^L\vdash
	    U^L$ for some CHR state $U^L$. In the case of $(R\lone)$, we have
	    $\eta(\vdash\lone)=U^L\vdash U^L$ for $U^L=\state{\top;\top;\V}$. As the
	    entailment relation is reflexive, we have $\cD_0(U,U)$. In the case of
	    $(L\lzero)$, we have $\eta(0\vdash\alpha)=U^L\vdash V^L$ where $U\equiv S_\bot$.
	    By  Def.~\ref{def:s_equiv}.\ref{cond:se_fail} and
	    Def.~\ref{def:s_entail}.\ref{cond:sn_wea}, we have that $U^L\ent V^L$ and
	    therefore $\cD_0(U,V)$.

		\item For a proper axiom $(\Gamma\vdash\alpha)\in(\Seq\cup\Sct)$ we have
		$\Gamma\vdash\alpha=U^L\vdash V^L$ where $U,V$ are CHR states such
		that $U\ent V$ and therefore $\cD_0(U,V)$.

		\item For a proper axiom $(\Gamma\vdash\alpha)\in\Sp$ we have
		$\Gamma\vdash\alpha=U^L\vdash V^L$ where $U,V$ are CHR states such that
		$U\mapsto \hat V$ and therefore $\cD_1(U,V)$.
	\end{itemize}

	\textbf{Induction step:} We distinguish nine cases according to
	which is the last inference rule applied in the proof. Cut reduction
	implies that it must be one of $(Cut)$, $(L\otimes)$, $(R\otimes)$,
	$(Weakening)$, $(Dereliction)$, $(Contraction)$, $(R!)$, $(L\exists)$,
	and $(R\exists)$.
	\begin{itemize}
	\item $(L\otimes),(Dereliction),(R!)$: For $(Dereliction)$ and $(R!)$,
	the banged formula must be an atomic built-in constraint $\cb$:
	\[
		\infer[(L\x)]
			{\Gamma,\alpha\x \beta\vdash\gamma}
			{\Gamma,\alpha,\beta\vdash\gamma}
		\quad
		\infer[(Dereliction)]
			{\Gamma,!\cb\vdash\beta}
			{\Gamma,\cb\vdash\beta}
		\quad
		\infer[(R!)]
		{\bang\Gamma\vdash !\cb}
		{\bang\Gamma\vdash \cb}
	\]
	Since $\eta(\alpha,\beta)=\eta(\alpha\x\beta)$ and $\eta(!\cb)=\eta(\cb)$, each
	of these rule is invariant to the $\eta$-completion of the sequent, thus
	trivially satisfying the hypothesis.
	\item $(L\lone)$:
	\[
		\infer[(L\lone)]
			{\Gamma,\lone\vdash\alpha}
			{\Gamma \vdash \alpha}
	\]
	We assume that $S_\Gamma=\state{\C_\Gamma,\B_\Gamma,\V_\Gamma}$ and $S_\alpha$
	are CHR states such that $S^L_\Gamma=\eta(\Gamma)$, $S^L_\alpha=\eta(\alpha)$,
	and $\cD_n(S_\Gamma,S_\alpha)$. Then by
	Def.~\ref{def:s_equiv}.\ref{cond:se_appct}, we have
	$\cD_n(S'_\Gamma,S_\alpha)$ where
	$S'_\Gamma=\state{\U_\Gamma,\B_\Gamma\wedge\top,\V_\Gamma}$. As
	${S'}^L_\Gamma=\eta(\Gamma,\lone)$, this proves the hypothesis.
	\item $(Weakening)$: By Lemma~\ref{lemma:proof-structure}, we have that the
	introduced formula is of the form $\bang\cb$.
	\[
		\infer[(Weakening)]
			{\Gamma, !\cb\vdash\beta}
			{\Gamma\vdash\beta}
	\]
	We assume that $S_\Gamma=\state{\U_\Gamma,\B_\Gamma,\V_\Gamma}$ and $S_\beta$ are
	CHR states such that $S^L_\Gamma=\eta(\Gamma)$, $S^L_\beta=\eta(\beta)$ and
	$\cD_n(S_\Gamma,S_\beta)$. Furthermore,
	let $U=\state{\U_\Gamma;\B_\Gamma\wedge \cb;\V_\Gamma}$. Since
	$U^L=\eta(\Gamma,\bang \cb)$ and $U\ent S_\Gamma$,
	Lemma~\ref{lemma:exchange} proves the hypothesis.
	  \item $(Contraction)$: By the subformula property, we have that the
contracted
	  formula is of the form $!\cb$.
	  \[
	  \infer[(Contraction)]{\Gamma,!\cb\vdash\beta}
	      {\Gamma,!\cb,!\cb\vdash\beta}
	  \]
	  Since $\state{\U;\B\wedge \cb;\V}\ent\state{\U;\B\wedge\cb \wedge
\cb;\V}$
	  we prove the hypothesis analogously to $(Weakening)$.

	  \item $(R\otimes)$: The subformula property implies that the joined
	  formulas must be CHR states $U^L$ and $V^L$ without local variables:
	  \[
	  \infer[(R\otimes)]{\Gamma,\Delta\vdash U^L\x V^L}
	      {\Gamma\vdash U^L & \Delta\vdash V^L}
	  \]
	  Let $S_\Gamma,S_\Delta$ be CHR states such that
	  $S^L_\Gamma=\eta(\Gamma)$,
	  $S^L_\Delta=\eta(\Delta)$. The induction hypothesis gives us
	  $\cD_n(S_\Gamma, U)$ and $\cD_m(S_\Delta, V)$ for some $n,m$. By
	  Lemma~\ref{lemma:merging_derived}.\ref{prop:mrg:entail} and
	  Lemma~\ref{lemma:merging_derived}.\ref{prop:mrg:maps} we have
	  $\cD_n(S_\Gamma\diamond S_\Delta, U\diamond S_\Delta)$ and
	  $\cD_m(U\diamond S_\Delta, U\diamond V)$. By
Lemma~\ref{lemma:exchange},
	  we get $\cD_{n+m}(S_\Gamma\diamond S_\Delta, U\diamond V)$.

	  \item $(Cut)$: Since $\pi$ is a cut-reduced proof and all axioms are of the
	  form $U_1^L\vdash U_2^L$, the eliminated formula must be the logical reading of
	  a CHR state $U$:
	  \[
	  \infer[(Cut)]{\Gamma,\Delta\vdash \beta}
	      {\Gamma\vdash U^L & U^L, \Delta\vdash \beta}
	  \]
	  Let $S_\Gamma,S_\Delta,S_\beta$ be CHR states such that
$S^L_\Gamma=\eta(\Gamma)$,
	  $S^L_\Delta=\eta(\Delta)$, and $S^L_\beta=\eta(\beta)$. The induction
	  hypothesis gives us $\cD_n(S_\Gamma, U)$ and $\cD_m(U\diamond
S_\Delta,
	  S_\beta)$. Applying
Lemma~\ref{lemma:merging_derived}, we get
	  $\cD_n(S_\Gamma\diamond S_\Delta, U\diamond S_\Delta)$. By
Lemma~\ref{lemma:exchange},
	  we get $\cD_{n+m}(S_\Gamma\diamond S_\Delta, S_\beta)$ which proves
	  the hypothesis.

	  \item $(L\exists)$: In the preconditional sequent, the quantified
	  variable $x$ is by definition replaced by a
	  fresh constant $a$ that does not occur in $\Gamma$, $\alpha$, or
	  $\beta$:
	  \[
	  \infer[(L\exists)]{\Gamma,\exists
	  x.\alpha\vdash\beta} {\Gamma,\alpha\subxa\vdash\beta}
	  \]
	  Let $U=\state{\U\subxa;\B\subxa;\V\cup\{a\}}$ and $S_\beta$ be CHR
	  states such that
	  $U^L=\eta(\Gamma,\alpha\subxa)$, $S_\beta^L=\eta(\beta)$, and
	  $x\not\in\V$. The
	  definition of state equivalence gives us $U\equiv \state{\U;\B\wedge
	  x=a;\V\cup\{a\}}$. Furthermore, we have $\eta(\Gamma,\exists
	  x.\alpha)=\state{\U,\B,\V}^L$.
	  By the induction hypothesis, we have states
	  $U_1,\ldots,U_n$
	  such that $U\mapsto^{r_1} U_1\mapsto^{r_2}\ldots\mapsto^{r_n} U_n\ent
S_\beta$ where
	  $U_i=\state{\U_i;\B_i\wedge x=a;\V\cup\{a\}}$ for $\inintv{i}{n}$.
	Neither the binding $x=a$ nor the set of global variables affect rule
	applicability. Hence, we can construct an analogous derivation
	$\state{\U;\B;\V}\mapsto^{r_1} U'_1\mapsto^{r_2}\ldots\mapsto^{r_n} U'_n$ where
	$U'_i=\state{\U_i;\B_i;\V}$ for $\inintv{i}{n}$. Since $U_n\ent S_\beta$ and $a$
	must not occur in $\beta$, we also have have $\state{\U_n;\B_n;\V}\ent S_\beta$.
	  Therefore, we have $\cD_n(\state{\U;\B;\V},S_\beta)$. As
	  $\eta(\Gamma,\exists x.\alpha)=\state{\U;\B;\V}^L$, this proves the
hypothesis.

	  \item $(R\exists)$: By definition, the quantified variable $x$
	  substitutes an
	  arbitrary term $t$.
		\[
			\infer[(R\exists)]{\Gamma\vdash\exists x.\beta}
	            {\Gamma\vdash\beta[x/t]}
	  	\]
	  Let $S_\Gamma,U,V$ be CHR states such that $S_\Gamma^L=\eta(\Gamma)$,
	  $U^L=\eta(\beta\subxt)$, and $V^L=\eta(\exists x.\beta)$. By the
	  induction
	  hypothesis we have $\cD_n(S_\Gamma,U)$ for some $n$. Let
	  $V=\state{\U;\B;\V}$ and $U=\state{\U\subxt;\B\subxt;\{x\}\cup\V}$.
	  We have
	$U
		\equiv \state{\U;\xet\wedge\B;\{x\}\cup\V}
		\ent \state{\U;\xet\wedge\B;\V}
		\ent \state{\U;\B;\V}
		\equiv V
	$,
	and therefore, $\cD_n(S_\Gamma,V)$.
	\end{itemize}
	Finally, we have $\cD_N(S,T)$ for some $N$, i.e. there exist states
	$S_1,\ldots, S_N$ such that:
	\[
		S \mapsto S_1 \mapsto \ldots \mapsto S_N \ent T
	\]
	It follows that for $T'=S_N$, we have $S\mapsto^{*}\hT$ and $\hT\ent T$.
\end{proof}
\end{theorem}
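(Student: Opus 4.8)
The plan is to start from a \emph{cut-reduced} proof $\pi$ of $S^L\vdash_\Sigma T^L$, which exists by the cut-elimination Theorem~\ref{thm:cut-elim}, and to reconstruct a CHR derivation out of $S$ by descending through $\pi$. The enabling observation is Lemma~\ref{lemma:proof-structure}: since the root and every proper axiom of $\Sigma$ is a sequent of the shape $U_1^L\vdash U_2^L$, the subformula property forces each formula in $\pi$ to be either the reading $U^L$ of a CHR state or a bare atomic built-in $\cb$. A sequent in $\pi$ may thus carry several antecedent formulas and stray unbanged built-ins, so it is not literally of the form $U^L\vdash V^L$; I would therefore define a \emph{completion} map $\eta$ that re-bangs bare built-ins, fuses all antecedent formulas by the formula-level analogue of the state merging $\diamond$ of Def.~\ref{def:merging} (with existentials pulled to the front), and pads an empty antecedent with $\lone$. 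By Lemma~\ref{lemma:proof-structure} again, $\eta(\Gamma\vdash\alpha)$ is always of the form $U^L\vdash V^L$ for some CHR states $U,V$.

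The core is an induction on the depth of $\pi$ establishing: whenever a sequent of $\pi$ has $\eta$-completion $U^L\vdash V^L$, there is a CHR derivation $U\mapsto U_1\mapsto\cdots\mapsto U_n\ent V$, where $n$ counts the $\Sp$-axioms used above that sequent. For the base case I would inspect the four kinds of leaves: $(Identity)$ and $(R\lone)$ give the length-$0$ derivation by reflexivity of $\ent$; $(L\lzero)$ likewise, since its antecedent is a failed state and Def.~\ref{def:s_equiv}.\ref{cond:se_fail} together with Def.~\ref{def:s_entail}.\ref{cond:sn_wea} make a failed state entail everything; an axiom of $\Seq\cup\Sct$ is a sequent $U^L\vdash V^L$ with $U\ent V$ by Theorem~\ref{thm:se_crit}; and an $\Sp$-axiom is exactly one rule firing, giving a length-$1$ derivation. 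For the induction step, cut reduction leaves only $(Cut)$, $(L\otimes)$, $(R\otimes)$, $(L\lone)$, $(Weakening)$, $(Dereliction)$, $(Contraction)$, $(R!)$, $(L\exists)$, and $(R\exists)$. The rules $(L\otimes)$, $(Dereliction)$, $(R!)$ are immediate because $\eta$ identifies $\alpha\x\beta$ with $\alpha,\beta$ and $\bang\cb$ with its underlying $\cb$, so the completion is unchanged; $(L\lone)$ merely adjoins $\top$ to the built-in store, absorbed by Def.~\ref{def:s_equiv}.\ref{cond:se_appct}; and $(Weakening)$ and $(Contraction)$ introduce a banged built-in on the left, producing a state that entails the previous antecedent, so Lemma~\ref{lemma:exchange} transports the inductively-built derivation back along the entailment.

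The substantive cases are $(R\otimes)$ and $(Cut)$: each has two premises yielding derivations that end (modulo $\ent$) in two different states, and I would splice them using monotonicity of state merging in both $\ent$ and $\mapsto$ (Lemma~\ref{lemma:merging_derived}) and then Lemma~\ref{lemma:exchange} to reconcile an entailment sitting in the middle of a would-be derivation; this is precisely how a derivation of length $n+m$ is assembled from ones of length $n$ and $m$. The case $(L\exists)$ replaces $x$ by a fresh constant $a$: since neither the binding $x\doteq a$ it adds to the built-in store nor the appearance of $a$ among the global variables affects rule applicability, the derivation built for the premise transfers verbatim to one for the conclusion, and $a\notin\beta$ keeps the terminal entailment valid; $(R\exists)$ reduces to the short chain $\state{\U\subxt;\B\subxt;\{x\}\cup\V}\equiv\state{\U;\xet\wedge\B;\{x\}\cup\V}\ent\state{\U;\xet\wedge\B;\V}\ent\state{\U;\B;\V}$. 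Applying the claim to the root sequent --- whose $\eta$-completion is itself, as $S^L$ and $T^L$ are already state readings --- yields $S\mapsto^{*}S_N\ent T$, so $T'=S_N$ is the required state. The step I expect to be the main obstacle is the $(R\otimes)$/$(Cut)$ merging: one must arrange that the states involved share their global variables and have their local variables renamed apart so that the merges exist and match the tensoring of their readings, which is where Property~\ref{prop:generalityofmerging} and the standing assumption that all existentially-quantified variables occurring in $\pi$ are renamed apart have to be invoked carefully.
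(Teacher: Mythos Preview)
Your proposal is correct and follows essentially the same approach as the paper: a cut-reduced proof, the $\eta$-completion to coerce every sequent into the shape $U^L\vdash V^L$, and an induction on proof depth establishing $U\mapsto^{*}U_n\ent V$ with $n$ counting the $\Sp$-axioms, using Lemma~\ref{lemma:exchange} and Lemma~\ref{lemma:merging_derived} for the $(R\otimes)$ and $(Cut)$ cases. Your case analysis, the handling of bare built-ins versus state readings, and the treatment of $(L\exists)$/$(R\exists)$ all match the paper's argument closely, including the caveat about renaming existentially-quantified variables so that merges exist.
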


Lemma~\ref{lemma:judge-entail} states that when excluding the proper axioms in
$\Sp$, logical judgement implies state entailment:

\begin{lemma}[($\vdash\,\Leftrightarrow\,\ent$)]
\label{lemma:judge-entail}
   For arbitrary CHR states $S,T$, entailment $S\ent T$ holds \emph{if and only
   if} the judgement $S^L\vdash_\Sigma T^L$ is provable for
   $\Sigma=\Sct\cup\Seq$. \begin{proof}[sketch] \noindent $'\Leftarrow'$: We
   apply Thm.~\ref{thm:completeness} to the empty program $\bbP=\emptyset$.

\noindent $'\Rightarrow'$: We proof that all conditions in
Def.~\ref{def:s_entail} comply with the judgement relation $\vdash$: For
Def.~\ref{def:s_entail}.\ref{cond:sn_wea}, $CT\models\forall(\B\rightarrow\B')$
implies that $\Sct$ contains an axiom $\B\vdash\B'$. Hence, we can prove
$\exists_{-\V}.\U\wedge\B\vdash\exists_{-\V}.\U\wedge\B'$. For
Def.~\ref{def:s_entail}.\ref{cond:sn_omit}, it is valid since $S^L\vdash\exists
x.S^L$ holds for any $S^L$. Concerning the implicit conditions of a partial order
relation, reflexivity and anti-symmetry hold for the judgement relation $\vdash$
as well and anti-symmetry is a natural consequence of Def.~\ref{def:ll-equiv}.
\end{proof}
\end{lemma}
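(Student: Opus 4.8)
The plan is to prove the two directions separately, leaning on results already established. For $'\Leftarrow'$, I would instantiate the completeness theorem (Theorem~\ref{thm:completeness}) with the empty program $\bbP=\emptyset$: then $\Sp=\emptyset$, so $\Sp\cup\Sct\cup\Seq=\Sct\cup\Seq=\Sigma$, and a proof of $S^L\vdash_\Sigma T^L$ yields a state $T'$ with $S\mapsto^{*}T'$ and $T'\ent T$ in $\bbP$. Since the empty program admits no transitions, $S\mapsto^{*}T'$ forces $[S]=[T']$, whence $S\ent T$ by reflexivity of $\ent$.

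For $'\Rightarrow'$, the plan is to show that the relation $R$ on equivalence classes of states, defined by $[S]\mathrel{R}[T]$ iff $S^L\vdash_\Sigma T^L$ is provable, is a partial order satisfying both conditions of Definition~\ref{def:s_entail}; since $\ent$ is the \emph{smallest} such relation, this yields $\ent\subseteq R$, i.e. $S\ent T\Rightarrow S^L\vdash_\Sigma T^L$. First I would check that $R$ is well defined on equivalence classes: from $S\equiv_e S'$ and $T\equiv_e T'$, Lemma~\ref{lemma:sq-ll} (whose side condition $\Sigma=\Sct\cup\Seq$ is exactly ours), applied in both directions, provides $S'^L\vdash_\Sigma S^L$ and $T^L\vdash_\Sigma T'^L$, so two applications of $(Cut)$ transport any proof of $S^L\vdash_\Sigma T^L$ into one of $S'^L\vdash_\Sigma T'^L$. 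Reflexivity of $R$ is the $(Identity)$ axiom, transitivity is $(Cut)$, and antisymmetry follows from the already-proven $'\Leftarrow'$ together with Corollary~\ref{crl:ent-equiv}: if both $S^L\vdash_\Sigma T^L$ and $T^L\vdash_\Sigma S^L$, then $S\ent T$ and $T\ent S$, hence $[S]=[T]$. The two explicit conditions are then discharged directly. For Definition~\ref{def:s_entail}.\ref{cond:sn_wea}, the $CT$-entailment in the hypothesis is precisely of the shape required for $\Sct$ to contain a matching axiom between the built-in stores; peeling off the existentials for the strictly local variables (which by definition do not occur in $\U$) and using $(L\exists),(R\exists),(L\x),(R\x)$ and $(Identity)$, that axiom can be applied inside the common $\U^L$-context. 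For Definition~\ref{def:s_entail}.\ref{cond:sn_omit}, $\state{\U;\B;\V}^L$ is just $\exists x.\state{\U;\B;\{x\}\cup\V}^L$ up to reordering of the existential prefix (or the two coincide when $x$ does not occur in the state), so the judgement is an instance of $(R\exists)$ whose premise is reflexivity.

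The hard part will not be any single step but the variable bookkeeping in the weakening case: one must align the local and strictly local variables of $\state{\U;\B;\V}$ and $\state{\U;\B';\V}$ with the variable sets admitted by the $\Sct$-axiom schema, so that the existential prefix can be split and the axiom applied within a residual $\U^L$-context. Everything else is routine once Theorem~\ref{thm:completeness}, Lemma~\ref{lemma:sq-ll} and Corollary~\ref{crl:ent-equiv} are available; no genuinely new argument is needed.
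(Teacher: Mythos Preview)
Your proposal is correct and follows essentially the same approach as the paper: the $'\Leftarrow'$ direction is the completeness theorem instantiated at $\bbP=\emptyset$, and the $'\Rightarrow'$ direction verifies that linear judgement (over $\Sct\cup\Seq$) satisfies the defining conditions of $\ent$, so that minimality of $\ent$ yields the inclusion. Your treatment is in fact more careful than the paper's sketch on two points---you make explicit the well-definedness of the relation on $\equiv_e$-classes via Lemma~\ref{lemma:sq-ll}, and your derivation of antisymmetry from the already-established $'\Leftarrow'$ direction together with Corollary~\ref{crl:ent-equiv} is cleaner than the paper's appeal to Def.~\ref{def:ll-equiv}---but the underlying strategy is identical.
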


Theorem~\ref{thm:equiv-ll} defines the relationship between state equivalence
and the linear-logic semantics. It is a direct consequence of
Corollary~\ref{crl:ent-equiv} and
Lemma~\ref{lemma:judge-entail} and therefore goes without proof:

\begin{theorem}[($\equiv\Leftrightarrow\dashv\vdash$)]
\label{thm:equiv-ll}
Let $CT$ be a constraint theory and $\Sigma=\Sct\cup\Seq$.
For arbitrary CHR states $S,T$, we have:
\[
S\equiv T \Leftrightarrow S^L\dashv\vdash_\Sigma T^L
\]
\end{theorem}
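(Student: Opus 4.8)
The plan is to obtain the biconditional as a short chain of equivalences, reusing results already established so that nothing new has to be constructed. First I would unfold the right-hand side with Definition~\ref{def:ll-equiv}: by definition $S^L\dashv\vdash_\Sigma T^L$ holds if and only if both $S^L\vdash_\Sigma T^L$ and $T^L\vdash_\Sigma S^L$ are provable. So it suffices to show that $S\equiv_e T$ is equivalent to the conjunction of these two judgements.

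Next I would apply Lemma~\ref{lemma:judge-entail}, which is stated for exactly the axiom set $\Sigma=\Sct\cup\Seq$ named in the theorem. It gives $S^L\vdash_\Sigma T^L\Leftrightarrow S\ent T$ and, reusing the lemma with $S$ and $T$ swapped, $T^L\vdash_\Sigma S^L\Leftrightarrow T\ent S$; hence the conjunction of the two judgements is equivalent to ``$S\ent T$ and $T\ent S$''. Finally I would invoke Corollary~\ref{crl:ent-equiv}, which says that $S\equiv_e T$ holds if and only if both $S\ent T$ and $T\ent S$ hold. Composing the three equivalences gives
\begin{gather*}
S\equiv_e T
\;\Leftrightarrow\;
\bigl(S\ent T \ \wedge\ T\ent S\bigr)
\;\Leftrightarrow\;\\
\bigl(S^L\vdash_\Sigma T^L \ \wedge\ T^L\vdash_\Sigma S^L\bigr)
\;\Leftrightarrow\;
S^L\dashv\vdash_\Sigma T^L,
\end{gather*}
which is the claim.

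Since every link in this chain is an already-proven equivalence, I do not expect a genuine obstacle. The only point needing a sentence of care is that $\Sigma$ here is $\Sct\cup\Seq$ and does \emph{not} include $\Sp$: as Example~\ref{example:pt-sem-completeness-problem} illustrates, admitting $\Sp$-axioms would make $S^L\vdash_\Sigma T^L$ strictly weaker than $S\ent T$ (one then only recovers $S\mapsto^{*}T'\ent T$ via Theorem~\ref{thm:completeness}), and the forward direction would break. With $\Sp$ excluded, Lemma~\ref{lemma:judge-entail} applies verbatim and the argument is complete.
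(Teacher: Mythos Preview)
Your proposal is correct and matches the paper's approach exactly: the paper states that the theorem ``is a direct consequence of Corollary~\ref{crl:ent-equiv} and Lemma~\ref{lemma:judge-entail} and therefore goes without proof,'' which is precisely the chain of equivalences you spell out. Your additional remark about why $\Sp$ must be excluded is accurate and a nice clarification, though the paper does not make it explicit at this point.
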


The following example illustrates the completeness theorem:

\begin{example}
We consider the partial-order program $\bbP$ given in Example~\ref{example:leq} and a minimal
constraint
theory $CT$. For $\Sigma=\Sp\cup\Sct\cup\Seq$, we have
\[
a\leq b\otimes b\leq c\otimes c\leq a\vdash_\Sigma \bang a\doteq b
\]
which equals:
\[
\state{a\leq b \wedge b\leq c \wedge c\leq a;\top;\{a,b,c\}}^L\vdash_\Sigma
\state{\top;a\doteq b;\{a,b\}}^L
\]
This corresponds to:
\[
\state{a\leq b \wedge b\leq c \wedge c\leq a;\top;\{a,b,c\}}
\mapsto^*
\state{\top;a\doteq b \wedge a\doteq c;\{a,b,c\}}
\ent
\state{\top;a\doteq b;\{a,b\}}
\]
\end{example}

\subsection{Encoding Programs and Constraint Theories}
\label{sec:enc-sem}

In the axiomatic linear-logic semantics presented in Sect.~\ref{sec:pt-sem} to
Sect.~\ref{sec:pt-completeness}, only states are represented in logical
judgements. Both programs and constraint theories disappear into the proper
axioms of a sequent calculus system and hence are not objects of logical
reasoning.

In this section, we show how to encode programs and constraint theories into
logical judgements, enabling us to reason directly about them as well. In
Sect.~\ref{sec:app:comparison}, we will use this encoding to decide operational
equivalence of programs. As a further benefit, a complete encoding of programs
and constraint theories assures the existence of cut-free proofs for the
respective judgements and ensure compatibility with established methods for
automated proof search methods relying on this property.

As usual, $(\cdot)^L$
stands for translation into linear logic.

\paragraph*{Encoding of Constraint Theories}
The constraint theory $CT$ itself is encoded according to the translation quoted
in Def.~\ref{def:negri}. Furthermore, for every n-ary
user-defined constraint symbol $c_u$ and every $\inintv{j}{n}$, we add the
following formula to the translation of the theory, where $x_1,\ldots,x_n$ and
$y$ are variables:
\[
!\forall(c_u(x_1,...,x_j,...,x_n) \x !(x_j\doteq y)
\multimap c_u(x_1,...,y,...,x_n) \x !(x_j\doteq y)
\]
We obtain the following encoding of constraint theories:
\begin{definition}[($CT^L$)]
\label{def:ct-l}
Let $CT$ be a constraint theory. Its linear-logic reading $CT^L$ is given
as:
\[
CT^L ::=
CT^* \cup
\left(
\bigcup_{c_u/n}\bigcup_{j\doteq 1}^n
\bang\forall(c_u(...,x_j,...) \x !(x_j= y)
\multimap c_u(...,y,...) \x !(x_j= y)
\right)
\]
\end{definition}
\paragraph*{Encoding of $\Sp$} The translation of CHR rules follows the same
lines as the encoding of the $CT$ axioms:
\begin{definition}[($R^L,\bbP^L$)]
\label{def:rp-l}
\begin{enumerate}
\item \label{def:rp-l:r-l} Let $R = r\ @\ H_1\setminus H_2\Leftrightarrow G\mid B_u \wedge
B_b$ be a CHR rule with local variables $\by_r$. Then its linear-logic reading $R^L$ is defined as:
\[
R^L ::= !\forall(H_1^L\x H_2^L\x G^L \multimap
H_1^L\x \exists\by_r.(B_u^L\x B_b^L\x G^L))
\]
\item \label{def:rp-l:p-l} Let $\bbP = \{R_1,\ldots,R_n\}$ be a CHR program. Then
its linear-logic
reading $\bbP^L$ is defined as:
\[
\bbP^L ::= \bigcup_{R\in \bbP} R^L
\]
\end{enumerate}
\end{definition}

For the encoding semantics, the following soundness and completeness theorem
holds:

\begin{theorem}[Soundness and Completeness]
  \label{theorem:embed_soundness_completeness}
  Let $S,T$ be CHR states. There exists a state $U$ such that
  \[
    S\mapsto^{*}U\textrm{ and }U\ent T
  \]
  in a program $\bbP$ and a constraint theory $CT$ \emph{if and only if}
  \[
    \bbP^L,CT^L \vdash \forall(S^L\lp T^L)
  \]
\begin{proof}
  We prove Thm. \ref{theorem:embed_soundness_completeness} by showing that any
  proof tree in the axiomatic semantics can be transformed into a proof tree in
  the encoding semantics and vice versa. To ensure of clarity, we will omit
  the set of proper axioms from the judgement symbol.

  \paragraph*{Axiomatic to encoding:} We assume a proof $\pi$ of a sequent
  $S^L\vdash T^L$ in the axiomatic semantics. We replace
  every axiom $\exists\bx.\B^L\vdash\exists\bx'.{\B'}^L$ in $\Sct$ by a sub-tree proving
  $CT^L,\exists\bx.\B^L\vdash\exists\bx'.{\B'}^L$. The same is done for every
  equality axiom in $\Seq$. Similarly, every axiom $H^L_1\x H^L_2\x G^L\vdash
  H^L_1\x\exists_{-\by_r}.(B^L_u\x B^L_b\x G^L)$ in $\Sp$ is replaced with a
  sub-tree proving $\bbP^L, H^L_1\x H^L_2\x G^L\vdash
  H^L_1\x\exists_{-\by_r}.(B^L_u\x B^L_b\x G^L)$. We propagate the thus
  introduced instances of $CT^L$ and $\bbP^L$ throughout the proof tree, thus
  producing a proof $\pi'$ of
  \[
  	CT^L,\ldots,CT^L,\bbP^L,\ldots,\bbP^L,S^L\vdash T^L
  \]
  We insert $\pi'$ into:
  \[
  	\small
  	\infer[(R\forall)]
  	{
	  	CT^L,\bbP^L\vdash \forall(S^L\lp T^L)
  	}
  	{
  		\infer[(R\lp)]
  		{
  			CT^L,\bbP^L\vdash S^L\lp T^L
  		}
  		{
  			\infer[(Contraction)^*]
  			{
  				CT^L,\bbP^L,S^L\vdash T^L
  			}
  			{
  				\pi'
  			}
  		}
  	}
  \]

  \paragraph*{Encoding to axiomatic:} Let $\bigotimes$ stand for element-wise
  multiplicative conjunction of a set and let $\pi$ be a proof of a sequent
  $CT^L,\bbP^L\vdash \forall(S^L\lp T^L)$ in the encoding semantics.

  For every $!\forall(\exists\bx.{\B}^L\lp \exists\bx'.{\B'}^L)\in CT^L$, we
  have $\vdash_\Sigma !\forall(\exists\bx.{\B}^L\lp \exists\bx'.{\B'}^L)$ where
  $\Sigma=\Sct\cup\Seq$. Hence, there exists a proof $\pi_{CT}$ of
  $\vdash_\Sigma \bigotimes CT^L$. Similarly, there exists a proof $\pi_{\bbP}$
  of $\vdash_{\Sp} \bigotimes \bbP^L$.

  \[
  	\small
	\infer[(Cut)]
	{
		S^L \vdash T^L
 	}
 	{
		\infer[(Cut)]
		{
			\vdash \forall(S^L\lp T^L)
		}
		{
			\pi_{\bbP}
		&
			\infer[(Cut)]
			{
				\bbP^L \vdash \forall(S^L\lp T^L)
			}
			{
				\pi_{CT}
			&
				\infer[(L\otimes)^{*}]
				{
					\bigotimes CT^L, \bigotimes \bbP^L \vdash \forall(S^L\lp T^L)
				}
				{
					\pi
				}
			}
		}
	&
		\infer[(L\forall)^{*}]
		{
			\forall(S^L\lp T^L), S^L\vdash T^L
		}
		{
			\infer[(L\lp)]
			{
				S^L, S^L\lp T^L\vdash T^L
			}
			{
				\infer[(Identity)]
				{
					S^L \vdash S^L
				}
				{
				}
			&
				\infer[(Identity)]
				{
					T^L \vdash T^L
				}
				{
				}
			}
		}
 	}
  \]

As we can transform the respective proof tree from the axiomatic to the
encoding semantics and vice versa, the two representations are equivalent.
\end{proof}
\end{theorem}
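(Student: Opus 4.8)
The plan is to first collapse the operational side of the statement onto the axiomatic linear-logic semantics, and then to show that the axiomatic system and the encoding system prove exactly the same state-to-state sequents. For the first part I would argue that, for $\Sigma=\Sp\cup\Sct\cup\Seq$, the existence of a state $U$ with $S\mapsto^{*}U$ and $U\ent T$ is equivalent to provability of $S^L\vdash_\Sigma T^L$. The direction from transitions and entailment to provability combines Theorem~\ref{thm:soundness} ($S\mapsto^{*}U$ gives $S^L\vdash_\Sigma U^L$), the forward direction of Lemma~\ref{lemma:judge-entail} ($U\ent T$ gives $U^L\vdash_{\Sct\cup\Seq}T^L$, hence $U^L\vdash_\Sigma T^L$), and a single $(Cut)$. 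The converse is exactly Theorem~\ref{thm:completeness}. Thus it remains to prove
\[
S^L\vdash_\Sigma T^L \quad\Longleftrightarrow\quad \bbP^L,CT^L\vdash\forall(S^L\lp T^L),
\]
which I would establish by transforming proof trees in both directions.

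For the implication from the axiomatic to the encoding system, I would take a cut-reduced proof $\pi$ of $S^L\vdash_\Sigma T^L$ (it exists by Theorem~\ref{thm:cut-elim}.\ref{prop:cut-red}) and first replace each leaf that is a proper axiom by a small derivation of the same sequent whose antecedent additionally carries the encoded theory and program. Concretely, a $\Sct$-axiom $\exists\bx.\B^L\vdash\exists\bx'.{\B'}^L$ is re-derived from $CT^L$: by faithfulness of the Negri embedding (Def.~\ref{def:negri}), $CT^L$ proves the linear translation of every intuitionistic consequence of $CT$, in particular $CT^L\vdash \exists\bx.\B^L\lp\exists\bx'.{\B'}^L$, and an $(L\lp)$ with an $(Identity)$ leaf yields $CT^L,\exists\bx.\B^L\vdash\exists\bx'.{\B'}^L$; a $\Seq$-axiom is handled the same way using the corresponding equality conjunct of $CT^L$, and a $\Sp$-axiom using the matching conjunct of $\bbP^L$ after $(L\forall)$-instantiation to the variant actually used. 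Because $CT^L$ and $\bbP^L$ consist entirely of banged formulas, I can propagate them downward past every inference of $\pi$ — duplicating them at the branching rules $(R\otimes)$ and $(Cut)$, and leaving them in place at the $(R!)$ steps, whose antecedents stay all-banged after the insertion (cf.\ Lemma~\ref{lemma:proof-structure}). A block of $(Contraction)$ steps at the root merges the copies, and $(R\lp)$ followed by $(R\forall)$ gives $\bbP^L,CT^L\vdash\forall(S^L\lp T^L)$; the eigenvariable condition on $(R\forall)$ holds because $CT^L$ and $\bbP^L$ are closed, so no free variable of $S^L\lp T^L$ occurs in the antecedent.

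For the converse, I would first record that each conjunct of $CT^L$ is provable from $\Sct\cup\Seq$ alone — each conjunct of $CT^{*}$ from the matching $\Sct$-axiom and each added equality conjunct from the matching $\Seq$-axiom, in both cases by the right introductions $(R\lp)$, $(R\forall)$, $(R!)$ (the empty antecedent being vacuously all-banged for $(R!)$) — and that likewise each conjunct of $\bbP^L$ is provable from $\Sp$; taking multiplicative conjunctions yields proofs $\pi_{CT}$ of $\vdash_{\Sct\cup\Seq}\bigotimes CT^L$ and $\pi_{\bbP}$ of $\vdash_{\Sp}\bigotimes\bbP^L$. Given a proof $\pi$ of $\bbP^L,CT^L\vdash\forall(S^L\lp T^L)$, I extend $\pi$ by a block of $(L\otimes)$ inferences fusing the conjuncts of $CT^L$ and of $\bbP^L$ into $\bigotimes CT^L$ and $\bigotimes\bbP^L$, cut against $\pi_{CT}$ and $\pi_{\bbP}$ to discharge the encoded theory and program, obtaining $\vdash_\Sigma\forall(S^L\lp T^L)$, and combine this through one last $(Cut)$ with the readily derivable sequent $\forall(S^L\lp T^L),S^L\vdash_\Sigma T^L$ (from $(L\forall)$-instantiation, $(L\lp)$ and two $(Identity)$ leaves) to conclude $S^L\vdash_\Sigma T^L$.

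The step I expect to be the main obstacle is the downward propagation of $CT^L$ and $\bbP^L$ in the first direction: one must check that adding these banged formulas to the antecedent never violates a side condition of an inference of $\pi$ — in particular the all-banged requirement of $(R!)$ and the eigenvariable conditions of $(L\exists)$ and $(R\exists)$ — and that the universally quantified encodings in $CT^L$ and $\bbP^L$ can be instantiated to match precisely the renamed-apart, fresh-local-variable proper axioms actually occurring at the leaves of $\pi$. Given Lemma~\ref{lemma:proof-structure}, which pins down the shape of every formula occurring in such a proof, together with the explicit forms of the proper axioms and of $CT^L$, $\bbP^L$, these checks are routine, but they are where the argument must be carried out with care.
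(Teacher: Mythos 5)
Your proposal is correct and follows essentially the same route as the paper: both directions are established by transforming proof trees between the axiomatic and the encoding systems (re-deriving the proper axioms from $CT^L$ and $\bbP^L$ and propagating the banged encodings downward with contractions at the root in one direction; cutting against proofs of $\vdash\bigotimes CT^L$ and $\vdash\bigotimes\bbP^L$ and against $\forall(S^L\lp T^L),S^L\vdash T^L$ in the other). The only difference is that you make explicit the reduction of the operational statement to $S^L\vdash_\Sigma T^L$ via Theorems~\ref{thm:soundness} and~\ref{thm:completeness} and Lemma~\ref{lemma:judge-entail}, a bridging step the paper leaves implicit.
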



\section{A Linear-Logic Semantics for CHR$^\vee$}
  \label{sec:chrv}

In this section, we extend our linear logic semantics to \emph{CHR with
Disjunction} (CHR$^\vee$), a common extension of CHR. To avoid ambiguity, we
will henceforth use the term \emph{pure CHR} to refer to the regular segment of
CHR without disjunction.

We will firstly recall the syntax and semantics of CHR$^\vee$ in
Sect.~\ref{sec:vee-intro}. Then we define an equivalence-based formalization of
its operational semantics in Sect.~\ref{sec:vee-oesq}, analogous to $\oesq$ for
pure CHR. In Sect.~\ref{sec:vee-extend}, we apply this equivalence-based
formalization to define a linear-logic semantics for CHR$^\vee$ and proof its soundness and
completeness. In Sect.~\ref{sec:vee-congruence}, we show that in the case of
CHR$^\vee$, the linear-logic semantics has less desirable properties than for
pure CHR: Concretely, linear-logic based reasoning over CHR$^\vee$
programs produces in general less precise results than over CHR programs. We
then introduce the well-behavedness properties of \emph{compactness} and
\emph{analyticness} which amend this limitation.

\subsection{Introduction to CHR$^\vee$}
\label{sec:vee-intro}

CHR$^\vee$ has a richer syntax than pure CHR: The definition of goals
is extended by the disjunction operator $\vee$. Alluding to its operational meaning,
we may also refer to $\vee$ as the \emph{split operator}. We also introduce the
notion of \emph{configuration}, which can be read as a disjunction of CHR states,
and we extend the definition of goal equivalence to account for
distributivity.

\begin{definition}[Goals, States, Configurations]
	\label{chrv-state-conf}
We adapt the definitions of goal and state, and we define configuration as
follows:

\medskip
\begin{tabular}{l @{\quad} r @{\,::=\,} l}
	Built-in constraint: &
		$\B$ &
			$\top \mid \cb \mid \B\wedge\B'$\\
	User-defined constraint: &
		$\U$ &
			$\top \mid \cu \mid \U\wedge\U'$\\
	\chrv\ goal: &
		$\G$ &
			$\top \mid \cu \mid \cb \mid \G \wedge \G' \mid \G\vee\G'$\\
	\chrv\ state: &
		$S$  &
			$\state{\G;\V}$ \\
	Configuration: &
		$\bS$ &
			$\varepsilon \mid S \mid S\vee\bS$
\end{tabular}

\medskip For any two goals $\G,\G'$, goal equivalence $\G\equiv_{G}\G'$ denotes
equivalence between goals with respect to \emph{associativity} and
\emph{commutativity} of $\wedge$, the \emph{neutrality} of $\top$ with respect to
$\wedge$, and the \emph{distributivity} of $\wedge$ over $\vee$. $\varepsilon$
stands for the empty configuration, which is operationally equivalent to a
failed state $S_\bot$.

A goal which does not contain disjunctions is called \emph{flat}. A
state $\state{\G;\V}$ where $\G$ is flat is also called flat. A
configuration $\bS$ is called flat if it is empty or consists only of flat
states.
\end{definition}

Allowing $\wedge$ to distribute over $\vee$ guarantees that every goal is
equivalent to its disjunctive normal form (DNF). We do not allow the opposite
law of distributivity. For example, we have $\G_1\wedge(\G_2\vee\G_3)\equiv_G
(\G_1\wedge\G_2)\vee(\G_1\wedge\G_3)$ but $\G_1\vee(\G_2\wedge\G_3)\not\equiv_G
(\G_1\vee\G_2)\wedge(\G_1\vee\G_3)$. Thus any finite goal has only a finite
number of equivalent representations.

In CHR$^\vee$, we use the same definition for state equivalence as in pure CHR.
However, as the
definition of goal equivalence is extended, this implicitly carries over to
state equivalence. For example:
$\state{\G_1\wedge(\G_2\vee\G_3);\V}\equiv\state{(\G_1\wedge\G_2)\vee(\G_1\wedge\G_3);\V}$.

As in goals, CHR$^\vee$ allows disjunctions in rule bodies. The clear
seperation between user-defined constraints and built-in constraints in the
rule body no longer applies. This is reflected in the following definition:

\begin{definition}[\chrv\ Rules]
	\label{def:chrv-rule}
A \chrv\ rule is of the form
\[
	r\ @\ H_1\setminus H_2\Leftrightarrow G \mid B
\]
The \emph{kept head} $H_1$ and the \emph{removed head} $H_2$ are
user-defined constraints. The guard $G$ is a built-in
constraint. The rule body $B$ is a \chrv\ goal. $r$ serves as an identifier for
the rule and may be omitted along with the $@$. An empty guard may be omitted
along with the $\mid$.
\end{definition}

We observe that restricting \chrv\ to the segment without disjunction restores
pure CHR. Hence, pure CHR is a subset of \chrv. The operational semantics of
CHR$^\vee$ has originally been defined in \cite{DBLP:conf/fqas/AbdennadherS98}.
An additional transition rule called \textbf{Split} resolves disjunctions by
branching the computation. Adjusted to our syntax, we express that transition rule as
follows:
\[
	\textbf{Split: }\quad
	\state{ \G_1\vee \G_2 ; \V}
		\mapsto^{sp}
	\state{ \G_1 ; \V}
		\vee
	\state{ \G_2 ; \V}
\]

We can straightforwardly adapt the operational semantics $\oesq$ to the syntax of
CHR$^\vee$. Adding one rule to handle equivalence transformations of states and
two more rules to handle composition of configurations gives us the following
operational semantics for CHR$^\vee$:

\begin{definition}[Operational Semantics of CHR$^\vee$]
CHR$^\vee$ is a state transition system over configurations
defined by the following transition rules, where $(r\ @\ H_1 \setminus H_2
\Leftrightarrow G\mid B)$ is a variant of a CHR$^\vee$ rule whose local
variables $\by_r$ are renamed apart from any variable occurring in $vars(H_1, H_2,\G,\V)$:
\[
	\textbf{Apply: }\quad
	\frac{r\ @\ H_1 \setminus H_2 \Leftrightarrow G\mid B
		\quad\quad\quad
	CT\models \exists(G\wedge\B)}
{
	\state{H_1 \wedge H_2 \wedge G\wedge\G;\V}
		\mapsto^r
	\state{H_1 \wedge G\wedge B\wedge\G;\V}
}
\]
\[
	\textbf{Split: }\quad
	\frac{
		\phantom{\top}
	}
	{
		\state{ \G_1\vee \G_2 ; \V}
			\mapsto^{sp}
		\state{ \G_1 ; \V}
			\vee
		\state{ \G_2 ; \V}
	}
\]
\[
	\textbf{StateEquiv: }\quad
	\frac{
		S'\equiv S
			\quad\quad\quad
		S\mapsto^{(r/sp)} T
			\quad\quad\quad
		T\equiv T'
	}
	{
		S'
			\mapsto^{(r/sp)}
		T'
	}
\]
\[
	\textbf{CompLeft: }\quad
	\frac{
		\bS\mapsto^{(r/sp)} \bS'
	}{
		\bS\vee\bT\mapsto^{(r/sp)} \bS'\vee\bT
	}
	\quad\quad\quad
	\textbf{CompRight: }\quad
	\frac{
		\bS\mapsto^{(r/sp)} \bS'
	}{
		\bT\vee\bS\mapsto^{(r/sp)} \bT\vee\bS'
	}
\]\vspace{1mm}

If the applied rule is obvious from the context or irrelevant, we write
transition simply as $\mapsto$. We denote its reflexive-transitive closure
as $\mapsto^{*}$.
\end{definition}

\noindent The following example shows a possible computation in \chrv:

\begin{example}
	\label{example:albatross}
Consider the following CHR$^\vee$ program:
\[
  \begin{array}{lcl}
    r1\ @\ bird & \Leftrightarrow & albatross \vee penguin\\
    r2\ @\ penguin\wedge flies & \Leftrightarrow & \bot\\
  \end{array}
\]
Running this program with the initial state $\state{bird\wedge
flies;\emptyset}$ produces the following fixed-point computation:
\[
  \begin{array}{ l l}
    &	[\state{bird\wedge flies;\emptyset}]\\
    \mapsto^{r1} & [\state{(albatross\wedge flies) \vee (penguin\wedge flies;\emptyset}]\\
    \mapsto^{sp} &
    [\state{albatross\wedge flies;\emptyset}] \vee
        [\state{penguin\wedge flies;\emptyset}]\\
    \mapsto^{r2} & [\state{albatross\wedge flies}] \vee
        [\state{\bot;\emptyset}]\\
  \end{array}
\]
The first transition step is justified by the $\textbf{Apply}$ as well as the
$\textbf{StateEquiv}$ transition rule. The last transition step is justified by $\textbf{Apply}$ and
$\textbf{CompLeft}$.
\end{example}

\subsection{An Equivalence-Based Operational Semantics for CHR$^\vee$}
\label{sec:vee-oesq}

While the operational semantics presented in Sect.~\ref{sec:vee-intro}
precisely formalizes the execution of a CHR$^\vee$ program, it is of limited use
for program analysis. For example, we would intuitively assume that two
configurations should be considered equivalent if they differ only in the order of
their member states.

In this section, we propose a notion of equivalence of configurations, we show
its compliance with rule application and we propose a formalization of the
operational semantics based on equivalence classes of configurations.

\begin{definition}[Equivalence of Configurations]
\label{def:vee-config-equiv}
Equivalence of configurations, denoted as $\cdot\equiv_\vee\cdot$, is the
smallest equivalence relation over configurations satisfying all of the following
properties:
  \begin{enumerate}
      \item \label{cond:vce_ac}
	\emph{Associativity and Commutativity:}
      \[
        \bS \vee \bT
          \equiv_\vee
        \bT \vee \bS
	\quad\textrm{ and }\quad
        (\bS \vee \bT) \vee \bU
          \equiv_\vee
        \bS \vee (\bT \vee \bU)
      \]
      \item \label{cond:vce_steq}
	\emph{State Equivalence}
      \[
        S\equiv_e S'
		\quad\Rightarrow\quad
        S\vee\bT\equv S'\vee\bT
      \]
      \item \label{cond:vce_fail}
	\emph{Neutrality of Failed States:}
      \[
        S_{\bot}\vee\bT
          \equiv_\vee
        \bT
      \]
      \item \label{cond:vce_split}
	\emph{Split:}
      \[
        [\state{\G_1\vee \G_2;\V}] \vee \bT
          \equiv_\vee
        [\state{\G_1;\V}] \vee [\state{\G_2;\V}] \vee \bT
      \]
  \end{enumerate}
\end{definition}

Compliance of configuration equivalence with rule application is formalized as
follows:

\begin{property}[Compliance with Rule Application] Let
$\bS,\bS',\bT$ be arbitrary configurations such that $\bS\equv\bS'$ and
$\bS\mapsto^{*}\bT$. Then there exists a $\bT'$ such that $\bT\equiv\bT'$ and
$\bS'\mapsto\bT'$. \begin{proof}[sketch] Element states of a configuration
are handled independently of each other, making associativity and commutativity
idempotent to rule application. Equivalence transformation of states complies due
to the \textbf{StateEquiv} rule. Failed states do not allow rule
application. Any application of the
\emph{Split} axiom hindering rule application can be reversed by application of
the \textbf{Split} transition.
\end{proof}
\end{property}

The compliance property allows us to define an operational semantics based on
equivalence classes of configurations using only a single transition rule. In
analogy to the equivalence-based semantics $\oesq$ for pure CHR, we will refer
to this operational semantics as $\oesqv$.

\begin{definition}[Transition System of $\oesqv$]
\label{def:vee-tr-system}
CHR is a state transition system over equivalence classes of configurations. It
is defined by the following transition rule, where $(r\ @\ H_1 \setminus H_2
\Leftrightarrow G\mid B)$ is a variant of a CHR rule whose local variables
$\by_r$ are renamed apart from any variable occurring in $vars(H_1,H_2,\G,\V)$:

\medskip
\[
\frac{
	r\ @\ H_1 \setminus H_2 \Leftrightarrow G\mid B
		\quad\quad\quad
	CT\models \exists(G\wedge\B)
}
{
	[\state{H_1 \wedge H_2 \wedge G\wedge\G;\V}\vee\bT]
		\der^r
	[\state{H_1 \wedge G\wedge B\wedge \G;\V}\vee\bT]
}
\]

\medskip
If the applied rule is obvious from the context or irrelevant, we write
transition simply as $\mapsto$. We denote its reflexive-transitive closure
as $\mapsto^{*}$.
\end{definition}

Analogously to pure CHR, we define a notion of confluence:

\begin{definition}[Confluence]
\label{def:vee-confluence}
A CHR$^\vee$ program $\bbP$ is called \emph{confluent}, if for arbitrary
configurations $\bS,\bT,\bU$ such that $[\bS]\mapsto^{*}[\bT]$ and
$[\bS]\mapsto^{*}[\bU]$, there exists a configuration $\bV$ such that
$[\bT]\mapsto^{*}[\bV]$ and $[\bU]\mapsto^{*}[\bV]$.
\end{definition}

Furthermore, we define three sets of observables based on equivalence classes of
configurations:

\begin{definition}[Observables]
	\label{def:observables-vee}
Let $S$ be a CHR state, $\bbP$ be a program, and $CT$ be a constraint theory. We
distinguish the following sets of observables:

\medskip
\begin{tabular}{l @{\,} r @{\,$::=$} l}
Computable config.: &
	$\bcC_{\bbP,CT}(S)$ &
		$\{[\bT]\mid [S]\mapsto^*[\bT]\}$\\
Answer: &
	$\bcA_{\bbP,CT}(S)$ &
		$\{[\bT]\mid [S]\mapsto^*[\bT]\not\mapsto\}$\\
Data-sufficient answer: &
	$\bcS_{\bbP,CT}(S)$ &
		$\{[\state{\top;\B_1;\V_1}\vee\ldots\vee\state{\top;\B_n;\V_n}]\mid$ \\
\multicolumn{2}{c}{} & \quad
		$[S]\mapsto^*[\state{\top;\B_1;\V_1}\vee\ldots\vee\state{\top;\B_n;\V_n}]\}$
\end{tabular}

\medskip
Note that the parameters for all three sets are states rather than
configurations, as we assume that every computation starts from a singular
state. For all three sets, if the constraint theory $CT$ is clear from the
context or not important, we may omit it from the respective identifier.
\end{definition}

Analogously to Property~\ref{prop:op-equiv-hierarchy}, we have a hierarchy of
observables:

\begin{property}[Hierarchy of Observables]
For any state $S$, program $\bbP$ and constraint theory $CT$, we have:
\[
	\bcS_{\bbP,CT}(S) \subseteq \bcA_{\bbP,CT}(S) \subseteq \bcC_{\bbP,CT}(S)
\]
\end{property}

The following example illustrates our definitions:

\begin{example}
\label{example:albatross_vee}
We recur to the program from Example~\ref{example:albatross}.
\[
  \begin{array}{lcl}
    bird & \Leftrightarrow & albatross \vee penguin\\
    penguin\wedge flies & \Leftrightarrow & \bot\\
  \end{array}
\]
Using $\oesqv$, we can construct the following derivation starting from the
initial state $S_0=\state{bird\wedge flies;\emptyset}$:
\[
  \begin{array}{ l l}
    &	[\state{bird\wedge flies;\emptyset}]\\
    \mapsto_\vee & [\state{(albatross\vee penguin)\wedge flies;\emptyset}]\\
    = &
    [\state{albatross\wedge flies;\emptyset} \vee
        \state{penguin\wedge flies;\emptyset}]\\
    \mapsto_\vee & [\state{albatross\wedge flies} \vee
        \state{\bot;\emptyset}]\\
    =  &[\state{albatross\wedge flies}]\\
  \end{array}
\]

In comparison with Example~\ref{example:albatross}, we now obtain our result with
one less transition. More importantly, our transition system consists of only one
transition rule now. The equivalence relation over configurations allows us to
omit the failed state from the final configuration, producing a more elegant
representation of the answer.

With respect to the observables, we have $\bcC_{\bbP}(S_0)=
\{[S_0],[\state{(albatross\vee penguin)\wedge
flies;\emptyset}],[\state{albatross\wedge flies}]\}$, $\bcA_{\bbP}(S_0)=
\{[\state{albatross\wedge flies}]\}$, and $\bcS_{\bbP}(S_0)= \emptyset$.
\end{example}

\subsection{Extending the Linear-Logic Semantics to \chrv}
\label{sec:vee-extend}

In this section, we develop a linear-logic semantics for CHR$^\vee$, based on
the equivalence-based operational semantics $\oesqv$.

\subsubsection{Definition of the Semantics}

Since pure CHR is completely contained in CHR$^\vee$ and represents a
significant subset thereof, it stands to reason that the linear logic semantics
for pure CHR should be preserved for that segment. Hence, a large part of the
semantics carries directly over to CHR$^\vee$. Now consider a pure CHR program
$\bbP_1$ of the following form:
\begin{align*}
  r_1\ @\ H \Leftrightarrow~& G \mid B_1 \\
  r_2\ @\ H \Leftrightarrow~& G \mid B_2
\end{align*}
The logical reading of this program in the encoding semantics is:
\[
\bbP_1^L =
\bang\forall\left(H^L\lp G^L\lp \exists\by_1.B_1^L\right) \otimes
\bang\forall\left(H^L\lp G^L\lp \exists\by_2.B_2^L\right)
\]
This is logically equivalent to:
\[
	\bang\forall\left(H^L\x G^L\lp
    (\exists\by_1.B_1^L) \with
    (\exists\by_2.B_2^L) \right) \dashv\vdash \bbP_1
\]

We gain the insight that don't-care non-determinism in CHR is already
\emph{implicitly} mapped to additive conjunction $\&$ in linear logic.

Mapping the split connective $\vee$ to multiplicative disjunction $\oplus$ is
an obvious choice, as: (1) $\x$ distributes over $\oplus$, (2)
absurdity 0 -- representing failed states -- is neutral with respect to
$\oplus$, and (3) $\oplus$ complements $\&$, which represents committed choice.
Hence we preserve the clear distinction between the two types of
non-determinism. We furthermore adapt the translations of states and programs to
the syntax of CHR$^\vee$, thus obtaining the semantics given in
Fig.~\ref{fig:chrv-axiomatic-semantics}

\begin{figure}
	\begin{center}
	\fbox{
	\begin{tabular}{l @{\quad} r @{$\,::=\,$} l}
	\textrm{Atomic built-in constraints:} & $\cb^L$ & $!\cb$ \\
	\textrm{Atomic user-defined constraints:} & $\cu^L$ & $\cu$ \\
	\textrm{Falsity:} & $\bot^L$ & $\lzero$ \\
	\textrm{Empty constraint/goal:} & $\top^L$ & $\lone$ \\
	\textrm{Constraints/goals:} &
		$(\G_1\wedge\G_2)^L$ & $\G^L_1\otimes\G^L_2$\\
	\textrm{Disjunction within goals:} &
		$(\G_1 \vee \G_2)^L$ & $\G_1^L\oplus \G_2^L$ \\
	\textrm{States:} & $\state{\G;\V}^L$ & $\exists_{-\V}.\G^L$ \\
	\textrm{Configurations:} &
		$(\bS \vee \bT)^L$ & $\bS^L\oplus \bT^L$ \\
	\textrm{Empty configuration:} & $(\varepsilon)^L$ & $\lzero$ \\
	\\
	\multicolumn{3}{c}{$
		\infer[(\Sct)]
		{
			\exists\bx.\B^L\vdash \exists\bx'.\B'^L
		}
		{
			CT\models \exists\bx.\B\rightarrow\exists\bx'.\B'
		}
		\quad\quad\quad
		\infer[(\Seq)]
		{
			c_u(...,t_j,...) \x !(t_j\doteq u)
			\vdash c_u(...,u,...) \x !(t_j\doteq u)
		}
		{
			\phantom{\top}
		}
	$} \\
	\\
		\multicolumn{3}{c}{$
		\infer[(\Sp)]
		{
			H_1^L\x H_2^L\x G^L \vdash H_1^L \x \exists\by_r.(B^L\x G^L)
		}
		{
			(r\ @\ H_1\setminus H_2\Leftrightarrow G\mid B)\subxy\in\bbP
		}
	$}
	\end{tabular}
}
\end{center}
\caption{The axiomatic linear-logic semantics for CHR$^\vee$}
\label{fig:chrv-axiomatic-semantics}
\end{figure}

\subsubsection{Soundness of the Linear Logic Semantics for CHR$^\vee$}

In this section, we prove the soundness of our semantics with respect to
$\oesqv$. At first, we show that configuration equivalence implies logical
judgement:

\begin{lemma}[$\equv\Rightarrow\dashv\vdash$]
\label{lemma:vee-ce-ll}
\begin{longenum}
 \item \label{prop:vel:goal} For goals $\G_1,\G_2$ such that $\G_1\equiv_G\G_2$,
 we have $\G_1 \dashv\vdash \G_2$.
 \item \label{prop:vel:state} For CHR$^\vee$ states $S_1,S_2$ and an arbitrary
 constraint theory CT such that such that $S_1\equiv S_2$,
 we have $S_1 \dashv\vdash_\Sigma S_2$ where $\Sigma=\Sct$.
 \item \label{prop:vel:conf} For configurations $\bS_1,\bS_2$ and an arbitrary
 constraint theory CT such that such that $\bS_1\equv \bS_2$, we have $S_1
 \dashv\vdash_\Sigma S_2$ where $\Sigma=\Sct$.
\end{longenum}
\begin{proof}
  Lemma~\ref{lemma:vee-ce-ll}.\ref{prop:vel:goal}: The property holds,
  as $\otimes$ is associative, commutative, has the neutral element $1$ and
  distributes over $\oplus$.
  Lemma~\ref{lemma:vee-ce-ll}.\ref{prop:vel:state}: Proof is analogous
  to Lemma~\ref{lemma:sq-ll}.
  Lemma~\ref{lemma:vee-ce-ll}.\ref{prop:vel:conf}: We consider the properties
  given in Def.~\ref{def:vee-config-equiv} --
	Def.~\ref{def:vee-config-equiv}.\ref{cond:vce_ac}: For all $\alpha,\beta,\gamma$,
	we have $\alpha\oplus\beta\dashv\vdash\beta\oplus\alpha$ and
	$(\alpha\oplus\beta)\oplus\gamma\dashv\vdash\alpha\oplus(\beta\oplus\gamma)$.
	Def.~\ref{def:vee-config-equiv}.\ref{cond:vce_steq}: The property follows from Lemma~\ref{lemma:vee-ce-ll}.\ref{prop:vel:state}.
	Def.~\ref{def:vee-config-equiv}.\ref{cond:vce_fail}: For all $\alpha$, we have $0\oplus\alpha\dashv\vdash\alpha$.
	Def.~\ref{def:vee-config-equiv}.\ref{cond:vce_split}: For all $\alpha,\beta,\gamma,\V$, we have $(\exists_{-\V}.\alpha\oplus\beta)\oplus\gamma\dashv\vdash(\exists_{-\V}.\alpha)\oplus(\exists_{-\V}.\beta)\oplus\gamma$.
\end{proof}

\end{lemma}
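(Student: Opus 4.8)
The plan is to prove all three parts by induction on the derivation of the equivalence in question, using two facts that hold uniformly: linear-logic equivalence $\dashv\vdash_\Sigma$ is itself an equivalence relation — reflexive by $(Identity)$, transitive by $(Cut)$, symmetric by the definition of $\dashv\vdash_\Sigma$ — and it is a congruence with respect to $\otimes$, $\oplus$ and $\exists$: from $\alpha\vdash_\Sigma\alpha'$ one derives $\alpha\otimes\beta\vdash_\Sigma\alpha'\otimes\beta$, $\alpha\oplus\beta\vdash_\Sigma\alpha'\oplus\beta$ and $\exists x.\alpha\vdash_\Sigma\exists x.\alpha'$ by the corresponding left and right introduction rules together with at most one $(Cut)$. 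Granting these, it suffices to verify that every generating equation of each of $\equiv_G$, $\equiv_e$ and $\equiv_\vee$ is mapped by $(\cdot)^L$ to a provable equivalence; the closure of the generated relation under reflexivity, symmetry, transitivity and the relevant one-hole contexts then carries that relation into $\dashv\vdash_\Sigma$ automatically.

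For Part~\ref{prop:vel:goal} (reading $\G_i$ as $\G_i^L$, as usual), the generators of $\equiv_G$ are associativity and commutativity of $\wedge$, neutrality of $\top$, and distributivity of $\wedge$ over $\vee$, which under $(\cdot)^L$ become associativity and commutativity of $\otimes$, neutrality of $\lone$, and distributivity of $\otimes$ over $\oplus$. Each is a small cut-free proof tree assembled from $(L\otimes)$, $(R\otimes)$, $(L\lone)$, $(R\lone)$, $(L\oplus)$, $(R\oplus_1)$, $(R\oplus_2)$ and $(Identity)$ — for instance both directions of $\alpha\otimes(\beta\oplus\gamma)\dashv\vdash(\alpha\otimes\beta)\oplus(\alpha\otimes\gamma)$ — and none of them needs a proper axiom.

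For Part~\ref{prop:vel:state} I would rerun the argument of Lemma~\ref{lemma:sq-ll} clause by clause over Def.~\ref{def:s_equiv}: \emph{Goal Transformation} is now subsumed by Part~\ref{prop:vel:goal} (with the extended $\equiv_G$), \emph{Application of CT} is discharged by $\Sct$, \emph{Equality as Substitution} by the equality axioms (so $\Sigma$ must still include $\Seq$, exactly as in Lemma~\ref{lemma:sq-ll}), \emph{Neutrality of Redundant Global Variables} leaves the logical reading unchanged, and \emph{Equivalence of Failed States} follows because $\varphi\otimes\lzero\vdash\psi$ holds for all $\varphi,\psi$. For Part~\ref{prop:vel:conf} I would go through the four clauses of Def.~\ref{def:vee-config-equiv}: associativity and commutativity of $\vee$ reduce to $\alpha\oplus\beta\dashv\vdash\beta\oplus\alpha$ and $(\alpha\oplus\beta)\oplus\gamma\dashv\vdash\alpha\oplus(\beta\oplus\gamma)$; \emph{State Equivalence} follows from Part~\ref{prop:vel:state} and the $\oplus$-congruence; \emph{Neutrality of Failed States} follows from $\lzero\oplus\alpha\dashv\vdash\alpha$ — proved with $(L\lzero)$, $(L\oplus)$, $(Identity)$ in one direction and $(R\oplus_2)$ in the other — after observing that $S_\bot^L=\lone\otimes\lzero\dashv\vdash\lzero$; and \emph{Split} reduces, modulo associativity and commutativity of $\oplus$, to the distributivity of $\exists$ over $\oplus$, namely $\exists x.(\alpha\oplus\beta)\dashv\vdash(\exists x.\alpha)\oplus(\exists x.\beta)$.

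I expect the \emph{Split} clause to be the crux: it is the only generator that exercises $\oplus$ beyond its monoid structure, and establishing $\exists_{-\V}.(\G_1^L\oplus\G_2^L)\dashv\vdash(\exists_{-\V}.\G_1^L)\oplus(\exists_{-\V}.\G_2^L)$ requires pushing $(L\exists)$/$(R\exists)$ through $(L\oplus)$/$(R\oplus_i)$ once per quantified variable and then re-bracketing the result against $\bT^L$ via the $\oplus$ monoid laws. The remaining subtlety is purely organisational — making the congruence step precise so that equivalences proved in isolation may legitimately be applied inside an arbitrary $\cdot\vee\bT$ context — and is handled by the induction announced at the outset.
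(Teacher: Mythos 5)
Your proof is correct and follows essentially the same route as the paper's: verify that each generator of $\equiv_G$, $\equiv_e$ and $\equiv_\vee$ is sent by $(\cdot)^L$ to a provable linear-logic equivalence and close under reflexivity, symmetry, transitivity and congruence, with the \emph{Split} clause discharged by distributivity of $\exists$ over $\oplus$ exactly as in the paper. Your remark that $\Sigma$ must still include $\Seq$ to handle the \emph{Equality as Substitution} clause is well taken: the statement as printed writes $\Sigma=\Sct$, but its proof of part two defers to Lemma~\ref{lemma:sq-ll}, which works with $\Sct\cup\Seq$.
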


Theorem~\ref{thm:soundness} states the soundness of the axiomatic linear-logic
semantics for CHR$^\vee$.

\begin{theorem}[Soundness]
  \label{thm:vee-soundness}
  For any CHR$^\vee$ program $\bbP$, constraint theory $CT$ and configurations $\bU,\bV$,
  \[
  	[\bU]\mapsto^{*}[\bV] \quad\Rightarrow\quad \bU^L\vdash_\Sigma \bV^L
  \]
  where $\Sigma=\Sp\cup\Sct$.
\begin{proof}
Let $\bU,\bV$ be configurations such that $\bU\mapsto^{r}\bV$. According to
Def.~\ref{def:chr-op-sem}, there exists a variant of a rule with fresh variables
$(r\ @\ H_1\setminus H_2 \Leftrightarrow G\mid B)$ and configurations
$\bU'=\state{H_1\wedge H_2 \wedge G\wedge\G; \V}\vee\bT'$, $\bV'=\state{B_u
\wedge H_1\wedge B_b\wedge G\wedge\G; \V}\vee\bT'$  such that $\bU'\equiv
\bU$ and $\bV'\equiv \bV$.
Consequently, $\Sp$ contains:
\[
  H_1^L\x H_2^L\x G^L \vdash_\Sigma H_1^L\x \exists\by_r.(B^L\x G^L)
\]
Analogous to the proof of Thm.~\ref{thm:soundness}, we proceed to:
\[
  \exists_{-\V}.H_1^L\x H_2^L\x G^L\x\G
   \vdash_\Sigma
  \exists_{-\V}.H_1^L\x G^L \x B^L\x\G
\]
And then to:
\[
  (\exists_{-\V}.H_1^L\x H_2^L\x G^L\x\G)\oplus {\bT}^L
   \vdash_\Sigma
  (\exists_{-\V}.H_1^L\x G^L \x B^L\x\G)\oplus {\bT}^L
\]
This corresponds to $\bU'^L\vdash_\Sigma \bV'^L$. Lemma~\ref{lemma:sq-ll} then
proves that ${\bU}^L\vdash_\Sigma {\bV}^L$. As the judgement relation
$\vdash_\Sigma$ is transitive and reflexive, the relationship can be generalized
to the reflexive-transitive closure $\bU\mapsto^{*} \bV$.
\end{proof}
\end{theorem}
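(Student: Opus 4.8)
The plan is to argue by induction on the length of the derivation $[\bU]\mapsto^{*}[\bV]$, reducing the statement to the case of a single transition and composing the steps by transitivity of the judgement relation (the $(Cut)$ rule), in complete analogy with the soundness proof for pure CHR (Theorem~\ref{thm:soundness}). In the base case the derivation has length zero, so $[\bU]=[\bV]$, i.e. $\bU\equv\bV$, and Lemma~\ref{lemma:vee-ce-ll}.\ref{prop:vel:conf} gives $\bU^L\dashv\vdash_{\Sct}\bV^L$, hence $\bU^L\vdash_\Sigma\bV^L$ since $\Sct\subseteq\Sigma$. For the inductive step, write the derivation as $[\bU]\mapsto^{*}[\bar{W}]\mapsto^{r}[\bV]$; the induction hypothesis yields $\bU^L\vdash_\Sigma\bar{W}^L$, so it suffices to establish $\bar{W}^L\vdash_\Sigma\bV^L$ for one transition step and then apply $(Cut)$.

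For the single-step case I would unfold Def.~\ref{def:vee-tr-system}: there is a variant $r\ @\ H_1\setminus H_2\Leftrightarrow G\mid B$ of a rule of $\bbP$ with local variables $\by_r$ disjoint from $vars(H_1,H_2,\G,\V)$, and a configuration $\bT$, such that $\bar{W}\equv\state{H_1\wedge H_2\wedge G\wedge\G;\V}\vee\bT$ and $\bV\equv\state{H_1\wedge G\wedge B\wedge\G;\V}\vee\bT$. The corresponding $\Sp$-axiom is $H_1^L\x H_2^L\x G^L\vdash H_1^L\x\exists\by_r.(B^L\x G^L)$. Framing it with the context $\G^L$ by means of $(L\x)$, $(R\x)$ and $(Identity)$ on $\G^L$, extruding the scope of $\exists\by_r$ over $H_1^L$ and $\G^L$ (admissible since $\by_r$ is fresh), and prefixing $\exists_{-\V}$ on both sides, one derives
\[
	\exists_{-\V}.(H_1^L\x H_2^L\x G^L\x\G^L)\vdash_\Sigma\exists_{-\V}.(H_1^L\x G^L\x B^L\x\G^L),
\]
that is, $\state{H_1\wedge H_2\wedge G\wedge\G;\V}^L\vdash_\Sigma\state{H_1\wedge G\wedge B\wedge\G;\V}^L$. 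Since $(\bS\vee\bT)^L=\bS^L\oplus\bT^L$ and any $\alpha\vdash_\Sigma\beta$ gives $\alpha\oplus\bT^L\vdash_\Sigma\beta\oplus\bT^L$ by $(L\oplus)$ (using $(R\oplus_1)$ on the left premise and $(Identity)$ together with $(R\oplus_2)$ on the right), appending the untouched disjunct $\bT^L$ yields the judgement between the two representatives. Finally, Lemma~\ref{lemma:vee-ce-ll}.\ref{prop:vel:conf} relates these representatives back to $\bar{W}$ and $\bV$, and two applications of $(Cut)$ give $\bar{W}^L\vdash_\Sigma\bV^L$, closing the inductive step.

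I expect the only delicate point to be the variable bookkeeping inside the framed $\Sp$-axiom: one must justify that $H_1^L\x\exists\by_r.(B^L\x G^L)\x\G^L$ may be rewritten as $\exists\by_r.(H_1^L\x B^L\x G^L\x\G^L)$ and that the resulting inner $\exists\by_r$ is absorbed into the outer $\exists_{-\V}$, both of which rest on the disjointness of $\by_r$ from $vars(H_1,H_2,\G,\V)$ guaranteed by Def.~\ref{def:vee-tr-system}. Everything else is routine structural reasoning about linear-logic sequents, mirroring the pure-CHR argument with the extra $\oplus$-layer handled as above.
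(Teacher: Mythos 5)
Your proposal is correct and follows essentially the same route as the paper: a single rule application is handled by framing the corresponding $\Sp$-axiom with the goal context and the existential prefix, appending the untouched disjunct via $\oplus$, and relating representatives by the equivalence lemma, with the reflexive-transitive closure handled by transitivity of $\vdash_\Sigma$. If anything, your explicit induction, your treatment of the $\oplus$-framing, and your appeal to Lemma~\ref{lemma:vee-ce-ll}.\ref{prop:vel:conf} (rather than the pure-CHR Lemma~\ref{lemma:sq-ll}, which the paper cites for what is really a configuration equivalence) make the argument slightly more careful than the paper's own sketch.
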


\subsubsection{Configuration Entailment}

Analogously to state entailment for pure CHR, we define a notion of
\emph{configuration entailment} to characterize the discrepance between
transitions in a CHR$^\vee$ program and judgements in its corresponding sequent
calculus system and thus to completeness of the linear-logic semantics:

\begin{definition}[Entailment of Configurations]
\label{def:c_entail}
Entailment of configurations, denoted as $\cdot\entv\cdot$, is the smallest
reflexive-transitive relation over equivalence classes of configurations satisfying
the following conditions:
  \begin{enumerate}
    \item \label{cond:cn_wea}
    \emph{Weakening:} For any state $S$ and configuration
    $\bar T$:
    \[
        [ \bar T ]
          \entv
        [ S \vee \bar T ]
      \]
    \item \label{cond:cn_stronger}
    \emph{Redundance of Stronger States:} For any CHR$^\vee$
      states $S_1,S_2,T$ such that
      $S_1\ent S_2$:
      \[
        [ S_1 \vee S_2 \vee \bar T ]
          \entv
        [ S_2 \vee \bar T ]
      \]
    \end{enumerate}
\end{definition}

The following property follows from the definition:

\begin{property}[($\ent\Rightarrow\entv$)]
For CHR$^\vee$ states $S_1,S_2$ such that $S_1\ent S_2$:
\[
        [ S_1 \vee \bar T ]
          \entv
        [ S_2 \vee \bar T ]
\]
\begin{proof}
$[ S_1 \vee \bar T ]\entv[ S_2 \vee S_1 \vee \bar T ]=[ S_1 \vee S_2 \vee \bar T
]\entv[ S_2 \vee \bar T ]$
\end{proof}
\end{property}

Lemma~\ref{lemma:vee-exchange} corresponds to Lemma~\ref{lemma:exchange} for
the case of pure CHR.

\begin{lemma}[Exchange of $\mapsto$ and $\entv$]\label{lemma:vee-exchange}
  Let $\bS,\bU,\bT$ be configurations. If $\bS\entv
  \bU$ and $\bU\mapsto^r \bT$ then there exists a configuration $\bV$ such that
  $\bS\mapsto^{*} \bV$ and $\bV\entv \bT$.
\begin{proof}
Firstly, we consider hypothesis with respect to the axioms of configuration
entailment (cf. Def.~\ref{def:c_entail}):
\begin{description}
 \item [Def.~\ref{def:c_entail}.\ref{cond:cn_wea}]
	Assume that $[\bS]\entv[S\vee\bS]\mapsto^{r}[\bT]$. It follows that either (i)
	$[S]\mapsto^r [S']$ and $[\bT]=[S'\vee\bS]$ or (ii) $[\bS]\mapsto^r[\bS']$ and
	$[\bT]=[S\vee\bS']$. In case (i), we have $[\bV]=[\bS]$ and $[\bS] \entv [\bT]$.
	In case (ii), we have $[\bV]=[\bS']$ and $[\bS]\mapsto^r [\bS']\entv
	[S\vee\bS']=[\bT]$.
 \item [Def.~\ref{def:c_entail}.\ref{cond:cn_stronger}]
	Assume that $[S_1\vee S_2\vee\bS]\entv[S_2\vee\bS]\mapsto^{r}[\bT]$ where
	$[S_1]\ent [S_2]$. It follows that either (i) $[S_2]\mapsto^r [S_2']$ and
	$[\bT]=[S_2'\vee\bS]$ or (ii) $[\bS]\mapsto^r[\bS']$ and $[\bT]=[S_2\vee\bS']$.
	In case (i), Lemma~\ref{lemma:exchange} proves that there exists an $S_1'$ such
	that $[S_1]\mapsto^r [S_1']$ and $[S_1']\ent [S_2']$. Hence, we get
	$[\bV]=[S_1'\vee S_2'\vee\bS]$ and $[S_1\vee S_2\vee\bS]\mapsto^r [S_1'\vee
	S_2\vee\bS] \mapsto^r [S_1'\vee S_2'\vee\bS]\entv [S_2'\vee\bS]=[\bT]$. In case
	(ii), we have $[\bV]= [S_1\vee S_2\vee\bS']$ and $[S_1\vee S_2\vee\bS]\mapsto^r
	[S_1\vee S_2\vee\bS'] \entv [S_2\vee\bS]=[\bT]$.
\end{description}
For the reflexive closure of these axioms, the hypothesis is true as
$[\bS]=[\bU]$ implies $[\bV]=[\bT]$. For their transitive closure, it follows by
induction. Hence, the hypothesis holds for configuration entailment in general.
\end{proof}
\end{lemma}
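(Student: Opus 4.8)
The plan is to exploit the inductive definition of $\entv$ in Def.~\ref{def:c_entail}: $\entv$ is the least reflexive-transitive relation on equivalence classes of configurations closed under the two conditions \emph{Weakening} and \emph{Redundance of Stronger States}. Accordingly I would first prove the exchange property when $\bS\entv\bU$ arises from a \emph{single} generating step, and then bootstrap to the general case. The one structural fact used throughout is that a $\oesqv$-transition $\bU\mapsto^r\bT$ (Def.~\ref{def:vee-tr-system}) rewrites exactly one member state of $\bU$; so once the shape of $\bU$ is pinned down by a generating step, all that matters is \emph{which} member is rewritten.

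For the single-step case I would do a case analysis (the reflexive case $\bS\equv\bU$ being trivial: replay the transition, take $\bV=\bT$). If the step is \emph{Weakening}, so $\bU=S\vee\bS$ for some extra state $S$, then either the transition fires inside $S$ --- take $\bV=\bS$ with zero transitions, and $\bV\entv\bT$ by one further \emph{Weakening} --- or it fires inside $\bS$, which we mirror to get $\bS\mapsto^r\bV$ with $\bV\entv\bT$ again by \emph{Weakening}. If the step is \emph{Redundance}, so $\bU$ is $\bS$ with a member $S_1$ deleted, where $S_1\ent S_2$ for another member $S_2$ of $\bU$, then a transition firing outside $S_2$ is mirrored in $\bS$ directly; the only real work is a transition $S_2\mapsto^r S_2'$ firing inside $S_2$, where I invoke the pure-CHR exchange lemma (Lemma~\ref{lemma:exchange}) on $S_1\ent S_2$ to obtain $S_1\mapsto^r S_1'$ with $S_1'\ent S_2'$; firing first in $S_1$, then in $S_2$ gives a two-step derivation $\bS\mapsto^{*}\bV$ with $\bV$ the configuration obtained by replacing $S_1,S_2$ by $S_1',S_2'$, and one \emph{Redundance} step (using $S_1'\ent S_2'$) gives $\bV\entv\bT$. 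In every subcase the resulting $\bV\entv\bT$ is again a single generating step (or reflexive), which I record since the lifting needs it; and this subcase is exactly where the conclusion genuinely requires $\mapsto^{*}$ rather than a single transition.

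Finally I would lift in two stages. Stage one: if $\bS\entv\bU$ is a single generating step and $\bU\mapsto^{*}\bT$, then induction on the length of that transition sequence --- applying the single-step result to each transition and using that the entailment stays a single step --- yields $\bV$ with $\bS\mapsto^{*}\bV$ and $\bV\entv\bT$. Stage two: for a general $\bS\entv\bU$, decompose it into a finite chain of single generating steps and induct on the chain length; peel off the first step $\bS\entv\bS_1$, apply the induction hypothesis to $\bS_1\entv\bU\mapsto^r\bT$ to get $\bS_1\mapsto^{*}\bV_1\entv\bT$, use stage one to push $\bS_1\mapsto^{*}\bV_1$ back across $\bS\entv\bS_1$ obtaining $\bS\mapsto^{*}\bV\entv\bV_1$, and close with transitivity of $\entv$. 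The only delicate point I anticipate is the \emph{Redundance} subcase above, and the substantive work there is done by Lemma~\ref{lemma:exchange}; everything else is routine mirroring and bookkeeping.
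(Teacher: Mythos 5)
Your proposal matches the paper's proof essentially step for step: the same case analysis over the two generating conditions of $\entv$, the same treatment of whether the transition fires in the affected member state or elsewhere, and the same appeal to Lemma~\ref{lemma:exchange} in the \emph{Redundance} subcase to produce the two-step derivation through $S_1'$ and $S_2'$, followed by lifting through reflexivity and transitivity. Your explicit "stage one" strengthening to $\mapsto^{*}$ before inducting on the entailment chain is a detail the paper compresses into "it follows by induction," but it is the same argument.
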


\subsubsection{Completeness of the Linear-Logic Semantics for
CHR$^\vee$}

Lemma~\ref{lemma:vee-proof-structure} sets the stage for the completeness
theorem. Its proof is analogous to the proof of Lemma~\ref{lemma:proof-structure}
and will be omitted here:

\begin{lemma}
\label{lemma:vee-proof-structure}
Let $\pi$ be a cut-reduced proof of a sequent $\bS^L\vdash \bT^L$ where $\bS,\bT$
are arbitrary configurations. Any formula $\alpha$ in $\pi$ is either of the form
$\alpha=\bS_\alpha^L$ or of the form $\alpha=\cb$ where $\bS_\alpha$ is a
configuration and $\cb$ is a built-in constraint.
\end{lemma}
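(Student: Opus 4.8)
The plan is to mirror, almost line for line, the proof of Lemma~\ref{lemma:proof-structure}, the one genuine addition being that configuration readings now carry the additive disjunction $\oplus$. First I would record the two structural facts that drive the argument. (i)~The root sequent $\bS^L\vdash\bT^L$ is a judgement between readings of configurations. (ii)~Every proper axiom in $\Sigma=\Sp\cup\Sct\cup\Seq$, as listed in Fig.~\ref{fig:chrv-axiomatic-semantics}, is likewise of the form $\bU_1^L\vdash\bU_2^L$ for configurations $\bU_1,\bU_2$: for $\Sct$ and $\Seq$ this is immediate, since their two sides are readings of built-in stores, resp.\ of the flat goal $\cu\wedge(t_j\doteq u)$, and a flat state is a singleton configuration; for an $\Sp$-axiom the head side $H_1^L\x H_2^L\x G^L$ is again a goal reading, while the body side $H_1^L\x\exists\by_r.(B^L\x G^L)$ is the reading of the \chrv\ state $\state{H_1\wedge B\wedge G;\V}$ once the leading existential block is floated past $H_1^L$ (legitimate because $\by_r$ is disjoint from $vars(H_1)$), which is exactly the normalisation already relied on in the proof of Lemma~\ref{lemma:proof-structure}. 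With (i) and (ii) in hand I would invoke the subformula property for cut-reduced proofs quoted in Sect.~\ref{sec:ill}: every formula $\alpha$ occurring in $\pi$ is, modulo variable renaming, a subformula of $\bS^L$, of $\bT^L$, or of one side of a proper axiom, hence a subformula of the reading of some configuration.

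The second step is a purely syntactic inspection of the subformulas of a configuration reading. By Fig.~\ref{fig:chrv-axiomatic-semantics}, such a reading is either $\lzero$ (the reading of the empty configuration $\varepsilon$) or a right-nested $\oplus$-chain $(\exists_{-\V_1}.\G_1^L)\oplus\cdots\oplus(\exists_{-\V_p}.\G_p^L)$ of state readings; a state reading is $\exists_{-\V}.\G^L$ for a \chrv\ goal $\G$; and a goal reading $\G^L$ is built from the atoms $\cu$, $\bang\cb$, $\lone$, $\lzero$ using $\x$ and $\oplus$. I would then check, case by case, that every subterm stays inside the two admitted shapes: a tail of an $\oplus$-chain of state readings is again a configuration reading; stripping leading existential quantifiers off a state reading yields a state reading once the set of global variables is chosen accordingly; any subterm of a goal reading is again a goal reading or a bare built-in atom $\cb$, and a goal reading $\G^L$ equals $\state{\G;vars(\G)}^L$, hence is a state reading; in particular $\bang\cb=\cb^L$ is a state reading. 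The only subterm not of the form $\bS_\alpha^L$ is thus the bare atom $\cb$ sitting under a $\bang$, which is the second alternative of the statement. This exhausts the subformulas and proves the claim, so the proof of Lemma~\ref{lemma:proof-structure} can be transcribed essentially verbatim with ``$\oplus$'' adjoined to the goal-reading grammar.

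The only place where care is needed, and the point I would flag as the main (mild) obstacle, is keeping the bookkeeping around $\oplus$ uniform: one has to be satisfied that a disjunction extracted from the middle of a longer $\oplus$-chain, and equally a disjunction occurring inside a single state's goal (both \chrv\ states and rule bodies may be non-flat, unlike in pure CHR), still qualifies as a configuration reading. This holds precisely because the translations of Fig.~\ref{fig:chrv-axiomatic-semantics} are lax enough: the global-variable set of a state is a free parameter and both $\x$ and $\oplus$ are admissible connectives inside goal readings, so, e.g., $\G_1^L\oplus\G_2^L$ may be read either as $(\G_1\vee\G_2)^L$ (one state, all variables global) or as $(\state{\G_1;\V}\vee\state{\G_2;\V})^L$ (a two-element configuration). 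Beyond this, no new difficulty arises relative to the pure-CHR case, and I expect the write-up to be short.
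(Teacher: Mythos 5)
Your proposal is correct and follows exactly the route the paper intends: the paper omits this proof, stating only that it is analogous to Lemma~\ref{lemma:proof-structure}, and your argument is precisely that analogue --- root and proper axioms are configuration readings, the subformula property for cut-reduced proofs applies, and a case inspection of the (now $\oplus$-enriched) grammar of configuration readings closes the argument. Your remark that a formula such as $\G_1^L\oplus\G_2^L$ can be parsed either as a single non-flat state or as a two-element configuration also matches the paper's own observation, immediately after the lemma, that the configuration $\bS_\alpha$ is not unique.
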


It should be noted that the configuration
$\bS_\alpha$ is not necessarily unique, i.e. more than one configuration might
map to a specific formula. For example, let formula $\alpha=\cu\oplus\cu$. We
then have
$\state{\cu\vee\cu;vars(\cu)}^L
=(\state{\cu;vars(\cu)}\vee\state{\cu;vars(\cu)})^L=\alpha$. However, we
have by Def.~\ref{def:vee-config-equiv}.\ref{cond:vce_split} that
$\bS^L=\bT^L\Rightarrow \bS\entv\bT$.

\begin{theorem}[Completeness of the Semantics for CHR$^\vee$]
\label{thm:vee-completeness}
   Let $\bS,\bT$ be configurations, let $\bbP$ be a program and $CT$ be a
   constraint theory. Then the sequent $\bS^L\vdash \bT^L$ is provable in a
   sequent calculus system with proper axioms $\Sigma=\Sct\cup\Seq\cup\Sp$
   \emph{if and only if} there exists a configuration $\bU$ such that
   $\bS\mapsto^{*} \bU$ and $\bU\ent \bT$.

\begin{proof}
To preserve clarity, we will omit the set of proper axioms from the judgement
symbol. Furthermore, $\cD(\bU,\bV)$ denotes the fact that for configurations
$\bU,\bV$,  there exist configurations $\bU_1,\ldots,\bU_n$ for some $n$ such
that:
	\[
		\bU \mapsto_\vee \bU_1 \mapsto_\vee \ldots \mapsto_\vee \bU_n \entv \bV
	\]
Entailment $\bU\entv \bV$ implies $\cD(\bU,\bV)$. We define
$\cdot\Diamonddot\cdot$ as in the proof of Thm.~\ref{thm:completeness}.

Let $\pi$ be a cut-reduced proof of $\bS^L\vdash \bT^L$.
We assume w.l.o.g. that all existentially quantified variables
in the antecedent of a sequent in $\pi$ are renamed apart. We define $\eta$ as an
extension of the completion function from the proof
of Thm.~\ref{thm:completeness} to configurations:

	\medskip
	\begin{tabular}{l @{\hspace{1mm} $::=$ \hspace{1mm}} l}
	  $\eta(\bS^L)$ & $\bS^L$ \\
	  $\eta(\cb)$ & $!\cb$ \\
	  $\eta(\alpha,\Gamma)$ & $\eta(\alpha)\Diamonddot\eta(\Gamma)$ \\
	  $\eta(\Gamma\vdash\alpha)$ & $\eta(\Gamma)\vdash\eta(\alpha)$
	  	\quad for non-empty $\Gamma$ \\
	  $\eta(\vdash\alpha)$ & $\lone\vdash\eta(\alpha)$ \\
	\end{tabular}

	\medskip
	From Lemma~\ref{lemma:vee-proof-structure} follows that for every
	sequent $\Gamma\vdash\alpha$ in $\pi$, we have
	$\eta(\Gamma\vdash\alpha)=\bU^L\vdash\bV^L$ for some configurations $\bU,\bV$. We
	show by induction over the depth of $\pi$ that for every such $\bU^L\vdash
	\bV^L$, we have $\cD(\bU,\bV)$.

	\textbf{Base case:}
	In case the proof of $\bS^L\vdash \bT^L$ consists in a leaf, it is
	an instance of $(Identity)$, $(L\lzero)$, $(R\lone)$, or a proper axiom
	$(\Gamma\vdash\alpha)\in(\Seq\cup\Sct\cup\Sp)$.
	We apply the same arguments as in the proof of Thm.~\ref{thm:completeness}.
	Thm.~\ref{thm:completeness}.

	\textbf{Induction step:}
	As $\pi$ is cut-reduced, the final inference rule either has to be one of
	$(Cut)$, $(L\otimes)$, $(R\otimes)$, $(L\lone)$, $(Weakening)$, $(Dereliction)$,
	$(Contraction)$, $(R!)$, $(L\exists)$ and $(R\exists)$, or one of $(L\oplus)$,
	$(R\oplus_1)$, and $(R\oplus_2)$. In the former case, we can follow the same
	arguments as in the proof of Thm.~\ref{thm:completeness}. In the following,
	we consider $(L\oplus)$, $(R\oplus_1)$ and $(R\oplus_2)$.

	\begin{itemize}
	  \item $(L\oplus)$:
		\[
		\infer[(L\oplus)]{\Gamma,\alpha\oplus\beta\vdash\gamma}
		    {\Gamma,\alpha\vdash\gamma&\Gamma,\beta\vdash\gamma}
		\]
		Let $\G_\alpha,\G_\beta$ be goals, let
$S_\Gamma=\state{\G;\V}$ be a state
		and let $\bS_\beta$ be a configuration such that
$\G_\alpha^L=\eta(\alpha)$,
		$\G_\beta^L=\eta(\beta)$, $S_\Gamma^L=\eta(\Gamma)$ and
		$\bS_\gamma^L=\eta(\gamma)$. Let furthermore $\by_\alpha=vars(\G_\alpha)$ and $\by_\beta=vars(\G_\beta)$. Hence,
		$\eta(\Gamma,\alpha)=\state{\G\wedge\G_\alpha;\V\cup\by_\alpha}^L$,
		$\eta(\Gamma,\beta)=\state{\G\wedge\G_\beta;\V\cup\by_\beta}^L$, and
		$\eta(\Gamma,\alpha\oplus\beta)=\state{\G\wedge(\G_\alpha \vee
\G_\beta);\V\cup\by_\alpha\cup\by_\beta}^L$.
		The induction hypothesis gives us
		$\cD(\state{\G\wedge\G_\alpha;\V\cup\by_\alpha},\bS_\gamma)$ and
		$\cD(\state{\G\wedge\G_\beta;\V\cup\by_\beta},\bS_\gamma)$.
		By Def.~\ref{def:vee-config-equiv}.\ref{cond:vce_split} we have that
		$\eta(\Gamma,\alpha\oplus\beta)\equiv\state{\G\wedge\G_\alpha;\V\cup\by_\alpha}\vee\state{
		\G\wedge\G_\beta;\V\cup\by_\beta}$.
		Finally by Lemma~\ref{lemma:vee-exchange}, we get $\cD(\state{\G\wedge(\G_\alpha \vee
		\G_\beta);\V\cup\by_\alpha\cup\by_\beta},\bS_\gamma)$.

	  \item $(R\oplus_1)$,  $(R\oplus_2)$:
	  	\[
		\infer[(R\oplus_1)]{\Gamma\vdash\alpha\oplus\beta}
		    {\Gamma\vdash\alpha}
		\quad\quad
		\infer[(R\oplus_2)]{\Gamma\vdash\alpha\oplus\beta}
		    {\Gamma\vdash\beta}
		\]
		We consider $(R\oplus_1)$: By the subformula property, there exist
		configurations $\bS_\Gamma, \bS_\alpha, \bS_\beta$, such that
		$\bS_\Gamma^L=\eta(\Gamma)$, $\bS_\alpha^L=\eta(\alpha)$, and
		$\bS_\beta^L=\eta(\beta)$. By the induction hypothesis, we have
		$\cD(\bS_\Gamma,\bS_\alpha)$. By
		Def.~\ref{def:c_entail}.\ref{cond:cn_wea}, we have
		$\bS_\alpha\entv(\bS_\alpha\vee\bS_\beta)$ and therefore
		$\cD_n(\bS_\Gamma,\bS_\alpha\oplus\bS_\beta)$. (The proof for $(R\oplus_2)$
		works analogously.)
	\end{itemize}

	Finally, we have $\cD(\bS,\bT)$, i.e. there exist configurations $\bS_1,\ldots \bS_n$ such that:
	\[
		\bS \mapsto \bS_1 \ldots \mapsto \bS_n \ent \bT
	\]
	It follows that for $\bU=\bS_n$, we have $\bS\mapsto^{*}\bU$ and $\bU\ent\bT$.
\end{proof}
\end{theorem}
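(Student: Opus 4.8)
The plan is to prove the two implications separately, the easy one first. If $\bS\mapsto^{*}\bU$ and $\bU\entv\bT$, then soundness (Theorem~\ref{thm:vee-soundness}) already gives $\bS^L\vdash_\Sigma\bU^L$, so it suffices to observe that configuration entailment implies logical judgement; this is checked condition by condition against Def.~\ref{def:c_entail}, with weakening $\bar T\entv S\vee\bar T$ an instance of $(R\oplus_2)$ and redundance of a stronger state reducing, via $S_1\ent S_2\Rightarrow S_1^L\vdash S_2^L$ (Lemma~\ref{lemma:judge-entail}) together with an application of $(L\oplus)$, while reflexivity and transitivity are inherited from $\vdash$.

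For the hard direction, assume $\bS^L\vdash_\Sigma\bT^L$ and, by the Cut Elimination Theorem (Theorem~\ref{thm:cut-elim}.\ref{prop:cut-red}), fix a cut-reduced proof $\pi$. By Lemma~\ref{lemma:vee-proof-structure} every formula of $\pi$ is either the reading $\bS_\alpha^L$ of a configuration or a bare built-in constraint $\cb$; I extend the completion function $\eta$ from the proof of Theorem~\ref{thm:completeness} to configurations (banging bare built-ins, folding antecedent commas with $\Diamonddot$, padding the empty antecedent with $\lone$) so that the $\eta$-completion of every sequent of $\pi$ takes the form $\bU^L\vdash\bV^L$. Writing $\cD(\bU,\bV)$ for the existence of a finite chain $\bU\mapsto_\vee\cdots\mapsto_\vee\bU_n\entv\bV$, the target is to show by induction on the depth of $\pi$ that $\cD(\bU,\bV)$ holds for the $\eta$-completion of every sequent in $\pi$; applied to the root this yields $\cD(\bS,\bT)$, and taking $\bU$ to be the final configuration of that chain finishes the proof.

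In the base case, $(Identity)$, $(R\lone)$ and $(L\lzero)$ are settled exactly as in Theorem~\ref{thm:completeness} (reflexivity of $\entv$, respectively neutrality of failed states), proper axioms from $\Sct\cup\Seq$ are of the form $U^L\vdash V^L$ with $U\ent V$ --- hence $U\entv V$ and $\cD(U,V)$ --- and an $\Sp$-axiom yields $\cD$ through a single $\mapsto_\vee$-step. In the induction step, the rules shared with the pure-CHR calculus --- $(Cut)$, $(L\otimes)$, $(R\otimes)$, $(L\lone)$, $(Weakening)$, $(Dereliction)$, $(Contraction)$, $(R!)$, $(L\exists)$, $(R\exists)$ --- go through by the arguments of Theorem~\ref{thm:completeness}, now reading $\entv$ for $\ent$, $\Diamonddot$ for merging, Lemma~\ref{lemma:vee-exchange} for Lemma~\ref{lemma:exchange}, and the configuration-level analogue of Lemma~\ref{lemma:merging_derived} wherever $(R\otimes)$ and $(Cut)$ use it. The three genuinely new cases are the additive-disjunction rules: in $(L\oplus)$ the two premises complete to states whose stores carry $\G_\alpha$ and $\G_\beta$ while the conclusion completes to a state carrying $\G_\alpha\vee\G_\beta$, which by Def.~\ref{def:vee-config-equiv}.\ref{cond:vce_split} is $\equiv_\vee$ to the juxtaposition of the two branch states, so the two instances of $\cD$ from the induction hypothesis combine into one via Lemma~\ref{lemma:vee-exchange}; in $(R\oplus_1)$ (and symmetrically $(R\oplus_2)$) the conclusion completes to $\bS_\alpha^L\oplus\bS_\beta^L$, the induction hypothesis gives $\cD(\bS_\Gamma,\bS_\alpha)$, and weakening of configuration entailment (Def.~\ref{def:c_entail}.\ref{cond:cn_wea}) extends this to $\cD(\bS_\Gamma,\bS_\alpha\vee\bS_\beta)$.

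I expect $(L\oplus)$ to be the main obstacle: one has to argue that the $\eta$-completion of an antecedent containing $\alpha\oplus\beta$ really does correspond --- through the \emph{Split} equivalence --- to a configuration whose two member states are precisely the completions of $\Gamma,\alpha$ and $\Gamma,\beta$, which requires care about which variables of $\G_\alpha$ and $\G_\beta$ get promoted from local to global by the existential reading of states, and about the fact that a single formula may be read by several configurations (harmless, since $\bS^L=\bT^L$ already implies $\bS\entv\bT$ by Def.~\ref{def:vee-config-equiv}.\ref{cond:vce_split}). A secondary check is that the inherited pure-CHR cases genuinely survive the replacement of $\ent$ by $\entv$ and of merging by $\Diamonddot$ --- in particular that $\Diamonddot$ distributes suitably over $\oplus$, so that the $(R\otimes)$ and $(Cut)$ arguments remain valid without change.
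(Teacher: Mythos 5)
Your proposal follows essentially the same route as the paper: a cut-reduced proof, the structural Lemma~\ref{lemma:vee-proof-structure}, the extended completion function $\eta$, induction on proof depth reusing the pure-CHR cases of Theorem~\ref{thm:completeness}, and the new $(L\oplus)$/$(R\oplus)$ cases handled via the \emph{Split} equivalence with Lemma~\ref{lemma:vee-exchange} and via weakening of configuration entailment, respectively. The only difference is that you also spell out the converse (soundness) direction, which the paper's proof leaves to Theorem~\ref{thm:vee-soundness} and Lemma~\ref{lemma:vee-cn-ll}; the concerns you flag about the $(L\oplus)$ case are legitimate but are resolved exactly as you anticipate.
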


\begin{lemma}[($\entv\Leftrightarrow\,\vdash$)]
\label{lemma:vee-cn-ll}
For configurations $\bS,\bT$, we have $[\bS]\entv[\bT]$ \emph{if and only if} $\bS^L\vdash_\Sigma\bT^L$ where $\Sigma=\Sct\cup\Seq$.
\end{lemma}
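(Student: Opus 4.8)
The plan is to follow the blueprint of Lemma~\ref{lemma:judge-entail} one level up, using the completeness result for \chrv\ in the ``$\vdash\Rightarrow\entv$'' direction and a direct structural check in the ``$\entv\Rightarrow\vdash$'' direction.

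For the direction $\bS^L\vdash_\Sigma\bT^L \Rightarrow [\bS]\entv[\bT]$, I would simply instantiate Theorem~\ref{thm:vee-completeness} with the empty program $\bbP=\emptyset$. Then $\Sp=\emptyset$, so $\Sigma=\Sct\cup\Seq$ matches the hypothesis exactly, and the theorem hands us a configuration $\bU$ with $\bS\der^{*}\bU$ and $\bU\entv\bT$. The one point worth spelling out — and the only place in the whole argument that is not entirely mechanical — is that the transition relation of $\oesqv$ (Def.~\ref{def:vee-tr-system}) has a single transition rule, which always consumes a rule of $\bbP$; with $\bbP=\emptyset$ no transition is possible, so $\bS\der^{*}\bU$ forces $[\bS]=[\bU]$ and hence $[\bS]\entv[\bT]$. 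This step is exactly what isolates the design choice that $\oesqv$ folds \textbf{Split} into $\equiv_\vee$ rather than treating it as a transition.

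For the converse, I would show that the judgement relation $\vdash_\Sigma$, read between linear-logic translations of configurations, satisfies the closure conditions defining $\entv$ in Def.~\ref{def:c_entail}, hence contains $\entv$. For \emph{Weakening} (\ref{cond:cn_wea}): $(R\oplus_2)$ applied to the $(Identity)$ axiom $\bT^L\vdash\bT^L$ yields $\bT^L\vdash S^L\oplus\bT^L=(S\vee\bT)^L$. For \emph{Redundance of Stronger States} (\ref{cond:cn_stronger}): from $S_1\ent S_2$, Lemma~\ref{lemma:judge-entail} gives $S_1^L\vdash_\Sigma S_2^L$; then $(R\oplus_1)$ gives $S_1^L\vdash_\Sigma S_2^L\oplus\bT^L$, and together with the $(Identity)$ instance $S_2^L\oplus\bT^L\vdash S_2^L\oplus\bT^L$ a single $(L\oplus)$ produces $(S_1\vee S_2\vee\bT)^L\vdash_\Sigma(S_2\vee\bT)^L$. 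Reflexivity of $\entv$ is covered by $(Identity)$ and transitivity by $(Cut)$; since $\entv$ is a relation on equivalence classes, I would close by noting that Lemma~\ref{lemma:vee-ce-ll}.\ref{prop:vel:conf} (so $\bS_1\equv\bS_2\Rightarrow\bS_1^L\dashv\vdash_\Sigma\bS_2^L$) together with $(Cut)$ makes the choice of representatives immaterial. I do not anticipate any genuine obstacle here, so in the text this lemma should need only a short sketch along these lines.
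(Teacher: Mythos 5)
Your proposal is correct and follows essentially the same route as the paper: the ``$\Leftarrow$'' direction by instantiating Theorem~\ref{thm:vee-completeness} with $\bbP=\emptyset$ (your explicit observation that the empty program admits no transitions, forcing $[\bS]=[\bU]$, is the intended but unstated step), and the ``$\Rightarrow$'' direction by verifying that $\vdash_\Sigma$ satisfies the two closure conditions of Def.~\ref{def:c_entail} plus reflexivity and transitivity, reducing to Lemma~\ref{lemma:vee-ce-ll}. Your citation of Lemma~\ref{lemma:judge-entail} for the step $S_1\ent S_2\Rightarrow S_1^L\vdash_\Sigma S_2^L$ is in fact the more accurate reference (the paper cites Lemma~\ref{lemma:sq-ll}, which concerns equivalence rather than entailment).
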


\begin{proof}
\noindent ('$\Leftarrow$') Follows from Thm.~\ref{thm:vee-completeness} by
assuming an empty program $\bbP=\emptyset$.

\noindent ('$\Rightarrow$')
We consider the axioms for configuration entailment in
Def.~\ref{def:c_entail}: W.r.t. axiom (\ref{cond:cn_wea}),
$[\bT]\entv[S\vee\bT]$ implies $\bT^L\vdash (S\vee\bT)^L$ since
$\beta\vdash\alpha\oplus\beta$. For
Def.~\ref{def:c_entail}.\ref{cond:cn_stronger}, $[S_1]\ent[S_2]$
implies $S_1^L\vdash_\Sigma S_2^L$ by Lemma~\ref{lemma:sq-ll}. From a proof of
$S_1^L\vdash_\Sigma S_2^L$, we can construct a proof of $S_1^L\oplus
S_2^L\oplus \bT^L \vdash_\Sigma S_2^L\oplus \bT^L$. As $\vdash_{\Sigma}$ is
furthermore reflexive and transitive, the hypothesis is reduced to
Lemma~\ref{lemma:vee-ce-ll}.
\end{proof}

Analogously to the encoding semantics for pure CHR, we define an encoding
semantics for CHR$^\vee$. The translation of states and configurations is
unchanged from the axiomatic semantics. The translation of constraint
theories is the same as in the encoding semantics for pure CHR. The translation
of rules and programs is updated to the syntax of CHR$^\vee$ as shown in
Fig.~\ref{fig:vee-encoding-semantics}.

\begin{figure}
\begin{center}
\fbox{
	\begin{tabular}{l @{\quad} r @{$\,::=\,$} l} \\
	\textrm{Rules:} & $(H_1\setminus H_2\Leftrightarrow G\mid B)^L$
	& $!\forall(H_1^L\x H_2^L\x G^L
	\multimap H_1^L\x \exists\by_r.(B^L\x G^L))$ \\
	\textrm{Programs:} & $\{R_1,\ldots,R_n\}^L$ &
	$\{R_1^L,\ldots,R_n^L\}$ \\
	\\
	\end{tabular}
}
\caption{The linear-logic encoding semantics for CHR$^\vee$}
\end{center}
\label{fig:vee-encoding-semantics}
\end{figure}

The soundness and completeness of the encoding semantics is proven analogously
to Theorem~\ref{thm:vee-completeness}:

\begin{theorem}[Soundness and Completeness of the Encoding Semantics]
  \label{theorem:chrv_embed_soundness_completeness}
  Let $\bS,\bT$ be configurations. There exists a configuration $\bU$ such that
  \[
    \bS\mapsto^{*}\bU\textrm{ and }\bU\ent \bT
  \]
  in a program $\bbP$ and a constraint theory $CT$ \emph{if and only if}
  \[
    \bbP^L,CT^L \vdash \forall(\bS^L\lp \bT^L)
  \]
\end{theorem}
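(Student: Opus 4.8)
The plan is to reduce the statement to its already-established counterpart in the \emph{axiomatic} semantics, Theorem~\ref{thm:vee-completeness}, and then to connect the two formulations by a proof-tree transformation of the kind used for pure CHR in the proof of Theorem~\ref{theorem:embed_soundness_completeness}. Recall that Theorem~\ref{thm:vee-completeness} already states, for $\Sigma=\Sct\cup\Seq\cup\Sp$, that $\bS^L\vdash_\Sigma\bT^L$ is provable \emph{if and only if} there is a configuration $\bU$ with $\bS\mapsto^{*}\bU$ and $\bU\ent\bT$ in $\bbP$ and $CT$. Hence it suffices to show that the \emph{axiomatic} judgement $\bS^L\vdash_\Sigma\bT^L$ is provable \emph{exactly when} the \emph{encoding} judgement $\bbP^L,CT^L\vdash\forall(\bS^L\lp\bT^L)$ is; composing this equivalence with Theorem~\ref{thm:vee-completeness} yields the claim. (The forward half of Theorem~\ref{thm:vee-completeness} is itself Theorem~\ref{thm:vee-soundness} together with Lemma~\ref{lemma:vee-cn-ll}, so no genuinely new operational reasoning is required here.)

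For the implication from the axiomatic to the encoding judgement, I would take a cut-reduced proof $\pi$ of $\bS^L\vdash_\Sigma\bT^L$ and replace every leaf that is a proper axiom by a small subproof carrying a banged context. A $\Sct$-axiom $\exists\bx.\B^L\vdash\exists\bx'.{\B'}^L$ is replaced by a subproof of $CT^L,\exists\bx.\B^L\vdash\exists\bx'.{\B'}^L$, which exists because $\B^L=\B^*$ and the Negri embedding (Def.~\ref{def:negri}, Def.~\ref{def:ct-l}) makes the corresponding $CT$-axiom available; a $\Seq$-axiom is discharged from the matching banged equality formula in $CT^L$; and an $\Sp$-axiom is discharged from the matching banged rule formula $R^L\in\bbP^L$ (Fig.~\ref{fig:vee-encoding-semantics}). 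Since every formula of $CT^L$ and $\bbP^L$ is banged, these copies can then be propagated downward through every inference rule of $\pi$ --- in particular through the new $(L\oplus),(R\oplus_1),(R\oplus_2)$ and through the context-splitting $(R\otimes)$, where a fresh copy of each banged formula is handed to each premise by $(Weakening)$ --- and recombined at the root by $(Contraction)$, producing a proof of $CT^L,\bbP^L,\bS^L\vdash\bT^L$. Applying $(R\lp)$ and then $(R\forall)$ gives $\bbP^L,CT^L\vdash\forall(\bS^L\lp\bT^L)$.

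For the converse implication I would first observe that every formula of $CT^L$ is provable in the axiomatic system: each $CT^*$-formula follows from its $\Sct$-axiom by $(R\lp),(R\forall),(R!)$, each added equality formula follows from its $\Seq$-axiom in the same way, and each rule formula $R^L\in\bbP^L$ follows from its $\Sp$-axiom likewise (the $(R!)$ step being legitimate because the relevant antecedent is empty). This yields proofs of $\vdash_\Sigma\bigotimes CT^L$ and $\vdash_{\Sp}\bigotimes\bbP^L$. Given a proof of $\bbP^L,CT^L\vdash\forall(\bS^L\lp\bT^L)$, I would merge the antecedent by $(L\otimes)^{*}$, cut against these two proofs, and then strip the quantifier with $(L\forall)^{*}$ and the implication with $(L\lp)$ against two $(Identity)$ leaves, obtaining $\bS^L\vdash_\Sigma\bT^L$ --- this is exactly the proof scheme used in Theorem~\ref{theorem:embed_soundness_completeness}, with $\oplus$-formulas handled like any other.

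The main obstacle, and the only part that needs real care, is the downward propagation of the banged contexts $CT^L,\bbP^L$ through the branching inferences of $\pi$ in the axiomatic-to-encoding direction: one has to verify that $(Weakening)$ really does supply each premise of $(R\otimes)$ and $(L\oplus)$ (and each side of the $\exists$-rules, respecting their eigenvariable conditions) with its own copies, that $(Contraction)$ recombines them at the root, and that the shape of $\pi$ admits no other form of formula --- which is where Lemma~\ref{lemma:vee-proof-structure} and the cut-elimination theorem (Theorem~\ref{thm:cut-elim}) are used. Beyond that, everything is bookkeeping: the additive disjunction contributes nothing conceptually new over the pure-CHR argument, and the link to $\oesqv$ is already delivered by Theorem~\ref{thm:vee-completeness}.
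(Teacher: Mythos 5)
Your proposal is correct and follows essentially the route the paper intends: the paper gives no explicit proof of Theorem~\ref{theorem:chrv_embed_soundness_completeness}, asserting only that it is proven analogously, and your argument --- composing the axiomatic result of Theorem~\ref{thm:vee-completeness} with the proof-tree transformation between axiomatic and encoding judgements lifted verbatim from the proof of Theorem~\ref{theorem:embed_soundness_completeness} --- is exactly that analogy spelled out. The additional care you devote to propagating the banged contexts through $(R\otimes)$, $(L\oplus)$ and the quantifier rules is the right place to focus, and nothing there fails.
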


As the encoding semantics is logically equivalent to the one proposed in
\citeN{Betz07}, Theorem~\ref{theorem:chrv_embed_soundness_completeness}
also proves the equivalence of the axiomatic linear-logic semantics with that
earlier semantics.

\subsection{Congruence and Analyticness}
\label{sec:vee-congruence}

The operational semantics $\oesq$ for \emph{pure} CHR features the pleasant
property that state equivalence coincides with mutual entailment of states (cf.
Corollary~\ref{crl:ent-equiv}). In this section, we show that the property of
mutual configuration entailment, henceforth called \emph{congruence of
configurations}, does not in general coincide with configuration equivalence.

To overcome this limitation, we introduce a well-behavedness property on
configurations -- \emph{compactness} -- and one on CHR$^\vee$ programs --
\emph{analyticness} -- which guarantee that congruence coincides with equivalence.

\begin{definition}[Congruence of Configurations] Given a constraint theory $CT$,
two configurations $\bS,\bT$ are considered \emph{congruent} if $\bS\entv\bT$ and
$\bT\entv\bS$. Congruence of $\bS$ and $\bT$ is denoted as $\bS\sim\bT$.
\end{definition}

Congruence of configurations does not generally comply with rule applications
as the following example shows.

\begin{example}[Non-Compliance with Rule Application] By compliance, we mean the
property that for arbitrary configurations $\bS,\bS',\bT$ such that
$\bS\equv\bS'$ and $\bS\mapsto^{*}\bT$, there exists a $\bT'$ such that
$\bS'\mapsto\bT'$ and $\bT'\equv\bT$.

Let $\bS=\state{c_u(X)}$ and $\bT=\state{c_u(0)}\vee\state{c_u(X)}$ be
configurations. As $\state{c_u(0)}\ent\state{c_u(X)}$, we have congruence:
$\bS\sim\bT$. Now consider the following minimal CHR program: \[
 r\ @\ c_u(0)\Leftrightarrow d_u(0)
\] We observe that we have $\bT\mapsto^r\state{d_u(0)}\vee\state{c_u(X)}$
whereas $\bS$ is an answer configuration i.e. it does not allow any further
transition. We thus observe that congruence of configurations is not in general
compliant with rule application.
\end{example}

However, we can make a somewhat weaker statement about the relationship between
congruence and rule application:

\begin{property}[Weak Compliance with Rule Application] Let $\bS,\bS',\bT$ be
configurations such that $\bS\sim\bS'$. Then $\bS\mapsto^{*}\bT$ implies that
there exists a $\bT'$ such that $\bS'\mapsto^{*}\bT'$ and $\bT'\entv\bT$.
\begin{proof}
$\bS\sim\bS'$ implies $\bS'\entv\bS$. Furthermore, we have $\bS\mapsto^{*}\bT$. Hence, Lemma~\ref{lemma:vee-exchange} proves $\bS'\mapsto^{*}\bT'$ and $\bT'\entv\bT$.
\end{proof}
\end{property}

As the congruence relation does not strongly comply with rule application, it is
not appropriate as a general equivalence relation over configurations. On the
other hand, from Lemma~\ref{lemma:vee-cn-ll} follows that congruence of
configurations coincides with logical equivalence over the respective
linear-logic readings:

\begin{property}
For arbitrary configurations $\bS,\bT$, we have $\bS\sim\bT \Leftrightarrow
\bS\equivll\bT$.
\end{property}

Hence, any reasoning over CHR$^\vee$ via the linear-logic semantics is
necessarily modulo congruence. In order to allow precise logical reasoning over
CHR$^\vee$, we identify a segment of CHR$^\vee$ where congruence and equivalence
of configurations coincide. Firstly, we introduce the notion of
\emph{compactness}:

\begin{definition}[Compactness]
 A configuration $\bS$ is called \emph{compact} if it does not have a
 representation $\bS'\equv\bS$ of the form $\bS'=S_1\vee S_2\vee\bS''$ where
 $S_1,S_2$ are flat states such that $S_1\not\equiv S_\bot$ and $S_1\ent S_2$.
\end{definition}

We extend the compactness property to equivalence classes of configurations in
the obvious manner. The following lemma states that compactness guarantees that
congruence and equivalence coincide.

\begin{lemma}
 Let $\bS,\bT$ be compact configurations such that $\bS\sim\bT$. Then $\bS\equv\bT$.
\begin{proof}
 Considering Def.\ref{def:vee-config-equiv}, we observe that every configuration
 $\bS$ has a representation of the form $\bS\equv S_1\vee\ldots\vee S_n$, where
 $S_i=\state{\U_i\wedge\B_i;\V_i}$ for $i\in\{1,\ldots,n\}$. By
 Def.~\ref{def:c_entail}, any two configurations $\bS,\bT$ where $\bS\entv\bT$
 have representations $\bS\equv S_1\vee\ldots\vee S_n, \bT\equv T_1\vee\ldots\vee
 T_m$ such that for every $S_i$ where $S_i\not\equiv S_\bot$, we have the exists
 a $T_j$ such that $S_i\ent T_j)$.

 As $\bS\sim\bT$, we have representations $\bS\equv S_1\vee\ldots\vee S_n,
 \bT\equv T_1\vee\ldots\vee T_m$ such that for every consistent $S_i$, we have a
 $T_j$ such that $S_i\ent T_j$, and for every consistent $T_j$ there is an $S_i$
 such that $T_j\ent S_i$. It follows that for every consistent $S_i$, we have
 $T_j, S_k$ such that $S_i\ent T_j \ent S_k$. As $\bS$ is compact, $S_i\ent S_k$
 implies $S_i\equiv S_k$ and furthermore $S_i\equiv T_j$. As $\bT$ is compact,
 there is exactly one $T_j$ such that $S_i\equiv T_j$.

 Since every consistent $S_i$ has a unique corresponding state $T_j$ with
 $S_i\equiv T_j$ and vice versa, Def.~\ref{def:vee-config-equiv} implies that
 $\bS\equv\bT$.
\end{proof}
\end{lemma}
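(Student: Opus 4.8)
The plan is to reduce both configurations to a canonical form -- a disjunction of pairwise non-equivalent consistent states -- then read off from the definition of $\entv$ what congruence forces on these forms, and finally extract a bijection between the two lists of states, using compactness at the two points where the matching could otherwise fail to be one-to-one.

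First I would normalise. Using Def.~\ref{def:vee-config-equiv}: \emph{Split} together with distributivity of goals lets us flatten every member state into a disjunction of flat states, and \emph{State Equivalence} together with \emph{Neutrality of Failed States} lets us delete every member state that is $\equiv_e S_\bot$. Hence $\bS$ has a representation $\bS\equv S_1\vee\ldots\vee S_n$ in which each $S_i$ is a flat, consistent ($\not\equiv S_\bot$) state, and likewise $\bT\equv T_1\vee\ldots\vee T_m$. Inspecting Def.~\ref{def:vee-config-equiv} shows that the multiset of $\equiv_e$-classes of consistent flat states obtained this way is a $\equv$-invariant, so these normal forms are essentially canonical. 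Moreover, compactness of $\bS$ forces the $S_i$ to be pairwise non-equivalent: if $S_a\equiv_e S_b$ for $a\neq b$ then $S_a\ent S_b$ by reflexivity of $\ent$ and Corollary~\ref{crl:ent-equiv}, and $\bS\equv S_a\vee S_b\vee\bS''$ is then exactly the pattern a compact configuration must not have; the same holds for the $T_j$. If $\bS\equv\varepsilon$ (that is, $n=0$), then by the structural lemma below $\bT\entv\bS$ forces $\bT\equv\varepsilon$ as well and the claim is immediate, so assume $n,m\geq 1$.

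The core step is a structural reading of $\entv$: \emph{if $\bS\entv\bT$, then for every consistent flat state $S'$ with $\bS\equv S'\vee\bS''$ there is a consistent flat state $T'$ with $\bT\equv T'\vee\bT''$ and $S'\ent T'$.} I would prove this by induction over the generation of $\entv$ as the reflexive-transitive closure of the two axioms of Def.~\ref{def:c_entail}: under \emph{Weakening} $[\bar T]\entv[S\vee\bar T]$ every member state of the source persists in the target; under \emph{Redundance of Stronger States} $[S_1\vee S_2\vee\bar T]\entv[S_2\vee\bar T]$ the only state removed is $S_1$, and $S_1\ent S_2$ with $S_2$ still present; reflexivity is immediate, and transitivity composes, chaining intermediate $\ent$-steps by transitivity of $\ent$ (Def.~\ref{def:s_entail}). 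Applying this to the normal forms and to both directions of $\bS\sim\bT$, and using canonicity of normal forms to identify the witnessing $T',S'$ with members of the respective lists, yields: for every $i$ there is $j$ with $S_i\ent T_j$, and for every $j$ there is $i$ with $T_j\ent S_i$.

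Now I would chain and invoke compactness. Fix $i$; choose $j$ with $S_i\ent T_j$ and then $k$ with $T_j\ent S_k$, so $S_i\ent S_k$ by transitivity of $\ent$. If the positions $i$ and $k$ were distinct, then $\bS\equv S_i\vee S_k\vee\bS''$ would exhibit flat states $S_i\not\equiv S_\bot$ and $S_k$ with $S_i\ent S_k$, contradicting compactness of $\bS$; hence $i=k$, so $T_j\ent S_i$. Together with $S_i\ent T_j$ this is mutual entailment, hence $S_i\equiv_e T_j$ by Corollary~\ref{crl:ent-equiv}. Thus every $S_i$ is equivalent to some $T_j$, and that $T_j$ is unique, since two distinct $T_j\equiv_e T_{j'}$ would again contradict compactness of $\bT$; symmetrically every $T_j$ is equivalent to a unique $S_i$. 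These maps are mutually inverse, so $n=m$ and there is a permutation $\sigma$ with $S_i\equiv_e T_{\sigma(i)}$. Finally, $n$ applications of \emph{State Equivalence} rewrite $S_1\vee\ldots\vee S_n$ into $T_{\sigma(1)}\vee\ldots\vee T_{\sigma(n)}$, which by \emph{Associativity} and \emph{Commutativity} equals $T_1\vee\ldots\vee T_m$; hence $\bS\equv\bT$. I expect the main obstacle to be the structural lemma on $\entv$ in the third paragraph -- phrasing it so it is robust under the choice of flat representation and carrying it cleanly through the reflexive-transitive closure -- since everything afterwards is a matching argument that compactness makes rigid.
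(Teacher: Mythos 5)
Your proof is correct and takes essentially the same route as the paper's: normalize both configurations to disjunctions of flat consistent states, read off from Def.~\ref{def:c_entail} that each consistent member of the entailing configuration entails some member of the entailed one, chain $S_i\ent T_j\ent S_k$ and use compactness to collapse the chain into mutual entailment (hence $\equiv_e$ via Corollary~\ref{crl:ent-equiv}), then extract a bijection. The only difference is that you make the structural property of $\entv$ explicit via induction over its reflexive-transitive generation, which the paper simply asserts.
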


We furthermore introduce a well-behavedness property for CHR$^\vee$ programs
which guarantees compactness of derived configurations by assuring that
disjoint member states of a configuration have contradicting built-in states. It
appears that a large number of practical CHR$^\vee$ programs satisfy this property.

\begin{definition}[Analytic Program]
\label{def:analytic}
A CHR$^\vee$ program is called \emph{analytic} if for any flat state $S$
and configuration $\bT$ where $[S]\mapsto^{*}[\bT]$, we have that $\bT$ is compact.
\end{definition}

We give a sufficient (although not necessary) criterion for analyticness
of CHR$^\vee$ programs:

\begin{lemma}[Criterion for Analyticness]
\label{lemma:vee-analytic-criterion}

Let $\bbP$ be a CHR$^\vee$ program consisting of rules $R_1,\ldots,R_n$. Assume
that every rule $R_i$ is of the form $r\ @\ H_1\setminus H_2\Leftrightarrow G
\mid (\U_{1}\wedge\B_{1})\vee\ldots\vee(\U_m\wedge\B_m)$ such that
$CT\not\models\exists(B_i\wedge B_j)$ for every $\inintv{i,j}{n}$. Then $\bbP$
is analytic.

\begin{proof} We assume a single rule application $S\mapsto^r\bT$ where the
applied rule be of the form $R_i = r\ @\ H_1\setminus H_2\Leftrightarrow G \mid
(\U_1\wedge\B_1)\vee\ldots\vee(\U_m\wedge\B_m)$ such that
$CT\not\models\exists(B_i\wedge B_j)$ for $\inintv{i,j}{n}$.

It follows that for every
$T_1=\state{\U_1;\B_1;\V_1},T_2=\state{\U_2;\B_2;\V_2}$ such that $\bT\equiv
T_1\vee T_2\vee \bT'$, we have $CT\not\models\exists(\B_1\wedge \B_2)$. It
follows by Lemma~\ref{thm:se_crit} that $T_1\not\ent T_2$.

As the built-in store grows monotonically stronger, correctness for the
transitive closure of $\mapsto$ follows by induction. For the reflexive closure
it follows from the fact that the state $S$ is trivially a compact
configuration.
\end{proof}
\end{lemma}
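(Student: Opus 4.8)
The plan is to prove analyticness by induction on the length of a derivation $[S]\mapsto^{*}[\bT]$ from a flat state $S$, carrying an invariant slightly stronger than compactness: writing the fully split (DNF) representative of $\bT$ as $T_1\vee\ldots\vee T_k$, obtained by exhaustively applying $\textbf{Split}$ and deleting failed states, the built-in stores of any two distinct disjuncts are jointly unsatisfiable, i.e.\ $CT\not\models\exists(\B_{T_i}\wedge\B_{T_j})$ for $i\neq j$. The first step is to check that this invariant implies compactness. By Theorem~\ref{thm:se_crit}, if $T_i\ent T_j$ with $T_i$ consistent, then instantiating the entailment criterion at a $CT$-model witnessing the satisfiable store $\B_{T_i}$ and using that $CT$ is complete, $\B_{T_i}\wedge\B_{T_j}$ — and hence $\B_i\wedge\B_j$ — would be satisfiable relative to $CT$, contradicting the invariant; so no consistent disjunct entails another, and by Corollary~\ref{crl:ent-equiv} the consistent disjuncts are even pairwise non-$\equiv_e$, so the DNF representative carries each consistent disjunct with multiplicity one. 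Since the fully split representative is unique up to reordering, renaming of local variables, and $\equiv_e$ on individual disjuncts, any representation $\bS'\equv\bT$ of the form $S_1\vee S_2\vee\bS''$ with $S_1\not\equiv S_\bot$ must have $S_1,S_2$ equal, up to $\equiv_e$, to two \emph{distinct} DNF disjuncts $T_i,T_j$ — a consistent state never entails a failed one, so $S_2$ is consistent too — and then $S_1\ent S_2$ forces $T_i\ent T_j$, a contradiction. Hence $\bT$ is compact.

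It then remains to establish the invariant. In the base case $[S]\mapsto^{0}[S]$, the fully split form of $[S]$ has at most one consistent disjunct, so the invariant holds vacuously. For the inductive step, consider $[S]\mapsto^{*}[\bT]\mapsto^{r}[\bT']$ with the invariant holding for $\bT$; choose a fully split representative $\bT\equv U\vee\hat\bT$ in which the rule fires on the flat disjunct $U=\state{H_1\wedge H_2\wedge G\wedge\G;\V}$, so $\bT'\equv\state{H_1\wedge G\wedge B\wedge\G;\V}\vee\hat\bT$. With the body written as $B=(\U_1\wedge\B_1)\vee\ldots\vee(\U_m\wedge\B_m)$, distributing $\vee$ over $\wedge$ — legitimate since $H_1$, $G$, $\G$ are flat — yields the fully split representative $T_1\vee\ldots\vee T_m\vee\hat\bT$ of $\bT'$ with $T_i=\state{H_1\wedge G\wedge\U_i\wedge\B_i\wedge\G;\V}$, whose built-in store is the built-in store of $U$ strengthened by $\B_i$. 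Now there are three kinds of pairs of distinct disjuncts to check: two disjuncts of $\hat\bT$ already satisfied the invariant in $\bT$ and are untouched; two new disjuncts $T_i,T_j$ with $i\neq j$ have stores containing $\B_i$ respectively $\B_j$, and $CT\not\models\exists(\B_i\wedge\B_j)$ is exactly the hypothesis of the lemma; and a new disjunct $T_i$ paired with an old $W\in\hat\bT$ has $\B_{T_i}\rightarrow\B_U$ valid, so $CT\not\models\exists(\B_U\wedge\B_W)$ — which holds because $U,W$ are distinct disjuncts of $\bT$ — gives $CT\not\models\exists(\B_{T_i}\wedge\B_W)$. Discarded failed $T_i$ are irrelevant. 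This re-establishes the invariant for $\bT'$, closing the induction, and the reflexive case $\bT=[S]$ is covered by the base case together with the implication ``invariant $\Rightarrow$ compact''.

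The main obstacle I anticipate is not any individual entailment computation — each reduces directly to Theorem~\ref{thm:se_crit} and monotonicity of $\models$ under strengthening the built-in store — but the combinatorial bookkeeping around configuration equivalence. One must argue carefully that $\equv$, which permits reordering, per-state $\equiv_e$, deletion of failed states, and $\textbf{Split}$, but admits \emph{no} idempotence law, nonetheless possesses a canonical fully split representative unique up to the transformations listed above; only then can compactness — an \emph{a priori} statement quantifying over \emph{all} representations $\bS'\equv\bT$ — be decided on that canonical form, which is what makes the induction go through. A secondary subtlety is the interaction between ``consistency of a state'' and ``$CT\not\models\exists(\cdot)$'': propagating $CT\not\models\exists(\B_i\wedge\B_j)$ through conjunction with a consistent store and through the $\ent$-criterion relies on the standing assumption that $CT$ is a complete and decidable constraint theory, so that unprovability of a positive existential coincides with its refutability in some model.
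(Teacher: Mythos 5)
Your proof is correct and follows essentially the same route as the paper's: an induction on the derivation length maintaining the invariant that distinct flat disjuncts have pairwise $CT$-unsatisfiable built-in stores, using the rule-body hypothesis for pairs of newly created disjuncts and the monotonic growth of the built-in store for new--old pairs, and then deriving non-entailment (hence compactness) from Theorem~\ref{thm:se_crit}. The paper's own proof is only a terse sketch of this same induction; your version makes the invariant, the reduction of compactness to the invariant, and the three-way case analysis explicit.
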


\section{Application}
\label{sec:application}

In this section, we outline how our results can be applied to reason over
programs and their respective observables. We separate it into two broad
application domains: In Section~\ref{sec:app:observables}, we discuss the
relationship between the linear-logic semantics and program observables. In
Section~\ref{sec:app:comparison}, we show how we can compare the operational
semantics of programs by means of their linear-logic semantics.

\subsection{Reasoning About Observables}
\label{sec:app:observables}

In this section, we show how to apply our results to reason about observables
in both pure CHR and CHR$^\vee$. We will first discuss pure CHR in detail and
then show how the results are generalized to CHR$^\vee$.

\subsubsection{Reasoning About Observables in Pure CHR}

We define two sets of observables based on the linear logic
semantics, paralleling the observable sets of computable states and
data-sufficient answers.

\begin{definition}
Let $\bbP$ be a pure CHR program, $CT$ a constraint theory, and $S$
an initial state. Assuming that $\Sigma=\Sp\cup\Sct\cup\Seq$, we distinguish
two sets of observables based on the linear logic semantics:
\begin{align*}
\cL_{\bbP,CT}^C(S) &
	::= \{[T]\mid \bbP^L,CT^L,S^L\vdash_\Sigma T^L\} \\
\cL_{\bbP,CT}^S(S) &
	::= \{[\state{\top;\B;\V}]\mid \bbP^L,CT^L,S^L\vdash_\Sigma	\state{\top;\B;\V}^L\}
\end{align*}

If the constraint theory $CT$ is clear from
the context or not important, we write the sets as $\cL_{\bbP}^C(S),
\cL_{\bbP}^S(S)$.

\end{definition}

The following definition and property establish the relationship between the
logical observables $\cL_{\bbP}^C$ and $\cL_{\bbP}^S$ and the operational
observables $\cC_{\bbP}$ and $\bcS_{\bbP}$

\begin{definition}[Lower Closure of $\ent$]
For any set $\bbS$ of equivalence classes of CHR states,
\[
	\triangledown\bbS ::= \{[T]|\exists S\in\bbS.[S]\ent[T]\}
\]
\end{definition}

The following property follows directly from
Theorem~\ref{theorem:embed_soundness_completeness}:

\begin{property}[Relationship Between Observables]
For a pure CHR program $\bbP$, a constraint theory $CT$, and an initial
state $S$, we have:
\begin{align*}
	\cL_{\bbP,CT}^C(S) & = \triangledown\cC_{\bbP,CT}(S) \\
	\cL_{\bbP,CT}^S(S) & = \triangledown\cS_{\bbP,CT}(S)
\end{align*}
\end{property}

From this relationship follow several properties that we can use to reason about
the operational semantics. Firstly, in order to prove that a state $S$ cannot
develop into a failed state, it suffices to show that there exists any state $T$,
such that $[T]$ is not contained in $\cC(S)$:

\begin{property}[Exclusion of Failure]
\label{lemma:app-exclude-fail}
Under a program $\bbP$, a constraint theory $CT$, and a CHR state $S$ if there
exists a state $T$ such that $T\not\in\cL_{\bbP,CT}^C(S)$ then
$S_\bot\not\in\cC_{\bbP,CT}(S)$.
\end{property}

Secondly, we can guarantee data-sufficient answers for a state $S$, if we can
prove the empty resource 1 in linear logic:

\begin{property}[Assuring Data-Sufficient Answers]
\label{lemma:app-assure-ds}
\begin{longenum}
\item Under a program $\bbP$, a constraint theory $CT$, and a CHR state $S$, if
$\state{\top;\top;\emptyset}\in\cL_{\bbP,CT}^D(S)$ then $S$ has at least one
data-sufficient answer.
\item If $\bbP$ is furthermore confluent, $S$ has exactly one
data-sufficient answer.
\end{longenum}
\begin{proof}[sketch]
The first property follows from the fact that for any data-sufficient state
$\state{\top;\B;\V}$, we have
$\state{\top;\B;\V}\ent\state{\top;\top;\emptyset}$. The second property
follows from Prop.~\ref{prop:confluence-answers}.
\end{proof}
\end{property}

Finally, if a specific state does not follow in linear logic, it is guaranteed
not to follow in the operational semantics:

\begin{property}[Safety Properties]
\label{lemma:app-safety-properties}
For a program $\bbP$, a constraint theory $CT$, and any two CHR states $S,T$, if
$S'\not\in\cL_{\bbP,CT}^C(S)$ then $S'\not\in\cC_{\bbP,CT}(S)$.
\end{property}

\begin{example}
	\label{example::philosophers}
This example shows how to exploit the completeness of our semantics
to prove safety properties for CHR programs. By \emph{safety property}, we mean
a problem of non-existence of a derivation between two CHR states. The general
form of a safety property is $[T]\not\in\cC_\bbP(S)$.

We implement the $n$-Dining-Philosophers Problem for an arbitrary
number of philosphers and we show using the phase semantics that the program can
never reach a state in which any two philosophers directly neighboring each
other are eating at the same time.

We assume that $CT$
includes the constraint theory for natural numbers.
\[
  \begin{array}{lcl}
    fork(x)\wedge fork(y) & \Leftrightarrow & y=x+1 ~mod~ n \mid eat(x) \\
    eat(x) & \Leftrightarrow & y=x+1 ~mod~ n\mid fork(x)\wedge fork(y) \\
    putfork(0) & \Leftrightarrow & \top \\
    putfork(n) & \Leftrightarrow & n\geq 1 \mid n_1=n-1 \wedge fork(n_1) \wedge
    putfork(n_1)\\
  \end{array}
\]
We want to prove that two philosophers (among $n$ philosophers) which
are seated side by side cannot be eating at the same time. This can be
formalized by the following safety property (we naturally assume
there are at least two philosophers):
\[
  \forall n,i.
  	[\state{eat(i)\wedge eat(j); j=i+1\textrm{ mod }n;\emptyset}]\not\in
  \cL_\bbP^C(
  \state{putfork(n);\top;\emptyset})
\]
Showing that a certain state is not included in $\cL_\bbP^C(S)$, or -- more
generally -- that a certain linear-logic judgement is not valid is in general
not trivial. Having an automated theorem prover try all possible inference
rules exhaustively is an option. In \cite{Betz2008}, a method to
prove safety properties using the phase semantics of linear logic has been
proposed. At this point, it shall suffice to state that we can show:
\[
  \bbP^L, CT^L \not\vdash \exists n,i.(
  putfork(n) \lp eat(i)\x eat(j)\x!(j=i+1)
\]
This proves that no two philosophers seated side by side can be eating at the
same time.
\end{example}

\subsubsection{Generalization to CHR$^\vee$}

As for pure CHR, we define two sets of linear logic observables, paralleling
the sets of computable configurations and data-sufficient answer configurations.

\begin{definition}
Given a CHR$^\vee$ program $\bbP$, a constraint theory $CT$, and an initial
state $S$, we distinguish two sets of observables based on the linear logic
semantics:

\begin{tabular}{r @{\,$::=$} l}
	\multicolumn{2}{c}{}\\
	$\bcL^C_{\bbP,CT}(S)$ &
		$\{[\bT]\mid \bbP^L,CT^L,S^L\vdash \bT^L\}$\\
	$\bcL^S_{\bbP,CT}(S)$ &
		$\{[\state{\top;\B_1;\V_1}\vee\ldots\vee\state{\top;\B_n;\V_n}]\mid$ \\
	\multicolumn{1}{c}{} & \quad
		$\bbP^L,CT^L,S^L\vdash(\state{\top;\B_1;\V_1}\vee\ldots\vee\state{\top;\B_n;\V_n})^L\}$\\
	\multicolumn{2}{c}{}\\
\end{tabular}
\end{definition}

The relationship between the
logical observables and the operational
observables is parallel to pure CHR, though generalized to the
lower closure of configuration entailment

\begin{definition}[Lower Closure of $\entv$]
For any set $\bbS$ of equivalence classes of CHR states,
\[
	\blacktriangledown\bbS ::= \{[\bT]|\exists \bS\in\bbS.[\bS]\entv[\bT]\}
\]
\end{definition}

By Theorem~\ref{theorem:chrv_embed_soundness_completeness}, we then have:

\begin{property}[Relationship Between Observables]
For a CHR$^\vee$ program $\bbP$, a constraint theory $CT$, and an initial
state $S$, we have:
\begin{align*}
	\bcL_{\bbP,CT}^C(S) & = \blacktriangledown\bcC_{\bbP,CT}(S) \\
	\bcL_{\bbP,CT}^S(S) & = \blacktriangledown\bcS_{\bbP,CT}(S)
\end{align*}
\end{property}

Furthermore, each of Property~\ref{lemma:app-exclude-fail},
Property~\ref{lemma:app-assure-ds}, and
Property~\ref{lemma:app-safety-properties} have their obvious
counterparts in CHR$^\vee$.

\subsection{Comparison of Programs}
\label{sec:app:comparison}

In this section, we put special emphasis on the comparison of CHR and CHR$^\vee$
programs across programming paradigms. Hence, we will not treat pure CHR in an
isolated manner but as a subset of CHR$^\vee$. Note also that we use the
encoding rather than the axiomatic formulation of our semantics in this section.

We define three notions of operational equivalence, each one corresponding to
one set of observables as introduced in Section~\ref{sec:op-sem}.

\begin{definition}[Operational Equivalence]
\begin{longenum}
  \item Two CHR$^\vee$ programs $\bbP_1,\bbP_2$ are \emph{operationally
  $\cS$-equivalent} under a given constraint theory $CT$ if for any state $S$, we have
  $\bcS_{\bbP_1,CT}(S)=\bcS_{\bbP_2,CT}(S)$.
  \item Two CHR programs $\bbP_1,\bbP_2$ are \emph{operationally
  $\cA$-equivalent} under a given constraint theory $CT$ if for
  any state $S$, we have $\cA_{\bbP_1,CT}(S)=\cA_{\bbP_2,CT}(S)$.
  \item Two CHR$^\vee$ programs $\bbP_1,\bbP_2$ are \emph{operationally
  $\cC$-equivalent} under a given constraint theory $CT$ if for any state $S$, we have
  $\bcC_{\bbP_1,CT}(S)=\bcC_{\bbP_2,CT}(S)$.
\end{longenum}
\end{definition}

We will mainly focus on $\cC$-eqivalence and $\cS$-equivalence.
What we call $\cA$-equivalence has been researched extensively in the past
(cf. \citeN{DBLP:journals/constraints/AbdennadherFM99}). It shows in this
section that the linear-logic semantics is not adequate to reason about
$\cA$-equivalence.

\begin{definition}[Logical Equivalence of Programs] Two CHR programs
$\bbP_1,\bbP_2$ are called \emph{logically equivalent} under a given constraint
theory $CT$ if $CT^L\vdash\bigotimes\bbP^L_1 \lpl \bigotimes\bbP^L_2$, where the unary
operator $\bigotimes$ stands for element-wise multiplicative conjunction and
$\bigotimes\bbP^L_1 \lpl \bigotimes\bbP^L_2$ is shorthand for
$(\bigotimes\bbP^L_1 \lp \bigotimes\bbP^L_2) \& (\bigotimes\bbP^L_2 \lp
\bigotimes\bbP^L_1)$.
\end{definition}

The following proposition relates $\cC$- and $\cS$-equivalence.

\begin{proposition}\label{prop:op-equiv-hierarchy}
Operational $\cS$-equivalence is a necessary but not a
sufficient condition for $\cC$-equivalence.
\begin{proof}
To show that $\cS$-equivalence is a necessary condition, we assume two
$\cC$-equivalent programs $\bbP_1,\bbP_2$. For every state $S$, we have
$\cC_{\bbP_1}(S)=\cC_{\bbP_2}(S)$. As each $\cS_{\bbP_i}$ is the projection of
$\cC_{\bbP_i}(S)$ to configurations with empty user-defined stores, we also have
$\cS_{\bbP_1}(S)=\cS_{\bbP_2}(S)$.

To show that $\cS$-equivalence is not a sufficient condition, consider the
following two programs:
\begin{align*}
  \bbP_1 = \{ &\quad a(x)\Leftrightarrow b(x) &
  	\bbP_2 = \{ & \quad a(x)\Leftrightarrow x\doteq 0 \\
  & \quad b(x)\Leftrightarrow x\doteq 0 \quad \}&
  & \quad b(x)\Leftrightarrow x\doteq 0 \quad \}&
\end{align*}
Both programs ultimately map every $a(x)$ and $b(x)$ to $x\doteq 0$. Hence, they
are $\cS$-equivalent. For $S=\state{a(x);\emptyset}$ and
$T=\state{b(x);\emptyset}$ we have $[T]\in\bcC_{\bbP_1}(S)$ but
$[T]\not\in\bcC_{\bbP_2}(S)$. Hence, the programs are not $\cC$-equivalent.
\end{proof}
\end{proposition}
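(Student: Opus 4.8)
The plan is to establish the two halves of the statement independently. For the necessity direction I would use that the data-sufficient answers are obtained from the computable configurations by a program-independent projection: by Definition~\ref{def:observables-vee}, $\bcS_{\bbP,CT}(S)$ is exactly the set of those classes in $\bcC_{\bbP,CT}(S)$ that possess a representative of the form $\state{\top;\B_1;\V_1}\vee\ldots\vee\state{\top;\B_n;\V_n}$, i.e. a disjunction of states each with empty user-defined store. Since this selection refers only to the configuration and not to the program, $\cC$-equivalence of $\bbP_1$ and $\bbP_2$ -- which says $\bcC_{\bbP_1,CT}(S)=\bcC_{\bbP_2,CT}(S)$ for every state $S$ -- immediately yields $\bcS_{\bbP_1,CT}(S)=\bcS_{\bbP_2,CT}(S)$, that is, $\cS$-equivalence.

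For the insufficiency direction I would give a counterexample consisting of two small (pure CHR, hence CHR$^\vee$) programs:
\[
  \bbP_1=\{\,a(x)\Leftrightarrow b(x),\ \ b(x)\Leftrightarrow x\doteq 0\,\},\qquad
  \bbP_2=\{\,a(x)\Leftrightarrow x\doteq 0,\ \ b(x)\Leftrightarrow x\doteq 0\,\}.
\]
Both programs terminate, and since in each the two rules have non-overlapping heads they are confluent, so every flat initial state has a unique answer (Property~\ref{prop:confluence-answers}). In either program every $a(t)$ and every $b(t)$ atom in a store is eventually rewritten to $t\doteq 0$ and absorbed, so the unique answer of any initial state $S$ under $\bbP_1$ coincides with the one under $\bbP_2$; in particular $\bcS_{\bbP_1,CT}(S)=\bcS_{\bbP_2,CT}(S)$ for all $S$, i.e. $\bbP_1$ and $\bbP_2$ are $\cS$-equivalent. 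They are not $\cC$-equivalent, however: for $S=\state{a(x);\emptyset}$ the first rule of $\bbP_1$ gives $S\mapsto\state{b(x);\emptyset}$, so $[\state{b(x);\emptyset}]\in\bcC_{\bbP_1,CT}(S)$, whereas under $\bbP_2$ the only rule applicable to $S$ rewrites $a(x)$ directly to $x\doteq 0$, so no store reachable from $S$ ever contains the atom $b(\cdot)$; by Lemma~\ref{lem:se_derived}.\ref{prop:se_lequiv} a state containing a $b$-atom is not equivalent to one that does not (for distinct user-defined symbols and a minimal $CT$), whence $[\state{b(x);\emptyset}]\notin\bcC_{\bbP_2,CT}(S)$.

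The step I expect to require the most care is the $\cS$-equivalence claim for the counterexample, because it must hold for \emph{all} initial states and not just for $\state{a(x);\emptyset}$: the clean way to discharge it is to reduce to flat states by pushing disjunctions out first (harmless, as neither program has a disjunctive rule body), invoke termination and confluence of both programs, and then run a short induction on the multiset of $a$- and $b$-atoms in the store showing that the unique answer is always obtained by replacing every $a$- or $b$-atom by its built-in form $t\doteq 0$. The projection argument for necessity and the distinguishing computation for insufficiency are then routine.
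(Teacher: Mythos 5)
Your proposal is correct and follows essentially the same route as the paper: the necessity half is the same program-independent projection argument from $\bcC$ to $\bcS$, and the insufficiency half uses exactly the same pair of counterexample programs with the same distinguishing state $\state{a(x);\emptyset}$. The extra care you take in justifying $\cS$-equivalence of the counterexample (termination, confluence, and the non-reachability of a $b$-atom under $\bbP_2$) only makes explicit what the paper leaves implicit.
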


We can show that operational $\cC$-equivalence implies logical equivalence of
programs:

\begin{proposition}\label{prop:ll-equiv-impl-confl}Let $\bbP_1, \bbP_2$ be two
$\cC$-equivalent $CHR^\vee$ programs under $CT$. Then
$CT^L\vdash\bigotimes\bbP_1\lpl\bigotimes\bbP_2$.
\begin{proof}
Since $\bbP_1$ and $\bbP_2$ are $\cC$-equivalent, we have that
$\bcC_{\bbP_1}(S)=\bcC_{\bbP_2}(S)$ for all $S$.
For every rule $R=(r\ @\ H_1\setminus
H_2\Leftrightarrow G\mid B)\in\bbP_2$, we have by Def.~\ref{def:vee-tr-system}:
$[\state{H_1\wedge B\wedge
G;\bx}]\in\bcC_{\bbP_2}(\state{H_1\wedge H_2\wedge G;\bx})$
where
$\bx=vars(H_1\wedge H_2\wedge G)$ and then by our hypothesis $[\state{H_1\wedge
B\wedge G;\bx}]\in\bcC_{\bbP_1}(\state{H_1\wedge H_2\wedge G;\bx})$.
Therefore, we get $CT^L\vdash\bigotimes\bbP_1^L\lp R^L$. Applying this to all
rules $R\in\bbP_2$, we show $CT^L\vdash\bigotimes\bbP_1^L\lp\bigotimes\bbP_2^L$.
Analogously, we get $CT^L\vdash\bigotimes\bbP_2^L\lp\bigotimes\bbP_1^L$.
\end{proof}
\end{proposition}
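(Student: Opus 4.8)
The plan is to prove the two linear implications $CT^L\vdash\bigotimes\bbP_1^L\lp\bigotimes\bbP_2^L$ and $CT^L\vdash\bigotimes\bbP_2^L\lp\bigotimes\bbP_1^L$ separately and then combine them by $(R\&)$, which gives $CT^L\vdash(\bigotimes\bbP_1^L\lp\bigotimes\bbP_2^L)\with(\bigotimes\bbP_2^L\lp\bigotimes\bbP_1^L)=\bigotimes\bbP_1^L\lpl\bigotimes\bbP_2^L$. Since the hypothesis $\bcC_{\bbP_1,CT}(S)=\bcC_{\bbP_2,CT}(S)$ (for every state $S$) is symmetric in $\bbP_1$ and $\bbP_2$, I only need one of the two, say $CT^L\vdash\bigotimes\bbP_1^L\lp\bigotimes\bbP_2^L$. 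By $(R\lp)$ followed by repeated $(L\otimes)$, this reduces to the sequent $CT^L,\bbP_1^L\vdash\bigotimes\bbP_2^L$, where I write $\bbP_1^L$ for the multiset of the banged rule translations $R^L$, $R\in\bbP_1$.

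The core step is to establish $CT^L,\bbP_1^L\vdash R^L$ for an arbitrary rule $R=(r\ @\ H_1\setminus H_2\Leftrightarrow G\mid B)\in\bbP_2$. Let $\bx=vars(H_1\wedge H_2\wedge G)$ and consider the states $S=\state{H_1\wedge H_2\wedge G;\bx}$ and $T=\state{H_1\wedge B\wedge G;\bx}$. Provided the guard is satisfiable, i.e.\ $CT\models\exists G$, Definition~\ref{def:vee-tr-system} furnishes the transition $[S]\mapsto^r[T]$ in $\bbP_2$, so $[T]\in\bcC_{\bbP_2,CT}(S)$; the $\cC$-equivalence of $\bbP_1$ and $\bbP_2$ then gives $[T]\in\bcC_{\bbP_1,CT}(S)$, i.e.\ $[S]\mapsto^{*}[T]$ in $\bbP_1$. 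Since $\ent$ is reflexive, Theorem~\ref{theorem:chrv_embed_soundness_completeness} applied to $\bbP_1$ (with $T$ itself as the witness configuration, as $T\ent T$) yields $\bbP_1^L,CT^L\vdash\forall(S^L\lp T^L)$. Here $S^L=H_1^L\x H_2^L\x G^L$, and since the local variables $\by_r$ of the rule do not occur in $H_1$, $T^L=H_1^L\x\exists\by_r.(B^L\x G^L)$, so $\forall(S^L\lp T^L)$ differs from $R^L$ only by the leading $!$. As every formula of $CT^L\cup\bbP_1^L$ has the form $!\varphi$, a single application of $(R!)$ prepends that bang and produces $CT^L,\bbP_1^L\vdash R^L$. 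The degenerate case $CT\not\models\exists G$ is handled on its own: completeness of $CT$ then yields $CT\models\neg\exists G$, whose translation renders $G^L$ absurd under $CT^L$, so $R^L$ is trivially provable.

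With this in hand I would assemble the goal. Writing $\bbP_2=\{R_1,\dots,R_m\}$, we have $\bigotimes\bbP_2^L=R_1^L\x\cdots\x R_m^L$ and a proof of $CT^L,\bbP_1^L\vdash R_i^L$ for each $i$. Because every formula in the context $CT^L,\bbP_1^L$ is banged, $(Contraction)$ supplies as many copies of that context as needed, and $m-1$ applications of $(R\otimes)$ merge the $m$ judgements into $CT^L,\bbP_1^L\vdash\bigotimes\bbP_2^L$; folding $\bbP_1^L$ back by $(L\otimes)$ and applying $(R\lp)$ finishes this direction, the symmetric argument finishes the other, and $(R\&)$ then delivers the claim.

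The step I expect to be the main obstacle is the variable bookkeeping inside the core step: one must verify that the universal closure produced by Theorem~\ref{theorem:chrv_embed_soundness_completeness} ranges exactly over $vars(H_1,H_2,G)$ while the fresh body variables $\by_r$ remain captured by the existential inside $T^L$ -- which rests on choosing a rule variant whose $\by_r$ is disjoint from $vars(H_1,H_2,G)$ and on pushing $\exists\by_r$ past $H_1^L$ -- and that the guard-satisfiability side condition of Definition~\ref{def:vee-tr-system} is either met or disposed of as above. Everything else -- $(R\&)$, $(R\lp)$, $(R!)$ over banged contexts, and contraction-driven $(R\otimes)$ -- is routine proof-theoretic manipulation.
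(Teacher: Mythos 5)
Your proposal is correct and follows essentially the same route as the paper: use $\cC$-equivalence to transfer the one-step computation $[\state{H_1\wedge H_2\wedge G;\bx}]\mapsto[\state{H_1\wedge B\wedge G;\bx}]$ from $\bbP_2$ to $\bbP_1$, invoke the soundness/completeness theorem of the encoding semantics to obtain $CT^L,\bbP_1^L\vdash R^L$ for each $R\in\bbP_2$, assemble with contraction and $(R\otimes)$, and conclude symmetrically via $(R\&)$. You are in fact more careful than the paper's own proof, which silently assumes the guard-satisfiability side condition $CT\models\exists G$ of the transition rule and omits the proof-theoretic bookkeeping ($(R!)$ over the banged context, pushing $\exists\by_r$ past $H_1^L$) that you spell out.
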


The reverse direction does not hold in general as the following example shows:

\begin{example}
Let the constraint theory $CT$ contain at least the theory of natural numbers.
Compare the following two programs:
\begin{align*}
  \bbP_1 = \{ & \quad c(x)\Leftrightarrow x\geq 1 \quad \}&
  	\bbP_2 = \{ & \quad c(x)\Leftrightarrow \top \\
  & & & \quad c(x)\Leftrightarrow x\geq 1 \quad \}
\end{align*}
The greater-or-equal constraint $\geq$ is a built-in constraint. Hence, it is
translated as $(x\geq 1)^L = !(x\geq 1)$. As $!(x\geq 1)\vdash 1$, we have
$\bigotimes\bbP^L_1\dashv\vdash_\Sigma\bigotimes\bbP^L_2$.
We observe that $\bcS_{\bbP_1}(\state{c(x);x})=\{\state{x\geq 1;x}\}$ and
$\bcS_{\bbP_2}(\state{c(x);x})=\{\state{x\geq 1;x},\state{\top;x}\}$. As the
sets are not equal, $\bbP_1$ and $\bbP_2$ are not operationally $\cS$-equivalent
and hence, by Prop.~\ref{prop:op-equiv-hierarchy}, not $\cC$-equivalent.
\end{example}

However, if we restrict ourselves to analytic, confluent
programs, we can show that logical equivqalence of programs implies operational
$\cS$-equivalence:

\begin{proposition}\label{prop:confl-impl-ll-equiv} Let $\bbP_1, \bbP_2$ be two
analytic confluent $CHR^\vee$ programs such that
$CT^L\vdash\bigotimes\bbP^L_1\lpl\bigotimes\bbP^L_2$.
Then $\bbP_1, \bbP_2$ are $\cS$-equivalent.
\begin{proof}
As both $\bbP_1$ and $\bbP_2$ are confluent, we have
$|\bcS_{\bbP_i,CT}(S)|\in\{0,1\}$ for any state $S$ and $i\in\{1,2\}$,
where $\mid\cdot\mid$ denotes cardinality. If $|\bcS_{\bbP_i,CT}(S)|=0$
then $|\blacktriangledown\bcS_{\bbP_i,CT}(S)|=0$. Otherwise,
$|\blacktriangledown\bcS_{\bbP_i,CT}|\geq 1$. In the former case, our
proposition is trivially true since $\bcS_{\bbP_i,CT}=\emptyset$. In the
following, we assume $|\bcS_{\bbP_i,CT}|=1$.

Logical equivalence implies that $\bcL^C_{\bbP_1,CT}(S)=\bcL^C_{\bbP_2,CT}(S)$
for all $S$. Since $\bcL^S$ is the projection of $\bcL^C$ to configurations with
empty user-defined stores, we also have
$\bcL^S_{\bbP_1,CT}(S)=\bcL^S_{\bbP_2,CT}(S)$ and hence
$\triangledown\bcS_{\bbP_1,CT}(S)=\triangledown\bcS_{\bbP_2,CT}(S)$.

Since $|\bcS_{\bbP_i,CT}(S)|=1$ for $i\in\{1,2\}$, each lower
closure $\triangledown\bcS_{\bbP_i,CT}(S)$ has a maximum
$[\bar M_i]\in\triangledown\bcS_{\bbP_i,CT}(S)$ such that $\forall
[\bS]\in\triangledown\bcS_{\bbP_i,CT}(S).[\bar M_i]\ent [\bS]$ and
$\bcS_{\bbP_i,CT}(S)=\{[M_i]\}$. As
$\blacktriangledown\bcS_{\bbP_1,CT}(S)=\blacktriangledown\bcS_{\bbP_2,CT}(S)$,
we have $\bar M_1\sim\bar M_2$. As both programs are analytic, we furthermore
have that $\bar M_1,\bar M_2$ are compact. Hence, we have $\bar M_1\equv\bar
M_2$ and therefore: $\bcS_{\bbP_1,CT}(S)=\bcS_{\bbP_2,CT}(S)$.
\end{proof}
\end{proposition}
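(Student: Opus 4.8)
The plan is to push the hypothesised logical equivalence of $\bbP_1$ and $\bbP_2$ down through the encoding semantics to the level of observables, and then to use confluence and analyticness to turn the resulting coincidence of lower closures into a genuine equality of data-sufficient answer sets. The first step I would carry out is to show that $CT^L\vdash\bigotimes\bbP^L_1\lpl\bigotimes\bbP^L_2$ forces $\bcL^C_{\bbP_1,CT}(S)=\bcL^C_{\bbP_2,CT}(S)$ for every state $S$. Indeed, if $\bbP_1^L,CT^L,S^L\vdash \bT^L$, then, since the multiset $\bbP_1^L$ is recoverable from the single formula $\bigotimes\bbP_1^L$ by the left rule for $\otimes$, a cut against $CT^L\vdash\bigotimes\bbP^L_2\lp\bigotimes\bbP^L_1$ (one of the two conjuncts of $\lpl$) replaces $\bbP_1^L$ by $\bbP_2^L$ together with one more copy of $CT^L$, which contraction on the banged subformulas of $CT^L$ absorbs; hence $\bbP_2^L,CT^L,S^L\vdash \bT^L$, and the converse is symmetric. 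Restricting to configurations whose member states have empty user-defined store then yields $\bcL^S_{\bbP_1,CT}(S)=\bcL^S_{\bbP_2,CT}(S)$, and by the relationship between logical and operational observables for CHR$^\vee$ --- a consequence of Theorem~\ref{theorem:chrv_embed_soundness_completeness} --- this is exactly $\blacktriangledown\bcS_{\bbP_1,CT}(S)=\blacktriangledown\bcS_{\bbP_2,CT}(S)$.

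The second step is to recover $\bcS_{\bbP_1,CT}(S)=\bcS_{\bbP_2,CT}(S)$ from the equality of these lower $\entv$-closures. I would first use confluence of both programs, via the CHR$^\vee$ analogue of Property~\ref{prop:confluence-answers} (valid because a data-sufficient answer configuration admits no further transition), to obtain $|\bcS_{\bbP_i,CT}(S)|\le 1$ for $i\in\{1,2\}$. If both sets are empty there is nothing to prove, and the case in which one is empty and the other a singleton is excluded, since then one lower closure would be empty and the other not. So assume $\bcS_{\bbP_i,CT}(S)=\{[\bar M_i]\}$. Then $[\bar M_i]$ is the $\entv$-greatest element of $\blacktriangledown\bcS_{\bbP_i,CT}(S)$, so equality of the two lower closures gives $\bar M_1\entv\bar M_2$ and $\bar M_2\entv\bar M_1$, i.e. congruence $\bar M_1\sim\bar M_2$.

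Finally, taking $S$ flat (as in the definition of analyticness), the analyticness of $\bbP_1$ and $\bbP_2$ guarantees that $\bar M_1$ and $\bar M_2$ are compact; being moreover congruent, they are then equivalent, $\bar M_1\equv\bar M_2$, by the compactness lemma. Hence $[\bar M_1]=[\bar M_2]$ and $\bcS_{\bbP_1,CT}(S)=\bcS_{\bbP_2,CT}(S)$. The first step is routine bookkeeping with cut and contraction; the main obstacle is this last passage, since equality of lower $\entv$-closures is strictly weaker than equality of the answer sets themselves. It is only the combination of the hypotheses that closes the gap: confluence makes each answer set a singleton, so that its lower closure carries a distinguished maximal representative, while analyticness together with the compactness lemma is exactly what upgrades the congruence $\bar M_1\sim\bar M_2$ extracted from the logic to the strict state-equivalence $\bar M_1\equv\bar M_2$ that $\cS$-equivalence requires.
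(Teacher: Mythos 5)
Your proposal follows essentially the same route as the paper's proof: logical equivalence of the programs yields $\bcL^C_{\bbP_1,CT}(S)=\bcL^C_{\bbP_2,CT}(S)$, hence equality of the lower $\entv$-closures of the data-sufficient answer sets; confluence reduces each answer set to at most a singleton whose class is the maximum of its lower closure, giving congruence $\bar M_1\sim\bar M_2$; and analyticness plus the compactness lemma upgrades this to $\bar M_1\equv\bar M_2$. The only differences are that you spell out the cut-and-contraction bookkeeping for the first step and are slightly more explicit about the empty/non-empty case split, neither of which changes the argument.
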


The following example shows that logical equivalence does not imply
operational $\cA$-equivalence:

\begin{example}
We consider the program $\bbP = \{c(x)\Leftrightarrow c(x)\}$ and the empty
program $\bbP_\emptyset=\emptyset$:

As the logical reading $\bbP^L=!\forall(c(x)\lp c(x))$ of $\bbP$ is a logical
tautology, it follows that $\bbP^L\dashv\vdash_\Sigma \bbP_\emptyset^L$ for any
$\Sigma$. Yet, for $S=\state{c(x);\top;\emptyset}$, we have
$\cA_\bbP(S)=\emptyset$ whereas $\cA_{\bbP_\emptyset}(S)=[S]$.
Therefore $\cA_{\bbP}(S)\neq\cA_{\bbP_\emptyset}(S)$. \end{example}

The following final example shows how we can apply the linear-logic
semantics to compare programs across programming paradigms.

\begin{example}
\label{example:append}
We begin with the following
classic Prolog program which implements a ternary \emph{append} predicate for
lists, where the third argument is the concatenation of the first two: \[
  \begin{array}{lcl}
    append(x,y,z) & \leftarrow &
            x\de[~] \wedge y\de z \\
    append(x,y,z) & \leftarrow &
            x\de\left[h | l_1\right] \wedge
            z\de\left[h | l_2\right]
             \wedge ~append(l_1,y,l_2)\\
  \end{array}
\]
We can embed this program into CHR$^\vee$ by explicitly stating the
don't-know non-determinism using the $\vee$ operator.
\begin{align*}
\bbP_1= \{& & append(x,y,z) \Leftrightarrow~ &
    	( {x\de[~]}\wedge {y\de z} ) \vee \\
 & & & (x\de \left[h | l_1\right]\wedge
      z\de\left[h | l_2\right]\wedge append(l_1,y,l_2) )\quad\}
\end{align*}
The linear-logic reading of the embedded program looks as follows:
  \begin{align*}
    \bbP_1^L =
    \{\:
    ! \forall x,y,z.( append(x,y,z)\lp &
          ~\exists l_1,l_2,h\\
	&
       ( !x\de[~] \x !y\de z) \oplus \\
    &
	(
		!x\de[h|l_1] \x !z\de[h|l_2] \x
        append(l_1,y,l_3))
    )
    \:\}
\end{align*}
Secondly, we write a program to implement the \emph{append} predicate the way it
would be expected in CHR:
\begin{align*}
\bbP_2= \{& & append([~],y,z) \Leftrightarrow~ &
    	{y\de z} \\
 & & append([h | l_1],y,z) \Leftrightarrow~ &
 		z\de[h | l_2]\wedge append(l_1,y,l_2) \quad \}
\end{align*}
The two programs are not \emph{per se} $\cS$-equivalent. Consider their
behaviour in case the first argument of $append$ is bound to anything else than a list. For
$S_0=\state{append(3,x,y);\emptyset}$, we have $\bcS_{\bbP_1}(S_0)=\{ S_\bot \}$
but  $\bcS_{\bbP_2}(S_0)=\emptyset$.

Now let us assume that the first argument is always bound to a list. We can model
this by the following formula: \[ \varphi=\forall(append(x,y,z)\lp
append(x,y,z)\otimes (!x\doteq[~] \,\oplus\, \exists h,l.!x\doteq[h|l])) \] It shows that
$CT^L,\varphi\vdash\bigotimes\bbP_1\lpl
\bigotimes\bbP_2$. Hence, under the
assumption that the first argument is always bound to a (non-empty or empty)
list, the two programs are operationally $\cS$-equivalent.

Moreover, we observe that $\varphi$ is equivalent to the logical reading of the
CHR$^\vee$ rule $R_\varphi$: \[ R_\varphi = ( r\ @\ append(x,y,z) \Leftrightarrow
append(x,y,z) \wedge (x\doteq[~] \vee x\doteq[h|l]) ) \] Moreover
$CT^L,\varphi\vdash\bigotimes\bbP_1\lpl\bigotimes\bbP_2$ implies that
$CT^L\vdash(\bigotimes\bbP_1\otimes\varphi)\lpl(\bigotimes\bbP_2\otimes\varphi)$
Hence, the programs $\bbP'_1=\bbP_1\cup R_\varphi$ and $\bbP'_2=\bbP_2\cup
R_\varphi$ are operationally $\cS$-equivalent (without any further assumptions).
\end{example}

\section{Related Work}
\label{sec:related}

From its advent in the 1980ies, linear logic has been studied in relationship
with programming languages.

Common linear logic programming languages such as
LO\cite{DBLP:conf/oopsla/AndreoliP90}, Lolli\cite{DBLP:conf/lics/HodasM91},
LinLog\cite{Andreoli92logicprogramming}, and
Lygon\cite{DBLP:conf/amast/HarlandPW96} rely on generalizations of
backward-chaining backtracking resolution of horn clauses.

The earliest approach at defining a linear-logic semantics for a
committed-choice programming language that we are aware of has been proposed in
\cite{DBLP:conf/kgc/Zlatuska93}. The corresponding language is indeed a fragment
of pure CHR without multiple heads and with substantial restrictions on the use
of built-in constraints.

The linear-logic programming language LolliMon, proposed in
\cite{DBLP:conf/ppdp/LopezPPW05}, integrates backward-chaining proof search with
committed-choice forward reasoning. It is an extension of the aforementioned
language Lolli. The sequent calculus underlying Lolli extended by a set of
dedicated inference rules. The corresponding connectives are syntactically
detached from Lolli's own connectives and operationally they are processed within
a monad. The actual committed-choice behaviour comes by the explicit statement in
the operational semantics, that these inference are to be applied in a
committed-choice manner during proof search. With respect to Lolli, committed
comes thus comes at the cost of giving up the general notion of execution as
proof search, although it is retained outside the monad.

The class LCC of linear logic concurrent constraint programming languages
\cite{DBLP:journals/iandc/FagesRS01} has a close relationship with CHR, although
the former is based on agents whereas the latter is based on rules. Similar to
CHR, LCC languages are non-deterministic and execution is committed-choice. The
linear logic semantics of LCC is similar to our linear logic semantics for pure
CHR and, as far as the two are comparable, it features similar results for
soundness and completeness. Unlike CHR$^\vee$ however, LCC has no notion of
disjunction.

Furthermore, Fages et al. have proposed the so-called
\emph{frontier semantics}\cite{DBLP:journals/iandc/FagesRS01} for LCC, in which
the committed-choice operator is interpreted analogously to the
disjunction operator $\vee$ in CHR$^\vee$. In the linear-logic interpretation of
the frontier semantics, it is correspondingly mapped to the multiplicative
disjunction $\&$. However, the frontier semantics does not constitute a distinct
programming language but is viewed as a tool to reason about properties of LCC
programs. Hence, committed choice never co-exists with disjunction as in the
linear logic semantics for CHR$^\vee$. Rather, the two are viewed as different
interpretations of the same connective for different purposes.

More recently, Simmons et al. proposed the linear logic-based committed-choice
programming language \emph{Linear Logical Algorithms}
\cite{DBLP:conf/icalp/SimmonsP08}. While the language itself corresponds to a
segment of pure CHR, the aim of the work is to define a cost semantics for
algorithms that feature non-deteministic choices.

\section{Conclusion}
\label{sec:conclusion}

In this article, we have presented a detailed analysis of the relationship
between both pure CHR and CHR$^\vee$ with intuitionistic linear logic and we have
shown its applications from reasoning about programs observables to deciding
operational equivalence of multi-paradigm CHR$^\vee$ programs.

Our first main contribution is the linear-logic semantics for the segment of pure
CHR. It encodes both CHR programs and constraint theories to proper axioms of the
sequent calculus. We have shown that equivalence of CHR states coincides with
logical equivalence of the logical readings of state. Furthermore, we have
introduced the notion of state entailment, which precisely characterizes the
discrepance between the transition relation between states in CHR and judgements
between their corresponding logical readings. It is a key notion for the study
and the application of our semantics.

Our second main contribution is the definition of a linear-logic semantics for
CHR$^\vee$. This semantics maps the dualism between don't-care and don't-know
non-determinism in CHR$^\vee$ to the dualism of internal and external choice in
linear logic. Analogously to pure CHR, we have defined a notion of configuration
entailment to characterize the discrepance between state transition and logical
judgement.

We have shown that the linear-logic semantics for CHR$^\vee$ has somewhat less
desirable properties than the one for pure CHR. Concretely, mutual configuration
entailment does not coincide with configuration equivalence. This makes
linear-logic based reasoning over CHR$^\vee$ in general more imprecise. However,
we have presented a well-behavedness property for CHR$^\vee$ -- analyticness --
that amends this limitation.

As our third main contribution, we have shown how to apply our results to reason
about CHR and CHR$^\vee$ programs. We have defined sets of linear-logic based
observables that correspond with the usual program observables of computable
state and data-sufficient answer by means of state entailment or confguration
entailment, respectively. We have presented criteria to prove various program
properties, foremost safety properties, which consist in the non-computability of
a specific state from a certain initial state. Furthermore, we have given a
criterion to prove operational equivalence with respect to data-sufficient
answers for multi-paradigm programs.

As a further contribution, we have for the first time defined an equivalence
relation over configurations and shown its compliance with rule application.
Based on this relation, we have defined an elegant formalization of the
operational semantics of CHR$^\vee$ based on equivalence classes of
configurations. The equivalence-based semantics provides a language to express
properties of programs such as operational equivalence across the boundaries of
programming paradigms.

Our results entail a wide range of possible future work. An obvious line of
future work lies in the application of established methods for automated proof
search in linear logic to reason about CHR and CHR$^\vee$ programs. As
significant effort has been put in the current result on amending the discrepance
between linear judgement and the semantics of CHR, it furthermore suggests itself
to investigate whether a ``purer'' formalism to reason about CHR could be
extracted from linear logic that avoids these discrepances.

\begin{acks} We are grateful to the reviewers of an earlier version of this paper
for their helpful remarks. Hariolf Betz has been funded by the University of Ulm
with LGFG grant \#0518.
\end{acks}

\bibliographystyle{acmtrans}
\bibliography{linlogsem}

\begin{thebibliography}{}

\bibitem[\protect\citeauthoryear{Abdennadher}{Abdennadher}{1997}]{DBLP:conf/cp%
/Abdennadher97}
{\sc Abdennadher, S.} 1997.
\newblock Operational semantics and confluence of constraint propagation rules.
\newblock In {\em CP}, {G.~Smolka}, Ed. Lecture Notes in Computer Science, vol.
  1330. Springer, 252--266.

\bibitem[\protect\citeauthoryear{Abdennadher, Fruhwirth, and Meuss}{Abdennadher
  et~al\mbox{.}}{1996}]{Abdennadher96onconfluence}
{\sc Abdennadher, S.}, {\sc Fruhwirth, T.}, {\sc and} {\sc Meuss, H.} 1996.
\newblock On confluence of {C}onstraint {H}andling {R}ules.
\newblock In {\em CP'96, LNCS 1118}. Springer-Verlag, 1--15.

\bibitem[\protect\citeauthoryear{Abdennadher, Fr{\"u}hwirth, and
  Meuss}{Abdennadher
  et~al\mbox{.}}{1999}]{DBLP:journals/constraints/AbdennadherFM99}
{\sc Abdennadher, S.}, {\sc Fr{\"u}hwirth, T.~W.}, {\sc and} {\sc Meuss, H.}
  1999.
\newblock Confluence and semantics of constraint simplification rules.
\newblock {\em Constraints\/}~{\em 4,\/}~2, 133--165.

\bibitem[\protect\citeauthoryear{Abdennadher and Sch{\"u}tz}{Abdennadher and
  Sch{\"u}tz}{1998}]{DBLP:conf/fqas/AbdennadherS98}
{\sc Abdennadher, S.} {\sc and} {\sc Sch{\"u}tz, H.} 1998.
\newblock {CHR}v: A flexible query language.
\newblock In {\em FQAS}, {T.~Andreasen}, {H.~Christiansen}, {and} {H.~L.
  Larsen}, Eds. Lecture Notes in Computer Science, vol. 1495. Springer, 1--14.

\bibitem[\protect\citeauthoryear{Andreoli}{Andreoli}{1992}]{Andreoli92logicpro%
gramming}
{\sc Andreoli, J.-M.} 1992.
\newblock Logic programming with focusing proofs in linear logic.
\newblock {\em Journal of Logic and Computation\/}~{\em 2}, 297--347.

\bibitem[\protect\citeauthoryear{Andreoli and Pareschi}{Andreoli and
  Pareschi}{1990}]{DBLP:conf/oopsla/AndreoliP90}
{\sc Andreoli, J.-M.} {\sc and} {\sc Pareschi, R.} 1990.
\newblock Lo and behold! {C}oncurrent structured processes.
\newblock In {\em OOPSLA/ECOOP}. 44--56.

\bibitem[\protect\citeauthoryear{Betz}{Betz}{2007}]{Betz07}
{\sc Betz, H.} 2007.
\newblock A linear logic semantics for {C}onstraint {H}andling {R}ules with
  {D}isjunction.
\newblock In {\em Proceedings of the 4th Workshop on Constraint Handling
  Rules}. 17--31.

\bibitem[\protect\citeauthoryear{Betz and Fr{\"u}hwirth}{Betz and
  Fr{\"u}hwirth}{2005}]{DBLP:conf/cp/BetzF05}
{\sc Betz, H.} {\sc and} {\sc Fr{\"u}hwirth, T.~W.} 2005.
\newblock A linear-logic semantics for {C}onstraint {H}andling {R}ules.
\newblock In {\em CP}, {P.~van Beek}, Ed. Lecture Notes in Computer Science,
  vol. 3709. Springer, 137--151.

\bibitem[\protect\citeauthoryear{Betz, Raiser, and Fr{\"u}hwirth}{Betz
  et~al\mbox{.}}{2010}]{DBLP:journals/tplp/BetzRF10}
{\sc Betz, H.}, {\sc Raiser, F.}, {\sc and} {\sc Fr{\"u}hwirth, T.~W.} 2010.
\newblock A complete and terminating execution model for constraint handling
  rules.
\newblock {\em TPLP\/}~{\em 10,\/}~4-6, 597--610.

\bibitem[\protect\citeauthoryear{Duck, Stuckey, de~la Banda, and Holzbaur}{Duck
  et~al\mbox{.}}{2004}]{DBLP:conf/iclp/DuckSBH04}
{\sc Duck, G.~J.}, {\sc Stuckey, P.~J.}, {\sc de~la Banda, M. J.~G.}, {\sc and}
  {\sc Holzbaur, C.} 2004.
\newblock The refined operational semantics of {C}onstraint {H}andling {R}ules.
\newblock In {\em ICLP}, {B.~Demoen} {and} {V.~Lifschitz}, Eds. Lecture Notes
  in Computer Science, vol. 3132. Springer, 90--104.

\bibitem[\protect\citeauthoryear{Fages, Ruet, and Soliman}{Fages
  et~al\mbox{.}}{2001}]{DBLP:journals/iandc/FagesRS01}
{\sc Fages, F.}, {\sc Ruet, P.}, {\sc and} {\sc Soliman, S.} 2001.
\newblock Linear concurrent constraint programming: Operational and phase
  semantics.
\newblock {\em Inf. Comput.\/}~{\em 165,\/}~1, 14--41.

\bibitem[\protect\citeauthoryear{Fr{\"u}hwirth}{Fr{\"u}hwirth}{2009}]{fruehwir%
th09}
{\sc Fr{\"u}hwirth, T.} 2009.
\newblock {\em {C}onstraint {H}andling {R}ules}.
\newblock Cambridge University Press.

\bibitem[\protect\citeauthoryear{Fr{\"u}hwirth and Abdennadher}{Fr{\"u}hwirth
  and Abdennadher}{2003}]{Fruhwirth03}
{\sc Fr{\"u}hwirth, T.} {\sc and} {\sc Abdennadher, S.} 2003.
\newblock {\em Essentials of Constraint Programming}.
\newblock Springer-Verlag New York, Inc., Secaucus, NJ, USA.

\bibitem[\protect\citeauthoryear{Fr{\"u}hwirth}{Fr{\"u}hwirth}{1994}]{DBLP:jou%
rnals/lncs/Fruhwirth94}
{\sc Fr{\"u}hwirth, T.~W.} 1994.
\newblock {C}onstraint {H}andling {R}ules.
\newblock In {\em Constraint Programming}, {A.~Podelski}, Ed. Lecture Notes in
  Computer Science, vol. 910. Springer, 90--107.

\bibitem[\protect\citeauthoryear{Fr{\"u}hwirth}{Fr{\"u}hwirth}{1998}]{DBLP:jou%
rnals/jlp/Fruhwirth98}
{\sc Fr{\"u}hwirth, T.~W.} 1998.
\newblock Theory and practice of {C}onstraint {H}andling {R}ules.
\newblock {\em J. Log. Program.\/}~{\em 37,\/}~1-3, 95--138.

\bibitem[\protect\citeauthoryear{Fr{\"u}hwirth, Pierro, and
  Wiklicky}{Fr{\"u}hwirth
  et~al\mbox{.}}{2002}]{DBLP:journals/entcs/FruhwirthPW02}
{\sc Fr{\"u}hwirth, T.~W.}, {\sc Pierro, A.~D.}, {\sc and} {\sc Wiklicky, H.}
  2002.
\newblock Probabilistic {C}onstraint {H}andling {R}ules.
\newblock {\em Electr. Notes Theor. Comput. Sci.\/}~{\em 76}.

\bibitem[\protect\citeauthoryear{Girard}{Girard}{1987}]{DBLP:journals/tcs/Gira%
rd87}
{\sc Girard, J.-Y.} 1987.
\newblock Linear logic.
\newblock {\em Theor. Comput. Sci.\/}~{\em 50}, 1--102.

\bibitem[\protect\citeauthoryear{Haemmerl\'e and Betz}{Haemmerl\'e and
  Betz}{2008}]{Betz2008}
{\sc Haemmerl\'e, R.} {\sc and} {\sc Betz, H.} 2008.
\newblock Verification of {C}onstraint {H}andling {R}ules using linear logic
  phase semantics.
\newblock In {\em Proceedings of the 5th Workshop on Constraint Handling Rules:
  CHR 2008}.

\bibitem[\protect\citeauthoryear{Harland, Pym, and Winikoff}{Harland
  et~al\mbox{.}}{1996}]{DBLP:conf/amast/HarlandPW96}
{\sc Harland, J.}, {\sc Pym, D.~J.}, {\sc and} {\sc Winikoff, M.} 1996.
\newblock Programming in lygon: An overview.
\newblock In {\em AMAST}, {M.~Wirsing} {and} {M.~Nivat}, Eds. Lecture Notes in
  Computer Science, vol. 1101. Springer, 391--405.

\bibitem[\protect\citeauthoryear{Hodas and Miller}{Hodas and
  Miller}{1991}]{DBLP:conf/lics/HodasM91}
{\sc Hodas, J.~S.} {\sc and} {\sc Miller, D.} 1991.
\newblock Logic programming in a fragment of intuitionistic linear logic.
\newblock In {\em LICS}. IEEE Computer Society, 32--42.

\bibitem[\protect\citeauthoryear{L{\'o}pez, Pfenning, Polakow, and
  Watkins}{L{\'o}pez et~al\mbox{.}}{2005}]{DBLP:conf/ppdp/LopezPPW05}
{\sc L{\'o}pez, P.}, {\sc Pfenning, F.}, {\sc Polakow, J.}, {\sc and} {\sc
  Watkins, K.} 2005.
\newblock Monadic concurrent linear logic programming.
\newblock In {\em PPDP}, {P.~Barahona} {and} {A.~P. Felty}, Eds. ACM, 35--46.

\bibitem[\protect\citeauthoryear{Miller}{Miller}{1992}]{DBLP:conf/elp/Miller92}
{\sc Miller, D.} 1992.
\newblock The pi-calculus as a theory in linear logic: Preliminary results.
\newblock In {\em ELP}, {E.~Lamma} {and} {P.~Mello}, Eds. Lecture Notes in
  Computer Science, vol. 660. Springer, 242--264.

\bibitem[\protect\citeauthoryear{Negri}{Negri}{1995}]{DBLP:journals/mscs/Negri%
95}
{\sc Negri, S.} 1995.
\newblock Semantical observations on the embedding of intuitionistic logic into
  intuitionistic linear logic.
\newblock {\em Mathematical Structures in Computer Science\/}~{\em 5,\/}~1,
  41--68.

\bibitem[\protect\citeauthoryear{Raiser, Betz, and Fr{\"u}hwirth}{Raiser
  et~al\mbox{.}}{2009}]{Raiser2009a}
{\sc Raiser, F.}, {\sc Betz, H.}, {\sc and} {\sc Fr{\"u}hwirth, T.} 2009.
\newblock Equivalence of {CHR} states revisited.
\newblock In {\em 6th International Workshop on Constraint Handling Rules
  (CHR)}, {F.~Raiser} {and} {J.~Sneyers}, Eds. 34--48.

\bibitem[\protect\citeauthoryear{Simmons and Pfenning}{Simmons and
  Pfenning}{2008}]{DBLP:conf/icalp/SimmonsP08}
{\sc Simmons, R.~J.} {\sc and} {\sc Pfenning, F.} 2008.
\newblock Linear logical algorithms.
\newblock In {\em ICALP (2)}, {L.~Aceto}, {I.~Damg{\aa}rd}, {L.~A. Goldberg},
  {M.~M. Halld{\'o}rsson}, {A.~Ing{\'o}lfsd{\'o}ttir}, {and} {I.~Walukiewicz},
  Eds. Lecture Notes in Computer Science, vol. 5126. Springer, 336--347.

\bibitem[\protect\citeauthoryear{Sneyers, Schrijvers, and Demoen}{Sneyers
  et~al\mbox{.}}{2005}]{Sneyers05thecomputational}
{\sc Sneyers, J.}, {\sc Schrijvers, T.}, {\sc and} {\sc Demoen, B.} 2005.
\newblock The computational power and complexity of {C}onstraint {H}andling
  {R}ules.
\newblock In {\em In Second Workshop on Constraint Handling Rules, at ICLP05}.
  3--17.

\bibitem[\protect\citeauthoryear{Zlatuska}{Zlatuska}{1993}]{DBLP:conf/kgc/Zlat%
uska93}
{\sc Zlatuska, J.} 1993.
\newblock Committed-choice concurrent logic programming in linear logic.
\newblock In {\em Kurt G{\"o}del Colloquium}, {G.~Gottlob}, {A.~Leitsch}, {and}
  {D.~Mundici}, Eds. Lecture Notes in Computer Science, vol. 713. Springer,
  337--348.

\end{thebibliography}

\end{document}